\documentclass[DIV12]{scrartcl}
\usepackage{cmap}
\usepackage[T1]{fontenc}
\usepackage[utf8]{inputenc}
\usepackage{lmodern}
\usepackage{microtype}
\usepackage{cellspace}
\usepackage{xspace}
\usepackage{amsmath,amssymb,amsthm}
\usepackage{url}
\usepackage{paralist}
\usepackage{booktabs}
\newtheorem{theorem}{Theorem}[section]

\newtheorem{proposition}[theorem]{Proposition}

\newtheorem{lemma}[theorem]{Lemma}

\newtheorem{claim}[theorem]{Claim}

\newtheorem{defn}[theorem]{Definition}
\newtheorem{expl}[theorem]{Example}
\newenvironment{definition}{\begin{defn}\upshape}{\hfill\qed\end{defn}}
\newenvironment{example}{\begin{expl}\upshape}{\hfill\qed\end{expl}}

\newcommand{\pspace}{\complexity{PSPACE}}
\newcommand{\expspace}{\complexity{PSPACE}}
\newcommand{\twoexp}{\complexity{2EXPTIME}}
\newcommand{\threeexp}{\complexity{3EXPTIME}}
\newcommand{\twonexp}{\complexity{2NEXPTIME}}
\newcommand{\cotwonexp}{\complexity{co2NEXPTIME}}

\newcommand{\empint}{\kw{EmpOffAcc}}
\newcommand{\manint}{\kw{EmpManAcc}}

\newcommand{\offint}{\kw{OfficeInfoAcc}}
\newcommand{\appint}{\kw{StateApprAcc}}

\newcommand{\kw}[1]{{\mathsf{#1}}\xspace}
\newcommand{\sch}{\kw{Sch}}
\newcommand{\tables}{\kw{Tables}}
\newcommand{\att}{\kw{ATT}}
\newcommand{\dom}{\kw{Dom}}
\newcommand{\adom}{\kw{Adom}}

\newcommand{\acs}{\kw{ACS}}

\newcommand{\anacm}{\kw{AcM}}
\newcommand{\sof}{\kw{Rel}}
\newcommand{\inat}{\kw{InputAtt}}
\newcommand{\adb}{\kw{I}}
\newcommand{\aninst}{\kw{I}}

\newcommand{\abind}{\kw{Bind}}

\newcommand{\acfg}{\kw{Conf}}

\newcommand{\aresp}{\kw{Resp}}

\newcommand{\complexity}[1]{\textsf{\mdseries\upshape #1}}

\newcommand{\set}[1]{\{\,#1\,\}}

\renewcommand{\phi}{\varphi}
\renewcommand{\epsilon}{\varepsilon}
\renewcommand{\leq}{\leqslant}

\newcommand{\np}{\complexity{NP}}
\newcommand{\conp}{\complexity{coNP}}

\newcommand{\sigmatp}{\complexity{$\Sigma_2^{\text{P}}$}}
\newcommand{\pitp}{\complexity{$\Pi_2^{\text{P}}$}}
\newcommand{\nexptime}{\complexity{NEXPTIME}}
\newcommand{\conexptime}{\complexity{coNEXPTIME}}
\newcommand{\acz}{\complexity{$\text{AC}^0$}}

\hyphenation{coNP}
\hyphenation{NEXP-TIME}
\hyphenation{co-NEXP-TIME}

\newcommand{\lti}{\kw{LTR}}

\newcommand{\ii}{\kw{IR}}

\renewcommand{\succ}{\kw{Succ}}

\newcommand{\nextcell}{\relation{NextCell}}
\newcommand{\correct}{\relation{Correct}}
\newcommand{\nexveraddrbit}{\relation{NextVAddrBit}}
\newcommand{\nexhoraddrbit}{\relation{NextHAddrBit}}
\newcommand{\sameaddressas}{\relation{SameAddressAs}}
\newcommand{\notsucc}{\relation{NotSucc}}
\newcommand{\color}{\relation{Color}}
\newcommand{\belowbit}{\relation{BelowBit}}
\newcommand{\iszero}{\relation{IsZero}}
\newcommand{\isfirstzero}{\relation{IsFirstZero}}
\newcommand{\isone}{\relation{IsOne}}
\newcommand{\notsamebit}{\relation{NotSameBit}}
\newcommand{\isbit}{\relation{IsBit}}
\newcommand{\sucbit}{\relation{SuccessorBit}}

\renewcommand{\vec}[1]{\mathbf{#1}}

\newcommand{\bb}{{\vec b}}
\newcommand{\cc}{{\vec c}}
\newcommand{\dd}{{\vec d}}
\newcommand{\ee}{{\vec e}}
\newcommand{\fff}{{\vec f}}
\renewcommand{\ggg}{{\vec g}}
\newcommand{\vv}{{\vec v}}
\newcommand{\hh}{{\vec h}}

\let\relation=\mathit

\newcommand*{\tile}{\relation{Tile}}
\newcommand*{\tiletype}{\relation{TileType}}
\newcommand*{\sametile}{\relation{SameTile}}
\newcommand*{\rboolean}{\relation{Bool}}
\newcommand*{\horiz}{\relation{Horiz}}
\newcommand*{\vertic}{\relation{Vert}}
\newcommand*{\rand}{\relation{And}}
\newcommand*{\ror}{\relation{Or}}
\newcommand*{\eq}{\relation{Eq}}
\newcommand*{\req}{\eq}
\newcommand*{\SUCC}{\relation{SUCC}}
\newcommand*{\SUB}{\relation{SUB}}
\newcommand*{\BOOLCONS}{\relation{BOOLCONS}}

\newcommand{\chain}{\kw{Chain}}

\usepackage{hyperref}

\title{Determining Relevance of Accesses at Runtime\\(Extended Version)\thanks{This paper is
an extended version of the conference
article~\cite{conference_version}.}}
\author{
Michael Benedikt\\
       {Computing Laboratory}\\
                {Oxford University}\\
       {\makebox{Oxford OX1 3QD, UK}}\\
       {\normalsize\sf\makebox{michael.benedikt@comlab.ox.ac.uk}}
\and Georg Gottlob\\
       {Computing Laboratory}\\
       {\& Oxford Man Institute}\\
                {Oxford University}\\
       {\normalsize\sf\makebox{georg.gottlob@comlab.ox.ac.uk}}
\and Pierre Senellart\\
       {\makebox{Institut Télécom; Télécom
ParisTech}}\\
       {CNRS LTCI, 46 rue Barrault}\\
       {75634 Paris, France}\\
       {\normalsize\sf\makebox{pierre.senellart@telecom-paristech.fr}}
}
\date{}

\begin{document}

\maketitle

\begin{abstract}
Consider the situation where a query is to be answered
using Web sources that restrict the accesses that can be made 
on backend relational data by requiring some attributes to be given as
input of the service.
The accesses provide lookups on the collection
of attributes values that match the binding. 
They can differ in whether or not they require arguments to 
be generated from prior accesses. 
Prior work has focused on the question of whether a query can be answered
using a set of data sources, and in developing static access plans (e.g.,
Datalog programs) that implement query answering.
We are interested in dynamic aspects of the query answering problem: given 
partial information about the data, which accesses could provide  relevant
data for answering a given query?
We consider immediate and long-term
notions of  ``relevant accesses'', and ascertain the complexity
of query relevance, for both conjunctive queries and arbitrary
positive queries.
In the process, we relate dynamic relevance of an
access to query containment under access limitations and characterize the
complexity of this problem; we produce several complexity results about
containment that are of interest by themselves.
\end{abstract}

\section{Introduction} \label{sec:intro}

\paragraph*{Relevance under access limitations}

A large part of the information on the World Wide Web is not available
through the \emph{surface Web}, the set of Web pages reachable by
following hyperlinks, but lies in the \emph{deep Web} (or \emph{hidden
Web}), that provides
entry points to databases accessible via HTML forms or Web services.
Hundreds of thousands of such deep-Web sources exist~\cite{he2007accessing}. Even when the information is available on
the surface Web, it can be more effective to access it through a
(possibly elaborate) Web form query. Each source of the deep Web has one
or several interfaces that limit the kind of accesses that can be
performed, e.g., some fields of the forms have to be filled in before
submission.

A number of works, e.g.~\cite{rajaraman1995answering,duschka1997recursive},
have dealt with the problems of answering queries using
views in the presence of such access restrictions
but the focus is usually on obtaining a static query plan (e.g., 
a rewriting of an original conjunctive query, or a Datalog program).
We consider a dynamic approach to query answering and study the
following problem: given some existing
knowledge about the data, knowledge that is bound to evolve as we access
sources, is making this particular access relevant to a query? In other
words, can this particular access give, immediately or after some other
accesses, some knowledge that will yield an answer to the query?

Let us consider the following example.
A user wishes to get  information about
the loan capabilities of a large bank.
A relation schema for this can be visualized as:
\begin{verbatim}
Employee(EmpId, Title, LastName, FirstName, OffId)
Office(OffId, StreetAddress, State, Phone)
Approval(State, Offering)
Manager(EmpId, EmpId)
\end{verbatim}

\texttt{Employee} stores information about employees, including their
title and office. \texttt{Office} stores information about offices,
including the state in which they are located. \texttt{Approval} tells which
kinds of loans a bank is approved to make in each state, while
\texttt{Manager} stores which employee manages which other employee.

Data from a number of distinct Web data sources (or distinct query
interfaces from the same source) can be used to answer the query:
\begin{compactitem}
\item a form $\empint$ where an \texttt{EmpId}  can be entered,
which returns office records for that employee;
\item a form $\manint$ where an \texttt{EmpId} can be entered, and the identifiers of their managers
are returned;
\item a form $\offint$ that allows one to enter an \texttt{OffId},
and returns all
the office information;
\item a form $\appint$ that allows one to enter a state, and returns the approval information
for that state.
\end{compactitem}

A user wishes to know
if there is
some  loan officer
for  their bank
located in Illinois,
and also whether the company is authorized to perform 30-year mortgages in Illinois.
This can be phrased as a Boolean query $Q$, expressible in SQL as follows:

\begin{verbatim}
SELECT DISTINCT 1
FROM Employee E, Office O, Approval A
WHERE E.Title='loan officer' AND E.OffId=O.OffId
  AND O.State='Illinois'     AND A.State='Illinois'
  AND A.Offering='30yr'
\end{verbatim}

A federated query engine that tries to
answer this query will require
querying the
distinct interfaces with concrete values.  At a certain stage
the query engine may know about a particular set of employee names, and
also know certain facts -- either via querying or via an additional
knowledge base. Which
interfaces should it use to answer the query?
In particular: Is an access to the $\manint$ form with \texttt{EmpId}
``12345'' useful for answering $Q$? There are actually a number of
subtleties in this question, that we discuss now.

\emph{The relevance of an access depends on the existing knowledge
base.}
At the beginning of the process, when no other information is known
about the data, the access might be useful to get some other
\texttt{EmpId} which may in turn be used in the $\empint$ interface to
find an Illinoisan loan officer.
But if we already know  that the company has a loan officer
located in Illinois, then clearly such an access is unnecessary.
We call this existing knowledge base the \emph{configuration} in which
the access is made.

\emph{The relevance depends on  how closely linked the Web forms are.}
Clearly the  interface is irrelevant to the query if the query engine
is free to enter \texttt{EmpId} values ``at random'' into the $\empint$ interface.
But if such values are widely dispersed and there is no way to guess them,
an efficient tactic might be to take \texttt{EmpId}'s that we know about, query
for their managers using the $\manint$ interface, and then use the resulting
offices in the $\offint$ interface.
In this work we will thus distinguish between accesses that require
a value that is already in the knowledge base of the engine
(\emph{dependent} accesses), from those
that allow a ``free guess''. Note that in the case of {\it static} query
answering plans, the notion of a ``free access'' trivializes the questions.

\emph{The relevance depends on whether one is interested in immediate or
long-term impact.}
Without any initial knowledge, there is no way an access to $\empint$ may
directly provide a witness for the query. On the other hand, as discussed
above, the result to an access may be used in some cases
to gain some information that will ultimately satisfy the query. In this
work, we consider both \emph{immediate relevance} and \emph{long-term
relevance} of a particular access.

\bigskip

\paragraph*{Main questions studied}

In this article, we are interested in the following problems:
\begin{compactenum}[(i)]
\item How to define a model for querying under access restrictions that
takes into account the history of accesses?

\item What is the complexity of relevance?

\item Calì and Martinenghi have studied in~\cite{cali2008conjunctive}
the complexity of \emph{containment under access constraints},
motivated by query optimization. How does relevance relate to containment?
Are these notions at all related, and if so, 
can the respective decision problems be transformed into one
another?

\item What is the complexity of containment under access constraints?

\item If problems are hard, can we identify the source of this complexity?
\end{compactenum}

One particular reason why these problems are challenging is that they do
not deal with a concrete database, but a virtual database of which we have a
partial view, a view that evolves as we access it. The notion of
relevance of accesses has not been investigated in the literature; the
closest work, on containment under access
constraints~\cite{cali2008conjunctive}, only provides an upper bound of
\conexptime, for
a restricted query language (conjunctive queries, with only limited
use of constants). Determining a lower
bound for containment was left as an open problem. Hardness results are
difficult to obtain, because the access model that we present is quite
simple and does not offer obvious clues of how to encode known hard
problems to get lower bounds.

\paragraph*{Results}
We emphasize the following contributions of
our work.

\begin{table*}
\caption{Summary of combined complexity results}
\label{tab:complexity-results}
\hspace*{-2em}
\begin{tabular}{lScScSc}
\toprule
&Immediate relevance & Long-term relevance & Containment\\
&&(Boolean access)\\
\midrule
Independent accesses (CQs)&
\complexity{DP}-complete&\sigmatp-complete& \pitp-complete\\
Independent accesses (PQs)&
\complexity{DP}-complete&\sigmatp-complete&\pitp-complete\\
Dependent accesses (CQs)&
\complexity{DP}-complete&\nexptime-complete&\conexptime-complete\\
Dependent accesses (PQs)&
\complexity{DP}-complete&\twonexp-complete&\cotwonexp-complete\\
\bottomrule
\end{tabular}
\end{table*}

We provide the first formal definition 
of dynamic relevance of accesses for a query $Q$, using a simple and
powerful model, answering thus item~(i).

We give a combined complexity characterization of the
relevance problem in all combinations of cases
(immediate or long-term relevance, independent or dependent
accesses, conjunctive or positive queries), inside the polynomial
and exponential hierarchy of complexity classes; for long-term relevance,
we mostly focus on accesses without any input, extension to arbitrary
accesses is left for future work. This gives a
satisfactory answer to question~(ii).
For several of our hardness  results,
we invented sophisticated coding techniques to enforce database accesses
to produce grids that would then allow us to encode tiling problems. One
particular  hurdle to overcome was the limited ``coding power'' of
conjunctive queries. We therefore had to use and extend techniques for
encoding disjunctions into a conjunctive query.

We exhibit reductions in both directions between dynamic relevance and
containment under access constraints.
By these results, we succeed in elucidating the
relationship between containment and long term relevance, thus providing
an exhaustive answer to item~(iii).

We generalize the \conexptime{} upper bound to a
stronger notion of containment, and provide a matching lower bound,
solving thus item~(iv).
    This \conexptime{} upper bound for containment, and the associated
\nexptime{} upper bound for 
    relevance are rather surprising and not at all obvious. In fact, the
more immediate upper bounds, that we show for positive queries, are  
    \cotwonexp{} and  \twonexp, respectively. 

We highlight specific cases of interest
where the complexity of relevance
is lower, e.g., conjunctive queries with a single occurrence of a
relation, or conjunctive queries with small arity.
We also show that all problems are polynomial-time in data complexity
(for the independent case, \acz),
suggesting the feasibility of the relevance analysis. These two points
together bring a first answer to item~(v).

A summary of complexity results is shown in
Table~\ref{tab:complexity-results}.

\paragraph*{Organization}
We start with formal definitions of the problem and the
various degrees of relevance of a query to an access in
Section~\ref{sec:prelim}. We next establish
(Section~\ref{sec:containment}) the connection between
relevance and the topic of containment that was studied
in~\cite{cali2008conjunctive}. In Section~\ref{sec:indep}, we study the
case of independent accesses (accesses that do not require the input
value having been generated by a previous access). Here
the access patterns play quite a small role, but relevance
is still a non-trivial notion --  the issues
revolve around reasoning about a very restricted form of query containment.
In Section~\ref{sec:dep} we turn to dependent accesses, where the notion
of containment is of primary interest.
We extend techniques of \cite{cali2008conjunctive} to isolate the complexity of containment
under access patterns
for both conjunctive queries and positive queries; in the process
we give the complexity of relevance for both these classes.
We then present some particular tractable cases, when relations are
assumed to have small arity, in Section~\ref{sec:psp}.
Related work is
discussed in Section~\ref{sec:related}.

\section{Preliminaries} \label{sec:prelim}
 
We use bold face (e.g., $\vec a$)
to denote sets of
attributes or tuples of constants.

\paragraph*{Modeling data sources} We have a schema $\sch$ consisting of
a set of  relations $\tables(\sch)=\{S_1 \ldots S_n\}$, each $S_i$ having
a set of attributes $\att(S_i)$.
Following~\cite{li2001answering,cali2008conjunctive}, we assume each attribute $a_{ij}$ of relation
$S_i$ has an \emph{abstract domain} $\dom(a_{ij})$ chosen in some
countable
set of abstract domains.
Two attributes may share the same
domain and different domains may overlap. In the dependent case, domains
are used to constrain some input values to come from constants of the
appropriate type.

Given a source instance $\adb$ for $\sch$, a {\it configuration for $\adb$},  (with respect to $\sch$,
when not understood from context)
is a subset $\acfg$ of $\adb$, that is, for each 
$S_i$, a subset $\acfg(S_i)$ of the tuples in $\adb(S_i)$ (the content of
relation $S_i$ in $\adb$).
By a \emph{configuration} we mean any $\acfg$ that is a configuration for some instance $\adb$.
We then say that a configuration $\acfg$ is \emph{consistent} with $\adb$ if
$\acfg\subseteq\adb$.
Note that a configuration will generally be consistent with many
instances (in particular, the empty configuration is consistent with all
instances).

We have a set of \emph{access methods} $\acs=\set{\anacm_1 \ldots
\anacm_m}$ with each $\anacm_i$ consisting of a source relation
$\sof(\anacm_i)$ and a set $\inat(\anacm_i)$ of input attributes from the
set of attributes of $\sof(\anacm_i)$;
implicitly, each access method allows one to put
in a tuple of values for $\inat(\anacm_i)$ and get as a result a set of
matching tuples. If a relation does not have any access methods, no new
facts can be learned about this relation:
its content is fixed as that of the initial configuration.

Access methods are of two different varieties, based on the values that can
be entered into them
An access method may
be either \emph{dependent} or \emph{independent}. In a dependent access,
one can only use as an input bindings values
that have appeared in the configuration in the appropriate domain.
An independent access
can make use of any value. 

A combination of an access method and a binding to the input places
of the accessed relation will be referred to as an {\it access}.
We will often write an access by adding ``$?$'' to the non-input places, omitting
the exact method:
e.g. $R(3,?)$ is an access (via some method) to $R$ with the first place
bound to $3$. If $R$ does not have any output attributes, we say that it
is a \emph{Boolean} access, and we write for instance $R(3)?$ for an access
that checks whether $3 \in R$. If $R$ does not have any input attributes,
we say that it is a \emph{free} access.
We do not assume access methods to be \emph{exact}, i.e., to return all
tuples that are compatible with the binding. They are only assumed to be
\emph{sound}, i.e., they can return any sound subset of the data, and
possibly a different subset on each use.

Given a set of attributes $\vec a$ of a relation $S_i$,
 a  database instance  $\adb$,
and a binding $\abind$ of
each attribute in $\vec a$ to a value from $\dom(\vec a)$,
we let $\adb(\abind,S_i)$ to
be the set of tuples in $\adb$ whose projection onto $\vec a$ agrees with $\abind$.
For a configuration $\acfg$, its  \emph{active domain} 
$\adom(\acfg)=\set{(c,\mathcal C)}$ is the set of constants
that appear in a $\acfg(S_i)$ for some $i$, together with their abstract
domains:
for instance,
if $(c,d)\in\acfg(S)$ and $\dom(\att(S))=(\mathcal C,\mathcal D)$, both
$(c,\mathcal C)$ and
$(d,\mathcal D)$ belong to $\adom(\acfg)$.

Given a configuration $\acfg$, a \emph{well-formed access}
consists of an access method $\anacm$ 
and an assignment 
$\abind$ of values 
to the  attributes of $\inat(\anacm)$
such that  either a)~$\anacm$ is independent;
or b)~$\anacm$ is dependent and all values in
$\abind$, together with corresponding domains of the input attributes, are
in $\adom(\acfg)$.
A well-formed access $(\anacm, \abind)$ at configuration $\acfg$ on instance $\adb$
leads, possibly non-deterministically,  to any new configuration $\acfg'$ in which:

\begin{compactenum}[(i)]
\item 
\(
\acfg(\sof(\anacm)) \subseteq \acfg'(\sof(\anacm))\);
\item \(\acfg'(\sof(\anacm) 
\subseteq \acfg(\sof(\anacm))
\cup \adb(\abind,\sof(\anacm))
\);
\item $\acfg(S_i) = \acfg'(S_i)$ for all $S_i \neq \sof(\anacm)$.
\end{compactenum}
That is, the tuples seen in $S_i=\sof(\anacm)$ can increase by adding some tuples consistent
with the access, the access $(\anacm, \abind)$ is now completed 
and every other access stays the same in terms of completion. Note that the new
configuration is still consistent with the instance.

In general, there can be many successor configurations. We sometimes
write $\acfg+(\anacm,\abind, \aresp)$ 
to denote
an arbitrary such ``response configuration''.

A configuration $\acfg'$ is \emph{reachable} from another configuration
$\acfg$ (w.r.t.\ an instance)
if there is some sequence of well-formed accesses that can lead from
$\acfg$ to $\acfg'$.

\paragraph*{Queries}
We will consider conjunctive queries (CQs), i.e., conjunctions of atomic
facts, and positive existential queries, or just positive queries (PQs)
for short,
i.e., first-order formulas without universal quantifiers or negation.
PQs have the inconvenient of being
\emph{unsafe}~\cite{AbiteboulHV95} query languages; however, as discussed
at the end of this section, we focus on Boolean queries in this work,
for which the problem does not occur.
We recall some basic facts about the complexity of these languages:
query evaluation over CQs or PQs is
\np-complete in combined complexity
(membership in \np{} holds for any existentially quantified
first-order query, \np-hardness is a classical
result~\cite{chandra1977optimal}), while the data complexity of
evaluating an arbitrary first-order query is {\acz}
~\cite{AbiteboulHV95}.
On the
other hand, the
query containment problem is \np-complete for
CQs~\cite{chandra1977optimal}, but it is
\pitp-complete for PQs~\cite{SagivY80}. We require that variables
shared across subgoals of a query are consistent with domain
restrictions: if the same variables $x$ occur in attribute $a$ of $R$
and attribute $a'$ of $R'$ then $\dom(a)=\dom(a')$. The \emph{output
domain} of a query $Q$ is the tuple of domains of the output variables of
the query.
We also assume that all constants appearing in the query are present in the
configuration; in this way, constants from the query can be used in
dependent accesses.

The fundamental question we ask in this work is:
given a configuration $\acfg$, which well-formed accesses
for that configuration can contribute to answering the query $Q$?

\paragraph*{Immediate relevance}

We begin with analyzing whether a given access can have immediate impact on a
query -- whether the result of the access
can impact the information we have about a query output.

We recall the notion of  certain answers, which capture the notion of ``information'' precisely.
Given a configuration $\acfg$ and a tuple $\vec t$ of constants from
$\acfg$ with the same domain as the output domain of a query $Q$, we
say that 
$\vec t$ is a \emph{certain answer for $Q$ at $\acfg$} if for every
instance~$\adb$ consistent with $\acfg$ 
we have  $\vec t \in Q(\adb)$. If the query $Q$ is Boolean (i.e., with no
free variables), we say that it is certain (or simply true)
in a configuration 
$\acfg$ if for every instance~$\aninst$ consistent with $\acfg$, $Q(\aninst)$ is true.

We now consider the impact of a new well-formed access $(\anacm, \abind)$
on source $S$ in
a configuration $\acfg$. The result of this is some new set of tuples
$\aresp$ for $S$.

Let 
$\acfg+ (\anacm, \abind, \aresp)$ be a response configuration for
the access $(\anacm, \abind)$. We say the configuration (or even
the response
$\aresp$, seen as a collection of tuples) is
an \emph{increasing response} for $Q$ to $(\anacm,\abind)$ in $\acfg$ if
there exists a tuple $\vec{t}$ such that
$\vec{t}$ is not a certain answer for $Q$ at $\acfg$ while $\vec{t}$ is a
certain answer for $Q$ at $\acfg + (\anacm, \abind, \aresp)$.

An access $(\anacm,\abind)$ is \emph{immediately relevant} for the query
$Q$ ($\ii$
 in short) in a configuration
$\acfg$ if there is
some increasing response to the access.

\paragraph*{Long-term impact}
We formalize the
notion of  an access that can \emph{eventually} yield information.

Given an  access $(\anacm,\abind)$, a \emph{path} from $(\anacm,\abind)$ starting
from configuration $\acfg$ on database instance $\aninst$ is a sequence
of configurations and accesses 
\[\acfg_1, (\anacm_1, \abind_1), \ldots, (\anacm_{n-1},\abind_{n-1}), \acfg_n
\]
where $\acfg_1=\acfg$,
$(\anacm_1, \abind_1)=(\anacm,\abind)$, and $\acfg_{i+1}$ is a successor
configuration for
access $(\anacm_i,\abind_i)$ on $\acfg_i$.

By ``the certain answers to $Q$ after $p$'' we
mean the certain answers  to $Q$ on
$\acfg_n$, where $p$ terminates in configuration $\acfg_n$.

Given a path $p$, the  \emph{truncated path} of $p$ is the maximal
subpath 
\[
\acfg_1, (\anacm_2, \abind_2), \acfg'_2, \ldots , (\anacm_i, \abind_i),
\acfg'_i
\]
such that  each $(\anacm_j, \abind_j): 2 \leq j \leq i$ is a well-formed
access at $\acfg'_{j-1}$ (with $\acfg'_1=\acfg_1)$. That is, we eliminate 
the initial access in $p$, and then find
 the longest subpath of $p$ that did not depend on
this initial access.

We say that an access $(\anacm,\abind)$ is {\it long-term relevant ($\lti$) for $Q$} at
configuration $\acfg$ if
for some instance $\aninst$ consistent with $\acfg$, for some path
$p$ beginning with $(\anacm,\abind)$ the certain answers to $Q$ at $p$ are
different from those at the truncated path of $p$.

\begin{example}
Suppose that we have a schema
with relations $S, T$, 
and a query $Q= S \bowtie T$. Suppose we have a configuration $\acfg$ in
which $S$ and $T$ have not yet been accessed, 
and there is a dependent access method on~$T$.
Now consider an access ($\anacm, \abind$) on $S$. It is  long-term
relevant for $Q$, since it is possible that  $(\anacm, \abind)$
returns some new values, and using these values to access $T$
could yield some new tuples for $Q$.
\end{example}

When we speak about the problem of ``determining
whether an access is relevant'', we mean either of the problems
$\ii$ or $\lti$.

\paragraph*{The complexity of relevance}

We make a few general observations about the complexity of
determining if an access is relevant for a query.

First, we observe that there is a tight relation between the general
question of relevance and the special case of Boolean queries.
For a number $k$, let $\ii$($k$) be the problem of determining whether
an access in a given configuration is immediately relevant for a query with output arity $k$, relative
to a schema, and similarly for $\lti$.
Let $\vec c_k$ be a tuple of $k$ new constant symbols.
For any fixed $k$ we can solve $\ii$($k$) by considering every tuple of items that
come either from the
configuration or from $\vec c_k$ substituting them in for the head of the query
and then determining whether the access is $\ii$ on the configuration for the Boolean query thus created.
This shows:
\begin{proposition} \label{prop:reducebool}
Let $k$ be any number. There is a polynomial time reduction
from $\ii(k)$ to $\ii(0)$, and from $\lti(k)$ to $\lti(0)$.
\end{proposition}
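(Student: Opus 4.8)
The plan is to realize both reductions as polynomial-time disjunctive (truth-table) reductions, following the sketch given just before the statement. On input a configuration $\acfg$, an access $(\anacm,\abind)$, and a query $Q$ of output arity $k$, I would enumerate every candidate head tuple $\vec t$ whose components are drawn from $\adom(\acfg)$ together with the fresh symbols $\vec c_k=(c_1,\dots,c_k)$ and which respects the output domains of $Q$, substitute each such $\vec t$ for the head variables to obtain a Boolean query $Q_{\vec t}$, and accept iff $(\anacm,\abind)$ is relevant (in the appropriate sense) for at least one $Q_{\vec t}$ at $\acfg$. For fixed $k$ there are at most $(|\adom(\acfg)|+k)^k$ such tuples, so the number of oracle calls is polynomial and each call is produced in polynomial time. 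To meet the standing requirement that every constant of a query occur in the configuration, I would record $\vec c_k$ among the constants present in $\acfg$.

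The heart of the argument is the equivalence
\[
(\anacm,\abind)\ \text{is}\ \ii(k)\ \text{for}\ Q\ \text{at}\ \acfg
\quad\Longleftrightarrow\quad
(\anacm,\abind)\ \text{is}\ \ii(0)\ \text{for some}\ Q_{\vec t}\ \text{at}\ \acfg ,
\]
and likewise for $\lti$. For the forward direction I would take a witness: an increasing response $\aresp$ and a tuple $\vec s$ that is not a certain answer for $Q$ at $\acfg$ but is one at $\acfg+(\anacm,\abind,\aresp)$. Each component of $\vec s$ either lies in $\adom(\acfg)$ or is a value freshly introduced by $\aresp$; applying a domain-preserving renaming $\sigma$ that fixes $\adom(\acfg)$ and the constants of $Q$ while sending the new components injectively into $\vec c_k$ yields an enumerated tuple $\vec t=\sigma(\vec s)$ and a response $\sigma(\aresp)$ witnessing $\ii(0)$ for $Q_{\vec t}$; here $\vec t$ cannot be a certain answer at $\acfg$ since any fresh component lies outside $\adom(\acfg)$. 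The backward direction is immediate: a response making $Q_{\vec t}$ certain exhibits $\vec t$ as a new certain answer for $Q$, the symbols $c_i$ simply playing the role of genuine new values. The whole step rests on a genericity (renaming-invariance) lemma asserting that certain answers, well-formedness of accesses, and reachability are all invariant under any domain-preserving bijection fixing $\adom(\acfg)$ and the query constants.

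For $\lti$ I would run the same construction at the level of entire paths: $(\anacm,\abind)$ is $\lti$ for $Q$ iff along some path the certain answers of $Q$ differ from those of its truncation, which holds iff the certainty of some $Q_{\vec t}$ differs, i.e.\ iff $(\anacm,\abind)$ is $\lti$ for that $Q_{\vec t}$. The renaming $\sigma$ must now be applied uniformly to every configuration, binding value, and response along both the path and its truncated path, so that all accesses stay well-formed after substitution.

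I expect the main obstacle to be precisely this well-formedness bookkeeping in the dependent, long-term case. In the genericity argument the fresh constants $\vec c_k$ naturally appear only \emph{after} the first access, since they are produced by responses, whereas the requirement that the query's constants occur in the configuration forces $\vec c_k$ into $\adom(\acfg)$ from the outset. One must therefore verify that pre-seeding $\acfg$ with these fresh constants neither enables a dependent access that manufactures spurious relevance nor aids the truncated path more than the full path. For immediate relevance this difficulty evaporates: only the single given access occurs, its binding already lies in $\adom(\acfg)$, and adding bare constants to the active domain leaves every certain-answer computation unchanged; so the work is confined to reconciling the substitution of $\vec c_k$ with the dependency conditions of accesses that may occur later along a path.
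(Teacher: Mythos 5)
Your proposal follows exactly the paper's own argument: the paper proves this proposition only by the sketch preceding the statement, namely enumerating all head tuples built from configuration values plus $k$ fresh constants, substituting them into the head, and testing the resulting Boolean queries. Your write-up is the same disjunctive reduction, just with the genericity/renaming and well-formedness details (which the paper leaves implicit) spelled out correctly.
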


We will thus focus on the Boolean case $k=0$ in this work.

Second, note that checking that an access is relevant, for any of the
notion of relevance we have defined, requires that we know that the
query
is not already satisfied in the configuration, which is
\complexity{coNP}-hard.

\section{Relevance and Containment}
\label{sec:containment}

In this section, we introduce the notion of
\emph{containment of queries under access limitations} and show how it is
strongly related to long-term relevance. We will use this connection to
ascertain the complexity of relevance in some cases.

\paragraph*{Containment under access limitations}
Query containment under access limitations was
shown to be decidable by Li and Chang~\cite{li2001answering-icdt},
and further investigated by Calì and Martinenghi
in~\cite{cali2008conjunctive}. We adapt here the definition to our
setting, and show further in Proposition~\ref{prop:containment-cali} that the
definition of~\cite{cali2008conjunctive} is essentially a special case of
ours. We give the definition for queries of arbitrary arity, but as we
explained we will focus on the Boolean case further on.

\begin{definition}
Let $\sch$ be a schema and $\acs$ a set of access methods over $\sch$.
Let $Q_1$ and $Q_2$ be two queries defined over $\acs$ and $\acfg$ a
configuration over $\sch$. We assume $Q_1$ and $Q_2$ have the same arity.
We say that \emph{$Q_1$ is contained in $Q_2$ under $\acs$ starting
from $\acfg$}, denoted $Q_1\sqsubseteq_{\acs,\acfg} Q_2$, if for every
configuration $\acfg'$ reachable from $\acfg$, $Q_1(\acfg')\subseteq
Q_2(\acfg')$. We simply say that \emph{$Q_1$ is contained in $Q_2$ under
$\acs$} and write $Q_1\sqsubseteq_\acs Q_2$ if $\acfg$ is the
empty configuration.
\end{definition}

As noted in~\cite{cali2008conjunctive}, in the presence of dependent
accesses, the notion of query containment under access limitations is
strictly weaker than the usual notion of query containment:

\begin{example}
Let $R$ and $S$ be two unary relations with the same domain,
each one with a single dependent
access method:
Boolean for $R$, and free for $S$. Consider queries $Q_1=\exists x\;R(x)$
and $Q_2=\exists x\;S(x)$. Starting from the empty
configuration, the only well-formed access paths that make $Q_1$ true,
i.e., produce an $R(x)$ atom, must first access $S$ and produce $S(x)$.
This means that $Q_1\sqsubseteq_\acs Q_2$ while, obviously,
$Q_1\not\sqsubseteq Q_2$.
\end{example}

More generally, many classical results that hold for
the classical notion of query containment are not true any more in the
presence of access constraints: for instance,
query containment of conjunctive queries 
cannot be tested by the existence of a homomorphism, and
query containment of
unions of conjunctive queries does not mean that all disjuncts of the
first query are contained in some disjunct of the second query, as is
true without access constraints~\cite{SagivY80}.

\paragraph*{Relating containment to relevance}
Query containment under access limitations is of interest in its
own right, but also for the connection
to
long-term relevance. We begin by showing that containment under access limitations
can be reduced to the complement of long-term relevance.

\begin{proposition}
\label{prop:containment-to-relevance}
There is a polynomial-time many-one reduction from the problem of query
containment of Boolean CQs (resp., PQs) under access limitations, starting from a
given configuration, to determining whether an access is \emph{not}
long-term relevant to
a Boolean CQ (resp., PQ), in another configuration. If the query is
a PQ, the configuration can be chosen to be the same.
\end{proposition}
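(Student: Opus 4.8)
The plan is to fold $Q_1$ and $Q_2$ into a single query $Q$ over an enlarged schema containing one fresh ``gate'' relation, and to choose the initial access so that removing it (passing to the truncated path) removes exactly the ability to reach configurations witnessing a failure of containment. Throughout I would use the observation, implicit in the definition of certain answers, that for a Boolean positive query $Q'$ and a configuration $\acfg'$, $Q'$ is certain at $\acfg'$ iff $Q'$ is true when $\acfg'$ is read as an instance: the ``if'' direction is monotonicity, and the ``only if'' direction takes $\acfg'$ itself as the witnessing instance. Under this reading, $Q_1\sqsubseteq_{\acs,\acfg}Q_2$ says precisely that no configuration reachable from $\acfg$ makes $Q_1$ true while leaving $Q_2$ false.

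For the PQ case I would add a fresh nullary relation $G$ equipped with a single free Boolean access method $\anacm_G$, keep the configuration equal to $\acfg$ (with $G$ empty), and set $Q:=(Q_1\wedge G)\vee Q_2$; the access whose long-term relevance I test is $(\anacm_G,\emptyset)$, which is trivially well-formed. The crucial structural point is that $G$ is nullary, so a response to $\anacm_G$ contributes no constants to $\adom$ and touches no relation occurring in $Q_1$ or $Q_2$; hence deleting this initial access never invalidates any subsequent access, and the truncated path of any path $p$ replays all of $p$'s remaining accesses and reaches exactly the final configuration $\acfg_n$ of $p$ minus the $G$-fact. Consequently, along any $p$ whose first response actually returns $G$, the full path makes $Q$ certain iff $Q_1(\acfg_n)\vee Q_2(\acfg_n)$, whereas the truncated path makes $Q$ certain iff $Q_2(\acfg_n)$, so the certain answers differ iff $Q_1(\acfg_n)$ holds and $Q_2(\acfg_n)$ fails.

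The two directions then drop out. Since the accesses following $\anacm_G$ are arbitrary well-formed accesses on the original schema and $G$ enables none of them, the original-schema part of $\acfg_n$ ranges over exactly the configurations reachable from $\acfg$. Thus $(\anacm_G,\emptyset)$ is long-term relevant for $Q$ iff some reachable configuration makes $Q_1$ true and $Q_2$ false, i.e.\ iff $Q_1\not\sqsubseteq_{\acs,\acfg}Q_2$; equivalently, non-relevance is equivalent to containment. (When a response returns no $G$, or when $Q_1\sqsubseteq_{\acs,\acfg}Q_2$ forces $Q_2$ wherever $Q_1$ holds, full and truncated paths agree, which handles the ``not relevant'' side.) All of this is clearly polynomial-time computable, and the relevance configuration coincides with $\acfg$, as required for PQs.

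The main obstacle is the CQ case, where $(Q_1\wedge G)\vee Q_2$ is unavailable because conjunctive queries cannot form a disjunction. The plan here is to encode the disjunction inside a single conjunctive body, which is exactly what forces the configuration to change. I would introduce a fresh ``selector'' relation preloaded in the configuration with two tuples, one per branch, and write $Q$ as a conjunctive query that first matches a selector tuple and then threads that tuple's values through the remaining subgoals; taking $\exists$ over the two selector tuples turns the conjunctive body into the intended disjunction of the $Q_1\wedge G$ branch and the $Q_2$ branch. The delicate step is making the two branches coexist in one conjunction without cross-interference: each branch's subgoals must be rendered vacuously satisfiable when the other branch's selector tuple is chosen, which I would arrange by tagging the atoms of each branch with the selector value and preloading matching ``dummy'' facts for the inactive tag. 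Once this disjunction-encoding is in place the correctness argument is identical to the PQ case, and the selector and dummy tuples are precisely why the relevance configuration differs from $\acfg$.
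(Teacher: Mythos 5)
Your proposal is correct and follows essentially the same route as the paper's proof: a fresh ``gate'' relation with a Boolean access, a query combining $Q_1$, $Q_2$ and the gate so that adding the gate fact changes certainty exactly when $Q_1\wedge\lnot Q_2$ holds at the reached configuration (the paper uses $((\exists x\,A(x))\lor Q_2)\land Q_1$ with a unary $A$, which is interchangeable with your $(Q_1\wedge G)\lor Q_2$), and for CQs the same disjunction-elimination device of an extra Boolean place, an $\mathit{Or}$/selector relation preloaded in the configuration, and $0$-tagged dummy facts for the inactive branch. The only omissions are bookkeeping details the paper spells out (re-tagging the original configuration facts and the path's returned facts with $1$), so no substantive gap.
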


\begin{proof}
For positive queries, the proof works by ``coding two queries
as one disjunction'' -- we create a query $\tau(Q,Q')$ and access
such that  if the access returns
successfully, then  the query is equivalent to $Q$, and otherwise
to $Q'$.
Disjunction can be eliminated by the idea of ``coding Boolean
operations in relations'', which will be used often in this paper.

We first show the result for PQs and then extend the result to
CQs.
Let $\sch$ be a schema, $\acs$ a set of access methods, $\acfg$ a
configuration over $\sch$, and $Q_1$, $Q_2$ two PQs over $\sch$. We
extend $\sch$ and $\acs$ into $\sch'$, $\acs'$ by adding a fresh unary
relation~$A$, with a Boolean access method. Let $c$ be a fresh constant.
We set $Q'=
\left(\left(\exists
x\,A(x)\right)\lor Q_2\right)\land Q_1$. We claim the following:
\[
Q_1\sqsubseteq_{\acs,\acfg} Q_2
\iff
\text{$A(c)?$ is not $\lti$ for $Q'$ in $\acfg$.}
\]

Assume $Q_1\not\sqsubseteq_{\acs,\acfg} Q_2$. Then there exists a
configuration $\acfg'$, reachable from $\acfg$, such that
$Q_1(\acfg')$ is true and $Q_2(\acfg')$ is false.
Consider a path $p$ leading from
$\acfg$ to $\acfg'$. We extend $p$ into a path $p'$ by adding, as first
access, $A(c)?$ that returns true. Then, $p'$
leads from $\acfg$ to $\acfg''=\acfg'\cup\{A(c)\}$.
Since $\acfg$
does not contain any fact about $A$, $\exists x\,A(x)$ is false in
$\acfg'$ and thus $Q'(\acfg')$ is false
while $Q'(\acfg'')$ is true. This means
$p'$, whose truncation is $p$,
is a witness that $A(?)$ is $\lti$ for $Q'$.

Conversely, assume that $A(c)?$ is $\lti$ to $Q'$. There exists a path
$p'$ which starts with an access $A(c)?$ and reaches from $\acfg$ a
configuration $\acfg''$ such that the truncation $p$ of $p'$ reaches from
$\acfg$ a configuration $\acfg'$ with
$Q'(\acfg'')$ true and $Q'(\acfg)$ false. Since $\acfg$ and $\acfg'$ are
different, it means that the response to $A(c)?$ was true. Without loss
of generality, we can assume that no other access in $p'$ made use of the
relation $A$, since no information gained this way would change the
result of $Q'$ once $A(c)$ is known. This means $p$ is a well-formed
access path starting from $\acfg$ for $\acs$, leading to $\acfg'$ and
$\acfg''=\acfg'\cup\{A(c)\}$.
Since $\acfg'$ has no fact about $A$ and the same facts as $\acfg''$
about the relations in $Q_1$, $Q_1(\acfg')$ is true and
$Q_2(\acfg')$ is false, i.e., $Q_1\not\sqsubseteq_{\acs,\acfg} Q_2$.

\newcommand{\B}{\mathcal{B}}
Consider now the case when $Q_1$ and $Q_2$ are conjunctive; we have to
modify the query $Q'$ to remove the disjunction. We extend
$\acs'$ by adding an extra place to all relations except
$A$, with a new domain $\B$, and adding two extra relations: a
binary relation $\mathit{Or}$, with both attributes typed with~$\B$, and a unary
relation $P$, with attribute typed with $\B$. There are no accesses to
either $\mathit{Or}$ or $P$, and the configuration is extended with the following
facts: $\mathit{Or}(1,0)$, $\mathit{Or}(0,1)$, $\mathit{Or}(1,1)$, $P(1)$. We extend $Q_1$ and $Q_2$
into $Q_1''(b)$ and $Q_2''(b)$ by simply putting the same variable $b$ in
each additional place. We choose the domain of the unique attribute of
$A$ to also be $\B$ and add $A(0)$ to the configuration.
Finally, we add to the configuration a fact
$R(c_1\dots c_k,0)$ for each relation~$R$, where the $c_i$'s are
arbitrary constants of the appropriate domain (but the same constant is
used for all attribute of the same domain, across relations),
and replace each ground
fact $R(d_1\dots d_k)$ of the configuration with $R(d_1\dots
d_k,1)$. Let $\acfg_{2}$ be
the new configuration.
We construct a conjunctive query
$Q''$ as:
\[
Q''=\exists
b_1\exists b_2\exists b\;A(b_1)\land Q_2''(b_2)\land \mathit{Or}(b_1,b_2)\land
Q_1''(b)\land P(b).
\]

Then $A(1)?$ is $\lti$ for $Q'$ in $\acfg$ if and only if $A(1)?$ is
$\lti$ for $Q''$ in $\acfg_2$. Assume the former: there is a path
$p$ that witnesses long-term relevance of $Q'$. We modify $p$ as $p_2$ 
by putting the
constant $1$ into the extra place of each relation. Then $Q''$ is true in
$\acfg_2+p_2$ and $Q''$ is false in $\acfg_2$ plus the truncation of $p_2$.
Conversely, the atom $P(b)$ enforces that witnesses of long-term
relevance of~$Q''$ can be turned into witnesses of long-term relevance
of~$Q'$.
\end{proof}

We can thus prove lower bounds for relevance using lower bounds for
containment. As an example, query containment under access limitations
obviously covers the classical notion of query containment (just make all
access methods free). This immediately entails that long-term relevance
is \conp-hard for CQs and \sigmatp-hard for PQs, even if all variables
are from the same abstract domain.
Conjunctive query
containment in the presence of datatype restrictions and fixed relations
is $\pitp$-hard
(this follows from \cite{SagivY80}) and hence containment under access in
our setting is $\pitp$-hard.
 We will show
that this latter lower bound actually already holds for conjunctive
queries even in very restricted settings  (cf.\
Proposition~\ref{prop:ltisoundind}).

In the other direction, from relevance to containment, we also
have a polynomial-time many-one reduction, but only for positive queries
and only for Boolean accesses.

\begin{proposition} \label{prop:reltoconpos}
There is a polynomial-time many-one reduction from the problem of
long-term relevance of a Boolean access for
a Boolean positive query in a
given configuration,
to the \emph{complement} of query
containment of Boolean positive queries under access limitations,
starting from another configuration.
\end{proposition}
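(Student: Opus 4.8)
```latex
The plan is to reverse the construction of Proposition~\ref{prop:containment-to-relevance}:
there we coded two queries into a single query governed by a Boolean access
$A(c)?$, and we now want to recover, from a long-term relevance instance, a
pair $(Q_1,Q_2)$ whose containment fails exactly when the access is relevant.
The key asymmetry to keep in mind is that relevance asks for the
\emph{existence} of a separating response, whereas containment is a
\emph{universal} statement over reachable configurations; so the complement
of containment (``there exists a reachable $\acfg'$ with $Q_1(\acfg')$ true and
$Q_2(\acfg')$ false'') is the natural existential target. The restriction to
\emph{Boolean} accesses is what makes this work: the access $(\anacm,\abind)$
we start with has no output attributes, so its only effect on the world is to
record that one particular ground fact $\sof(\anacm)(\abind)$ has become known,
and its truncation removes exactly that fact.

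First I would take the given long-term relevance instance: a schema $\sch$,
accesses $\acs$, a configuration $\acfg$, a Boolean PQ $Q$, and a Boolean access
$(\anacm,\abind)$. Long-term relevance says there is an instance, a path $p'$
beginning with $(\anacm,\abind)$, and a truncation $p$ of $p'$ such that the
certain answers of $Q$ after $p'$ differ from those after $p$. Since $Q$ is
Boolean, ``differ'' means $Q$ becomes certainly true along $p'$ but is not yet
certain along the truncation. The single new fact distinguishing $p'$ from its
truncation is $f=\sof(\anacm)(\abind)$. The idea is then to build a containment
instance over an enlarged schema $\sch'$ and a configuration $\acfg'$ obtained
from $\acfg$ by deleting any facts about $\sof(\anacm)$ at the bound tuple and
by setting $Q_2 := Q$ and $Q_1 := Q \wedge (\text{the fact } f \text{ is present})$,
encoding the presence of $f$ as an extra conjunct of a positive query. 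Then
$Q_1\sqsubseteq_{\acs,\acfg'} Q_2$ should fail precisely when some reachable
$\acfg'$ makes $Q$ true \emph{with} $f$ available but leaves $Q$ not-yet-certain
\emph{without} it -- which is the long-term relevance condition.

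The two directions then mirror the earlier proof. For the forward direction I
would take the witnessing path $p'$, strip the initial access $(\anacm,\abind)$,
and observe that its truncation $p$ is a well-formed access path from the new
configuration that reaches a state separating $Q_1$ from $Q_2$; the fact $f$ is
available because the original path had completed the initial Boolean access,
while $Q$ is not certain along the truncation, so $Q_2$ fails there. For the
converse I would take a reachable $\acfg'$ witnessing failure of containment and
prepend the Boolean access $A$-style access that makes $f$ known, turning it into
a long-term-relevance witness whose truncation is the path to $\acfg'$. The main
obstacle I expect is handling soundness and non-exactness correctly: because
accesses return only \emph{sound} subsets and need not be exact, I must argue that
the single fact $f$ can be made available (or withheld) independently of the rest
of the reachable configuration, and that encoding ``$f$ is present'' as a positive
conjunct does not accidentally become derivable along the truncation. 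The
cleanest way around this is to use a fresh relation (as in the previous
proposition) to record the status of $f$ positively, so that its presence is
monotone and cannot be spuriously forced; I expect that device, combined with the
fact that the access is Boolean, to let the existential witness for failed
containment and the relevance path be translated into each other verbatim.
```
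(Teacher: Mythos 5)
There is a genuine gap here, and it is fatal to the construction as you describe it. You set $Q_2:=Q$ and $Q_1:=Q\wedge(\text{the fact } f\text{ is present})$. But then $Q_1$ logically implies $Q_2$ at \emph{every} configuration, so $Q_1\sqsubseteq_{\acs,\acfg'}Q_2$ holds trivially no matter what, and your reduction always answers ``not relevant.'' The confusion shows in your phrase ``makes $Q$ true \emph{with} $f$ available but leaves $Q$ not-yet-certain \emph{without} it'': containment under access limitations evaluates both queries at the \emph{same} reachable configuration, so you cannot compare a with-$f$ world against a without-$f$ world by conjoining $f$ onto $Q_1$. To witness non-containment you need a single reachable configuration where $Q_1$ holds and $Q_2$ fails, which forces $Q_1$ to be a \emph{weakening} of $Q$, not a strengthening. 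Your forward direction breaks for the same reason: along the truncated path $f$ is absent, so your $Q_1$ is false there; along the full path $Q$ is true, so $Q_2$ is true there; neither configuration separates the two queries.

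The paper's construction goes the other way. It adds an inaccessible relation $\mathit{IsBind}$ together with the single fact $\mathit{IsBind}(\abind)$ to the configuration, and takes $Q_1$ to be $Q$ with every subgoal $R(i_1\dots i_k,o_1\dots o_p)$ on $R=\sof(\anacm)$ replaced by the disjunction $R(i_1\dots i_k,o_1\dots o_p)\lor\mathit{IsBind}(i_1\dots i_k)$, keeping $Q_2=Q$. Thus $Q_1$ internalizes the hypothetical effect of the access: it is true at a reachable configuration exactly when $Q$ would become true after adding facts that $(\anacm,\abind)$ could return, while $Q_2=Q$ being false at the \emph{same} configuration says $Q$ is not yet satisfied without them. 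A separating configuration then converts into a relevance witness by prepending the access returning the needed facts, and conversely. Note also that the paper only \emph{adds} a fact to the configuration; it does not delete facts about $\sof(\anacm)$ as you propose, and deletion is not an operation the reduction is entitled to without further justification. Your instinct to use a fresh inaccessible relation to record the binding is right, but it must enter $Q_1$ \emph{disjunctively} at the $R$-subgoals, not conjunctively as a guard on the whole query.
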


\begin{proof}
We assume given a schema $\sch$ and a set of access methods $\acs$. Let
$\acfg$ and $Q$ be, respectively, a configuration and a positive query
over $\sch$. We consider an access $(\anacm, \abind)$ with
$\anacm\in\acs$. Let $R=\sof(\anacm)$. To simplify the presentation we
assume that input attributes of $\anacm$ come before output attributes.

\newcommand{\isbind}{\relation{IsBind}}
We add to $\sch$ a relation $\isbind$ with the same arity $k$ and
variable domains as $\abind$, without any access. We add to $\acfg$ the single fact
$\isbind(\abind)$ and denote the new configuration by $\acfg'$. We rewrite $Q$
as~$Q'$ by replacing every occurrence of $R(i_1\dots i_k,o_1\dots o_p)$
with \[R(i_1\dots i_k,o_1\dots o_p)\lor \isbind(i_1\dots i_k).\]

Then $(\anacm,\abind)$ is $\lti$ for $Q$ in $\acfg$ if and only if
$Q'\not\sqsubseteq_{\acs,\acfg'} Q$.

Assume $(\anacm,\abind)$ is $\lti$ for $Q$ in $\acfg$. There is a
well-formed path $p$ starting with the access, with truncated path~$p'$,
such that $Q$ is true in $\acfg+p$ and false in $\acfg+p'$ (and, since
$\isbind$ does not occur in $Q$, also false in $\acfg'+p'$).
For every
subgoal $R(i_1\dots i_k,o_1\dots o_p)$
of $Q$ that is witnessed by the first access of $p$, $\isbind(i_1\dots
i_k)$ is true in $\acfg'$ and thus a witness that $Q$ is true in $\acfg+p$
yields a witness that $Q'$ is true in $\acfg'+p'$.

Conversely, assume there is a path $p'$ such that $Q'$ is true and $Q$ is
false in $\acfg'+p'$. For every $R(i_1\dots i_k, o_1\dots o_p)$ that
is false in $Q$ while the corresponding disjunction is true in $Q'$, we 
construct a ground fact $R(\abind, c_1\dots c_p)$ where $(c_1\dots
c_p)$ are the constants that $(o_1\dots o_p)$ are mapped to in a witness
of $Q'$. Then we build a new path $p$ by prepending to $p'$ all these
ground facts, returned by $(\anacm,\abind)$. The path~$p$ witnesses that
$(\anacm,\abind)$ is $\lti$ for $Q$ in $\acfg$.
\end{proof}

Finally,
for conjunctive queries, we prove similarly
a different form of reduction, a Turing
reduction in nondeterministic polynomial time:
\begin{proposition}
\label{prop:relevance-to-containment-conjunctive}
Long-term relevance of a Boolean access for a CQ can be decided
with a nondeterministic polynomial-time algorithm with access to an
oracle for query containment of CQs under access limitations.
\end{proposition}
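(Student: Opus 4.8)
The plan is to recycle the construction used to prove Proposition~\ref{prop:reltoconpos}. A CQ is in particular a positive query, so that construction applies to our $Q$ unchanged: it produces a fresh relation $\relation{IsBind}$ of arity $k$, a configuration $\acfg'$ obtained from $\acfg$ by adding the fact $\relation{IsBind}(\abind)$, and a query $Q'$ in which every occurrence of the accessed relation $R = \sof(\anacm)$, say $R(i_1\dots i_k, o_1\dots o_p)$, is replaced by the disjunction $R(i_1\dots i_k, o_1\dots o_p) \lor \relation{IsBind}(i_1\dots i_k)$; and it guarantees that $(\anacm,\abind)$ is $\lti$ for $Q$ in $\acfg$ if and only if $Q' \not\sqsubseteq_{\acs,\acfg'} Q$. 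The only obstacle is that $Q'$ is not a CQ---it contains one disjunction per occurrence of $R$---so it cannot be passed directly to an oracle that only decides CQ containment. The idea is to remove these disjunctions by guessing, nondeterministically, how each of them is resolved.

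Concretely, I would first perform the polynomial-time preprocessing above, recording the positions of the $m$ occurrences of $R$ in $Q$. The algorithm then guesses a subset $J \subseteq \{1,\dots,m\}$ and forms the query $Q'_J$ obtained from $Q$ by replacing the $j$-th occurrence $R(i_1\dots i_k,o_1\dots o_p)$ with the atom $\relation{IsBind}(i_1\dots i_k)$ when $j \in J$, and keeping it as $R(i_1\dots i_k,o_1\dots o_p)$ otherwise. Each $Q'_J$ is an existentially quantified conjunction of atoms over the extended schema, hence a genuine CQ, so the oracle applies. The algorithm issues a single oracle call to decide $Q'_J \sqsubseteq_{\acs,\acfg'} Q$ and accepts exactly when this call reports non-containment; since the guess has polynomial size and the remaining work is polynomial, this is a nondeterministic polynomial-time algorithm with one oracle query per branch.

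Correctness reduces to two facts. First, distributing the disjunctions shows $Q'$ is logically equivalent to $\bigvee_{J\subseteq\{1,\dots,m\}} Q'_J$. Second, containment of a \emph{union} of CQs in a \emph{single} CQ decomposes componentwise even under access limitations: because all disjuncts are evaluated over exactly the same configurations reachable from $\acfg'$ under $\acs$, a configuration makes $\bigvee_J Q'_J$ true iff it makes some $Q'_J$ true, whence $Q'\sqsubseteq_{\acs,\acfg'} Q$ iff $Q'_J \sqsubseteq_{\acs,\acfg'} Q$ for every $J$, and dually $Q' \not\sqsubseteq_{\acs,\acfg'} Q$ iff $Q'_J \not\sqsubseteq_{\acs,\acfg'} Q$ for some $J$. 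Combining this with the equivalence from Proposition~\ref{prop:reltoconpos}, $(\anacm,\abind)$ is $\lti$ for $Q$ in $\acfg$ iff some guess $J$ yields non-containment, which is precisely what an accepting branch detects.

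The step deserving the most care is this second fact. One must not confuse it with the two-sided union decomposition that, as noted earlier in the paper, \emph{fails} under access limitations---here disjunction appears only on the left and the right-hand query is a fixed CQ, so the decomposition remains valid. The verification is the straightforward witness argument: a configuration $\acfg''$ reachable from $\acfg'$ witnessing $Q'\not\sqsubseteq_{\acs,\acfg'} Q$ satisfies some disjunct $Q'_J$ while falsifying $Q$, and that very configuration witnesses $Q'_J \not\sqsubseteq_{\acs,\acfg'} Q$; the converse inclusion of witnesses is immediate since each $Q'_J$ entails $Q'$.
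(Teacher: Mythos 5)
Your proof is correct, but it takes a genuinely different route from the paper's. The paper argues directly: it splits $Q$ as $Q_1\land Q_2$ with $Q_1$ the subgoals compatible with the access, guesses a proper subset $Q'_1$ of $Q_1$ (the subgoals \emph{not} witnessed by the access), and asks the oracle whether $Q'_1\land Q_2\sqsubseteq_{\acs,\acfg}Q$; the converse direction then extends a homomorphism of $Q'_1\land Q_2$ by mapping the leftover variables of the dropped subgoals to fresh constants producible by the access. You instead compose the already-established reduction of Proposition~\ref{prop:reltoconpos} (the $\relation{IsBind}$ construction) with a left-sided union decomposition of the resulting query $Q'\equiv\bigvee_J Q'_J$, and you correctly observe that containment of a union in a fixed CQ decomposes disjunct-by-disjunct even under access limitations because the set of reachable configurations is the same for all disjuncts -- this is the right distinction from the two-sided decomposition that fails. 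Your route buys some extra robustness: the $\relation{IsBind}(i_1\dots i_k)$ atoms automatically pin the input variables of the access-witnessed subgoals to the binding $\abind$ (since $\relation{IsBind}(\abind)$ is the only fact of that relation), whereas the paper's query $Q'_1\land Q_2$ as literally written leaves those variables unconstrained, and its claim that the added facts ``can all be produced by the access'' implicitly relies on substituting the binding into the dropped subgoals -- for a Boolean access this matters, since the access can only ever return the single fact $R(\abind)$. The cost is that your oracle calls are made over a slightly extended schema and configuration $\acfg'$, which is harmless for a Turing reduction. Both algorithms make one polynomial-size guess and one oracle call per branch, so the complexity claim is identical.
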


\begin{proof}
As in the previous reduction, we assume we are given $\sch$, $\acs$, $\acfg$, a
conjunctive query $Q$, and an access $(\anacm, \abind)$.

We pose $Q=Q_1\land Q_2$ with $Q_1$ all subgoals of $Q$ which are
compatible with the access (same relation, and no mismatch of constants
with the binding). We guess, nondeterministically, a subset $Q'_1$ of
$Q_1$ distinct from $Q_1$.
We use the containment oracle to test whether $Q'_1\land
Q_2\sqsubseteq_{\acs,\acfg} Q$. If not, we return true. If all guesses
lead to a $Q'_1\land
Q_2$ contained in $Q$ we return false.

Let us show that this algorithm tests long-term relevance.

Assume
$(\anacm,\abind)$ is long-term relevant. Then there exists a path $p$
starting with the access and with truncation $p'$
such that $Q$ is true in $\acfg+p$ and false in $\acfg+p'$. The only
difference between $p$ and $p'$ being the tuples returned by the access,
$Q_2$ is true in $\acfg+p'$. Since $Q$ is true in $\acfg+p$, some
non-empty subset of $Q_1$ is witnessed by the access. Let $Q'_1$ be the
complement of this subset, that is, the facts of $Q_1$ that were
not witnessed by the access.
Then $Q'_1\land Q_2$ is true in $\acfg+p'$ while
$Q$ is false and the algorithm returns true.

Assume now that the algorithm returns true. Then there exists a subset $Q'_1$
of $Q_1$ and a path $p'$ such that $Q'_1\land Q_2$ is true and $Q$
is false in $\acfg+p'$. Let $Q''_1$ be the complement of $Q'_1$ in $Q_1$.
Since $Q'_1\land Q_2$ is true  in $\acfg+p'$, there is a 
homomorphism from $Q'_1\land Q_2$ into $\acfg+p'$.
We extend this into a homomorphism from $Q$ to an extension $\acfg'$
of $\acfg+p'$
by mapping
remaining variables to some fresh constants in the appropriate domains.
Observe that the facts that were added to $\acfg'$ can all be produced by
the access. We thus have a witness for long-term relevance.
\end{proof}

Both reductions will be used to show that upper bound results 
can be lifted from containment to relevance. Even though the latter
reduction seems weak, it will be enough for our purpose (see
Section~\ref{sec:dep}).

\paragraph*{Containment and containment}
Our notion of query containment under access limitations starting from a
given configuration differs in some ways from the notion introduced by Calì
and Martinenghi in~\cite{cali2008conjunctive}. In this part of the paper,
to emphasize the distinction, we refer to the former as
\emph{config-containment} and to the latter as \emph{CM-containment}.
The differences are as follows:
\begin{inparaenum}[(i)]
\item In CM-containment, there is always exactly one access method per
regular relation, whereas in config-containment there may
be zero or several access methods per
relations;
\item CM-containment is defined with respect to a set of existing
constants (of the various abstract domains) that can be used in access
paths, while config-containment, as its name implies,
uses a more general notion of pre-existing configuration,
with constants as well as ground facts;
\item Access methods in CM-containment are always exact, they return the
complete collection of facts compatible with the binding that are present
in the database instance; we do not make such an assumption for
config-containment and merely assume accesses are sound;
\item In addition to regular relations,
CM-containment also supports \emph{artificial relations}
which are unary (monadic) relations ``whose content is accessible and
amounts only to'' some constant value. Since they are not described in the
definition of the CM-containment, but are added for the purpose of
eliminating constants in the queries (see p.~331
of~\cite{cali2008conjunctive}), it is not exactly clear what the
restrictions are on these artificial relations.
To the best of our understanding,
they correspond in our setting
to relations without any access methods, except that
config-containment allows them to have arbitrary arity. It would be
interesting to see if the result of this paper (and
of~\cite{cali2008conjunctive}) can be extended to the case where all
relations have an access method.
\end{inparaenum}

Among these four, the significant difference is the forbidding of
multiple accesses per relation, which means CM-containment is a special case
of config-containment:

\begin{proposition}
\label{prop:containment-cali}
There are polynomial-time reductions in both directions between
CM-containment and the special case of
config-containment
when relations have at most one access method and relations without an
access method have arity bounded by a constant $K$. The query language (CQs,
PQs) is preserved by the reductions.
\end{proposition}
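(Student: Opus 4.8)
The plan is to treat the two reductions separately and to dispatch the four discrepancies (i)--(iv) between the models one at a time, observing that in the stated special case three of them are vacuous or routine and only one carries the real weight. Throughout I keep the queries monotone (CQs or PQs) so that the query class is preserved by copying atoms; the only nontrivial rewriting happens when a fixed relation is eliminated.

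First I would record a \emph{monotonicity lemma} that neutralises the exact-versus-sound discrepancy~(iii). Since CQs and PQs are monotone, I claim that sound-access config-containment and the exact-access semantics of CM-containment define the same relation. Indeed, if containment fails under sound accesses, witnessed by a reachable $\acfg'$ with $Q_1(\acfg')\not\subseteq Q_2(\acfg')$, then taking the instance $\adb=\acfg'$ and replaying the same access sequence under exact semantics reaches exactly $\acfg'$ again: each exact access returns all $\acfg'$-tuples matching its binding, the sound run already covered every fact of $\acfg'$, and dependent bindings stay well-formed because the active domain only grows. The reverse inclusion is immediate, every exact-reachable configuration being sound-reachable. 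Hence neither reduction need worry about exactness.

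The direction CM-containment $\to$ config-containment is then essentially an embedding, matching the slogan that CM-containment is a special case. I map each CM regular relation (its single exact access) to a config relation with the same single access, so ``at most one access method'' holds; each monadic artificial relation becomes an access-free config relation of arity $1\le K$, absorbing~(iv); and the existing constants of~(ii) are encoded by seeding $\acfg$ with ground facts that place each constant, tagged with its abstract domain, in a fresh access-free unary relation, so that the starting active domain is exactly the required constant set. By the lemma the instance has the same answer, and the queries are copied verbatim.

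The substantive direction is config-containment $\to$ CM-containment, and here lies the main obstacle: CM-containment offers only \emph{monadic} fixed relations over a constants-only base, whereas a config instance may carry ground facts and access-free relations of arity up to~$K$ with arbitrary fixed content. Each one-access config relation keeps its access in CM; the work is simulating the fixed data, and the bound~$K$ is precisely what makes a faithful polynomial encoding possible. A $K$-ary access-free relation with content $T$ is reified through auxiliary monadic artificial relations (one tuple-identifier constant per tuple of $T$, with unary component gadgets), replacing each atom by a fixed pattern that selects a tuple of $T$ and binds its components; as $K$ is constant and $|T|$ is part of the input, the blow-up is polynomial. For PQs the selection is a disjunction over $T$ and stays in the class directly; for CQs I instead route the disjunction through the paper's ``coding Boolean operations in relations'' device (as in Proposition~\ref{prop:containment-to-relevance}), keeping the rewritten query a CQ, and ground facts of accessible relations are handled by the same fixed-data encoding. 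The delicate points I expect to dominate the argument are checking that this encoding preserves exactly the containment-relevant reachable configurations -- so that configurations omitting the simulated fixed facts introduce no spurious non-containment -- and that the CQ disjunction-coding does not leak across the two sides of the containment; both are verified using the monotonicity lemma together with the monotone, consistent structure of the gadgets.
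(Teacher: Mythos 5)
Your proposal is correct in outline and preserves the query language where it must, but it organizes the argument differently from the paper on both load-bearing points. For the exactness discrepancy the paper does not use a monotonicity/replay lemma: it observes that with one access method per relation and no pre-existing facts, exactness amounts to idempotence, and that idempotence is harmless because a witnessing path can be normalized to use each (method, binding) pair only once. Your replay argument (take $\adb=\acfg'$ and rerun the path under exact semantics, noting that active domains only grow) reaches the same conclusion and is arguably more robust, since it does not lean on the single-access-method hypothesis. For the config-to-CM direction the paper does not reify fixed relations tuple-by-tuple: it conjoins all ground facts of the configuration to $Q_1$ as one big conjunction $C$ (forcing the witnessing CM path to regenerate them), and it reduces a $k$-ary access-free relation $R$ to monadic ones by introducing projection relations $R_a$, conjoining the positive $R$-facts to $Q_1$ and disjoining to $Q_2$ all $R$-facts over $R_a$-constants known to be false --- polynomial only because $K$ is constant. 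Your inlining of the content of an access-free $R$ as a disjunction over its tuples, via monadic singleton gadgets, eliminates $R$ outright and uses the arity bound only to keep each per-atom gadget of constant size; the price is that disjunctions now appear in $Q_1$ as well as $Q_2$, so the CQ case leans harder on the $\mathit{Or}$-coding device (which both proofs invoke, and whose interaction with the fixed ternary $\mathit{Or}$ relation both leave terse). One point you must make explicit: for an \emph{accessible} relation that carries ground facts in $\acfg$, its content can still grow, so the atom cannot be replaced by the disjunction over $\acfg(R)$ alone --- the rewriting must be $R(\vec x)\lor(\text{disjunction over }\acfg(R))$, i.e., the $\relation{IsBind}$ pattern of Proposition~\ref{prop:reltoconpos}; your phrase ``the same fixed-data encoding'' needs to be read in that disjunctive form for the reduction to remain sound.
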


\begin{proof}
The argument from CM-containment to config-containment is simple,
since config-containment allows a richer initial condition.
The reduction the other way requires us to code the configuration
in the contained query.

We first observe that when there is only one access method per relations,
and there are no pre-existing ground facts, an access method is exact if
and only if it is \emph{idempotent}, i.e., it always returns the same
result for a given binding: as there is no other access methods to give
another view of the same relation, the instance can just as well be
assumed to be exactly what is returned by the access method. To show the
reduction from CM-containment to config-containment, we then just need to
show that idempotence does not have an impact on query containment, which
is straightforward: if there is a path that witnesses non-containment,
then this path can always be assumed to use each combination of access
method and binding only once (we may regroup all such accesses as one),
which leaves the path well-formed and respects idempotence of accesses.

Conversely, let us now explain the reduction from config-containment to
CM-containment, in the case when there is at most one access method per
relation and a bound on the arity of the relations without access methods.
We have thus schema $\sch$, access methods $\acs$, 
queries $Q_1$ and $Q_2$, configuration $\acfg$. The main idea is to
encode the configuration into the contained query: we let $C$ be a big
conjunction consisting of all ground facts from the configuration. We first
explain the reduction assuming it is possible to have relations
without access methods of arbitrary arity in
CM-containment; we discuss afterwards how to use get rid of them.
Then we claim
that $Q_1\sqsubseteq_{\acs,\acfg} Q_2$ if and only if $Q_1\land C$ is
CM-contained in $Q_2$ with respect to $\acs$ and the set of constants
$\adom(\acfg)$.

To see this, assume $Q_1\not\sqsubseteq_{\acs,\acfg} Q_2$. Then there
exists a well-formed path $p$ such that $Q_1$ is true and $Q_2$ is false
in $\acfg+p$. Since $Q_1$ is true in $\acfg+p$, and $C$ is true in
$\acfg$, $Q_1\land C$ is true in $\acfg+p$, and $p$ is a witness of
CM-containment of $Q_1\land C$ in $Q_2$ (again, we merge repeated
identical accesses to ensure idempotence). Assume now $Q_1\land C$ is not
CM-contained in $Q_2$ under $\acs'$ with respect to $\adom(\acfg)$. Then
there is a path $p$ starting with constants of $\adom(\acfg)$ (and facts
of the relations without accesses) that makes
$Q_1\land C$ true and $Q_2$ false. We transform $p$ into another path
$p'$ by removing every fact that is in $\acfg$. Since $\acfg+p'$ contains
exactly the facts that were in $p$ (since $C$ is true, all facts of
$\acfg$ were present in~$p$), $Q_1$ is true and $Q_2$ is false in
$\acfg+p'$.

We still need to show that CM-containment of $Q_1$
into $Q_2$ with relations without access methods can be reduced to the
case when the only relations without access methods are monadic. Let $R$
be a relation of arity $1<k<K$ with no access methods.
We introduce a new monadic
relation $R_a$ without access methods for each attribute $a$ of $R$. This
relation contains the projection of $R$ along its attribute $a$. We then
conjunctively add to $Q_1$ all ground facts known about relation $R$. At
the same time, we add disjunctively to $Q_2$ all ground facts formed with
constants of the $R_a$ that are
known \emph{not to be true} about relation $R$ (when $K$ is a constant,
their number is polynomial). Let $Q_1'$ and $Q_2'$ be the
resulting queries. Then $Q_1$ is CM-contained in $Q_2$ (with respect to
constants and ground facts of relations without accesses) if and only if
$Q_1'$ is CM-contained in $Q_2'$ (with respect to constants and ground
facts of \emph{monadic} relations without accesses).

Now, $Q_2'$ adds disjunction, so the reduction works for positive
queries, but needs to be adapted still for conjunctive queries. We use
the same trick as in the proofs of
Proposition~\ref{prop:containment-to-relevance} and
Theorem~\ref{thm:conj-nexptime-hard},
that we briefly sketch here: we add a fixed relation $\mathit{Or}$ that
contains the truth value of the \emph{or} operator. This relation will
undergo the same processing as the other non-monadic fixed relations.
Then we add an extra Boolean place to every relation.
We replace disjunction with conjunctions, adding extra $\mathit{Or}$
predicates to compute the disjunction of
the result of the extra Boolean place of every relation in $Q_2'$ while
we keep $Q_1'$. We add facts to the configuration (that is, to $Q_1$)
with the extra place set to $0$ so
that the individual elements of $Q_2'$ still match whereas we require the
extra place in $Q_1$ to match to $1$.
\end{proof}

As can be verified, all hardness proofs for containment that we present
in this paper make use of no relation with multiple access methods, and the
arity the relations with no accesses is bounded by $3$. This means all
lower bounds for config-containment obtained in this paper 
yield identical lower bounds for
CM-containment.

Calì and Martinenghi give in~\cite{cali2008conjunctive} a proof that
the CM-containment problem is decidable in \conexptime{} for conjunctive
queries,
without any lower bound. We give in Section~\ref{sec:dep} the same upper
bound of \conexptime{} for config-containment, as well as a matching
lower bound.

By having the possibility of several access methods per relation, and of some
fixed base knowledge given by relations without accesses, we allow 
modeling of a more realistic setting, where multiple sources of the deep Web
may share the same schema, and where we want to ask queries over these
sources as well as over some fully accessible local knowledge.

\section{Independent Accesses} \label{sec:indep}

We establish in this section complexity bounds for the problem of
determining whether an access is relevant to a query, when all access
methods are \emph{independent}.
Our upper bounds will be fairly immediate -- the main work involved
is in the lower bounds.

In the case of independent accesses, we have some immediate
facts:
\begin{inparaenum}[(i)]
\item An access to a relation
that is  not mentioned in the query can not be relevant in either of our senses.
\item A path witnessing the fact that an   access is long-term relevant can always be pruned to
include only subgoals of the query, with each subgoal occurring at most once.
This gives a bound of \sigmatp{} in combined complexity
for checking long-term relevance, since checking
that the truncation of the  path  does not satisfy the query is in
$\conp$ for the considered query languages.
\item Since constants can be guessed at will, abstract domain
constraints do not have any impact on relevance for independent accesses
and we can assume all attributes to share the same domain.
\end{inparaenum}

We study the complexity of whether an access is relevant, for immediate
and long-term relevance.

\paragraph*{Immediate relevance}

The following result characterizes the combined and data complexity of $\ii$
for independent access methods.

\begin{proposition} \label{prop:iisoundind}
We assume all access methods to be independent.
Given a Boolean positive query $Q$, configuration $\acfg$,
and access $(\anacm,\abind)$, determining whether
$(\anacm,\abind)$ is immediately
relevant for the query $Q$ in $\acfg$ is a \complexity{DP}-complete
problem. If we know the query is not certain in $\acfg$, then the problem is
\complexity{NP}-complete. Lower bounds hold even if the query is
conjunctive.

If the query is fixed, the problem is in \acz.
\end{proposition}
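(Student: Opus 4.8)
The plan is to exploit the monotonicity of positive queries to turn both halves of the definition into (non)evaluation problems, and then to observe that immediate relevance is exactly the conjunction of a \conp\ condition and an \np\ condition. Since $Q$ is a Boolean positive query it is monotone, so among all instances consistent with a configuration the minimal one is the configuration itself; hence $Q$ is certain at $\acfg$ if and only if $Q(\acfg)$ is true, evaluating $Q$ on $\acfg$ viewed as an instance. Thus ``$Q$ is \emph{not} certain at $\acfg$'' is precisely non-evaluation of $Q$ on $\acfg$, a \conp\ condition. Likewise, since the access can only add tuples to $S=\sof(\anacm)$ agreeing with $\abind$ on the input attributes, an increasing response exists iff there is a finite set $\aresp$ of such $S$-tuples with $Q(\acfg\cup\aresp)$ true; by monotonicity one may take $\aresp$ to consist only of the $S$-facts witnessing a single satisfaction of $Q$. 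This is equivalent to guessing a homomorphism $h$ from $Q$ into $\acfg$ extended by ``generic'' $S$-facts: each subgoal must either map into $\acfg$, or be an $S$-subgoal whose input positions $h$ sends to $\abind$ (the output positions being free, mapped to fresh constants). This is an \np\ condition.

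For the upper bounds I then note that immediate relevance unfolds as
\[
(\text{$Q$ not certain at $\acfg$})\ \wedge\ (\exists\,\text{increasing response}),
\]
the conjunction of the \conp\ condition and the \np\ condition above, placing $\ii$ in \ddp. Under the promise that $Q$ is not certain at $\acfg$ the \conp\ conjunct is discharged and only the \np\ witness-guessing remains, giving membership in \np.

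For the matching hardness I would reduce from \textsc{Sat-Unsat} (yes iff $\phi_1$ is satisfiable and $\phi_2$ is unsatisfiable), which is \ddp-complete. Writing $X$ for the \np\ event ``some response makes $Q$ certain'' and $Y$ for ``$Q(\acfg)$ is true'', monotonicity forces $Y\Rightarrow X$, and $\ii=X\wedge\neg Y$. The target is therefore to build a single conjunctive query $Q$, a configuration $\acfg$, and an access with $Y\Leftrightarrow(\phi_2\ \text{SAT})$ and $X\Leftrightarrow(\phi_1\ \text{SAT}\ \vee\ \phi_2\ \text{SAT})$, so that
\[
\ii=X\wedge\neg Y=(\phi_1\ \text{SAT}\vee\phi_2\ \text{SAT})\wedge(\phi_2\ \text{UNSAT})=(\phi_1\ \text{SAT})\wedge(\phi_2\ \text{UNSAT}),
\]
as required. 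Each satisfiability instance is encoded as a homomorphism problem in the standard way, the $\phi_2$-gadget living on fixed relations of $\acfg$ (so that $Q(\acfg)$ detects its satisfiability) and the $\phi_1$-gadget threaded through the accessed relation $S$ (so that only added $S$-facts can witness it). The promise version's \np-hardness follows from the $\phi_1$-gadget alone.

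The main obstacle is realizing the disjunction $X\Leftrightarrow(\phi_1\vee\phi_2)$ \emph{inside a conjunctive query}, since a plain CQ cannot branch. Here I would use the paper's recurring device of coding Boolean operations in relations: add a fixed $\mathit{Or}$-style relation and an extra Boolean coordinate to every atom, so that the two satisfaction routes glue into one conjunction and are selected by the value carried in the added $S$-facts, exactly as in the proof of Proposition~\ref{prop:containment-to-relevance}; for positive queries one may instead write the disjunction directly, so the CQ case is the delicate one. Finally, for data complexity with $Q$ fixed, $Q(\acfg)$ is literally a fixed first-order sentence, hence in \acz, and ``some response makes $Q$ certain'' is, for fixed $Q$, a fixed finite disjunction over the subsets of subgoals witnessed by the access of first-order conditions on $\acfg$, again a fixed FO sentence; so the whole test lies in \acz.
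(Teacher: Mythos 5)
Your proposal is correct and follows essentially the same route as the paper: the same $\ddp = \conp \wedge \np$ decomposition (certainty at $\acfg$ collapses to evaluation $Q(\acfg)$ by monotonicity, and an increasing response reduces to guessing a homomorphism into $\acfg$ plus fresh constants in the output positions of binding-compatible subgoals), the same pair-of-instances hardness reduction arranged so that $\ii = X\wedge\neg Y$ with $X\Leftrightarrow(\phi_1\vee\phi_2)$ and $Y\Leftrightarrow\phi_2$, and the same fixed-first-order-rewriting argument for \acz. The only cosmetic difference is in realizing the disjunction inside a CQ: the paper threads a tag coordinate through every atom and joins on a unary accessed relation $R(x)$ whose content is $a$ before and possibly $b$ after the access, so the accessed relation acts purely as a gate and all clause-checking stays on fixed relations -- a detail worth making explicit in your construction, since if the $\phi_1$-checking itself lived in the accessed relation, a sufficiently rich response could satisfy the gadget regardless of $\phi_1$ and collapse $X$ to \emph{true}.
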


\begin{proof}
Let us first give some intuition of the proof.
Membership in \complexity{DP} works via guessing
a witness configuration and verifying it. One can show that
the witness need not be large, and verifying it requires
checking a conjunctive query and a negation of a conjunctive
query.
Hardness uses a coding of satisfiable/unsatisfiable pairs
of queries.

We now show that the problem is in \complexity{DP}.
We describe an \complexity{NP}
algorithm for checking that the access is $\ii$, provided that $Q$ is false
in $\acfg$. This, together with the observation that if $Q$
is certain in $\acfg$ (which can be tested in 
\complexity{NP}), the access cannot be $\ii$, gives membership in
\complexity{DP}.

Let $c$ be a fresh constant.
We guess a mapping $h$ from the variables of the query to
$\adom(\acfg) \cup \{c\}$. For a subgoal $G$ of $Q$, let $h(G)$ be the
ground fact obtained
by replacing any variable $x$ in $G$ by $h(x)$. We construct from $Q$ a
positive Boolean expression $\phi$ obtained by replacing each subgoal
$G=R(x_1,\dots,x_n)$ with:
\begin{itemize}
\item \emph{true} if $h(G)\in\acfg(R)$, or if $R=\sof(\anacm)$ and places
from $\inat(\anacm)$ in $G$ are mapped to $\abind$;
\item \emph{false} otherwise.
\end{itemize}
We check that $\phi$ evaluates to \emph{true}.
If there exists such an $h$ return true. Otherwise, return false.

Let us show that this algorithm checks that the access is $\ii$, provided
that $Q$ is false in $\acfg$. 
Suppose the algorithm returns true, and fix a witness $h$.
We extend $\acfg$ by adding $h(G)$ for every subgoal
where
the accessed relation is $R$ and the input places match the binding.
This configuration witnesses that the access
is immediately relevant.
In the other direction, if $h$ is $\ii$, there is
a set of tuples $C$ matching the schema of $R$ and the binding $\abind$
such that the query is satisfied in the extension formed
from $\acfg$ by adding $C$ to $R$. Clearly if all the values in $C$
other than those in $\adom(\acfg)$ are identified, then
$Q$ is still satisfied; hence we can assume there is only one
such value. One can check that $h$ satisfies the condition
in the algorithm, hence the algorithm returns true.

The algorithm gives a way to test immediate relevance
by evaluating 
some first-order query $Q'$ over the
configuration. For example, if $Q=\exists x\exists y\; R(x,y)\land
S(x)\land S(y)\land T(y)$ and the access is $S(0)?$ then
\begin{align*}
Q'=&\lnot Q\land \big((\exists y\; R(0,y)\land S(y)\land T(y))\lor{}\\
&(\exists x\;
R(x,0)\land S(x)\land T(0))\lor (R(0,0)\land T(0)\big)
\end{align*}
($S(x)$ is used as a shortcut notation for $\acfg(S)(x)$).
This query is potentially exponential in the size of $Q$, but for a fixed
query $Q$, this shows that immediate relevance is \acz.

We now prove \complexity{DP}-hardness. Let $Q_1$, $Q_2$ be two Boolean
conjunctive queries over disjoint schemas $\sch_1$ and $\sch_2$
and $\aninst_1$, $\aninst_2$
two database instances over, respectively, $\sch_1$ and $\sch_2$.
We reduce from the problem of determining whether $Q_1$ is not true in
$\aninst_1$ and $Q_2$ is true in $\aninst_2$. Let $\sch'_1$ and
$\sch'_2$ be the modifications of $\sch_1$ and $\sch_2$ where every
relation has an extra attribute. We take as $\sch$ the union of $\sch'_1$
and $\sch'_2$, together with an extra unary relation $R$. We assume that
the only access method available in $\sch$ is a Boolean access
method on $R$. Let $a$ and $b$
be two fresh constants. We construct a configuration $\acfg$ as follows:
\begin{itemize}
\item for each ground fact $S_1(c_1,\dots,c_n)\in I_1$, we add the tuple
$(c_1,\dots,c_n,a)$ to $\acfg(S_2)$;
\item for each ground fact $S_2(c_1,\dots,c_n)\in I_2$, we add the tuple
$(c_1,\dots,c_n,b)$ to $\acfg(S_2)$;
\item for each $k$-ary relation $S_1\in\sch_1$, we add a $k+1$-ary tuple 
$(b,\dots,b)$ to $\acfg(S_1)$;
\item for each $k$-ary relation $S_2\in\sch_2$, we add a $k+1$-ary tuple 
$(a,\dots,a)$ to $\acfg(S_2)$;
\item we add $a$ to $\acfg(R)$.
\end{itemize}
Consider the query $Q_1'(x)$ obtained from $Q_1$ 
by adding an extra variable x to each
subgoal, and similarly for $Q_2'(x)$. Then we consider the query
$Q=\exists x\; Q_1'(x)\land Q_2'(x)\land R(x)$. We claim that
$R(b)?$ is $\ii$ for $Q$ in $\acfg$ if and only if $Q_1$ is not true in
$\aninst_1$ and $Q_2$ is true in $\aninst_2$.

Clearly $R(b)?$ is $\ii$ for $Q$ in $\acfg$ if and only if $Q$ is not
certain
in $\acfg$ (but since we know that $Q_2'(a)\land R(a)$ is true in
$\acfg$, $Q$ is certain if and only if $Q_1'(a)$ is certain, i.e., $Q_1$ is
true in $\aninst_1$) and $Q$ is certain in $\acfg \cup \{R(b)\}$.
Since $Q_1'(b)$ is certain in $\acfg$,
this condition is just that $Q_2'(b)$ is certain in $\acfg + C$, i.e., $Q_2$
is true in $\aninst_2$.

In the case we know the query is not certain in $\acfg$, we obtain
\complexity{NP}-hardness by considering only $Q_2$.
\end{proof}

\paragraph*{Long-term relevance}

We now move to $\lti$ for independent methods.

\begin{example}
Consider $Q=R(x,5) \wedge S(5,z)$,  a configuration
in which only $R(3,5)$ holds, and an access method 
to $R$ on the second component. 
Clearly, an access $R(?,5)$  is not long-term relevant, since
any witness $x$ discovered in the response to this
access  could be replaced by $3$. On the other hand, if the configuration
had $R(3,6)$ then an access  $R(?,5)$ would be long-term relevant.
\end{example}

Let us first consider a simple case: that of conjunctive queries where
the accessed relation only occurs
once. In this particular case, it is possible to decide $\lti$
by checking whether the subgoal containing the accessed relation is not in a
connected component of the query that is already certain.

More formally, 
let $R=\sof(\anacm)$. Let $h$ be the (necessarily unique) partial mapping
substituting the binding for the corresponding elements of the subgoal of
the conjunctive query
$Q$ containing $R$, and $Q_h$ be the query obtained by applying $h$ to
the variables of~$Q$. If no such $h$ exists (since the subgoal conflicts
with the binding), then clearly the access is not $\lti$. Otherwise,
let $G(Q_h)$
be the graph whose vertices are the subgoals of~$Q_h$ with an edge
between subgoals if and only if they share a variable in $Q_h$. Let
$Q_h-\mathit{Sat}(\acfg)$ be the query obtained from $Q_h$ by removing
any subgoal that lies in a component of $G(Q_h)$ that is satisfied
in $\acfg$. If $Q_h-\mathit{Sat}(\acfg)$ contains the $R$
subgoal, we return true, otherwise we return false.

If this algorithm returns true, then let $g$ be the subgoal
containing $R$, $h$ the homomorphism, and $C$ be the component of
$g$ in $G(Q_h)$. If $C$ contains only $g$,
 then
clearly the access is long-term relevant, since we can consider any
path that begins with the access and then continues through every subgoal.
If $C$  contains other subgoals, then there is some variable shared between
$g$ and other subgoals; again we take a path beginning at $R$, accessing
each additional subgoal in turn (in an arbitrary order) and using
new elements as inputs while returning  elements not in $\acfg$
for all variables in the subgoals. Note that we do not have to access the other subgoals using
the shared variables -- we use an arbitrary access method
for other relations (we know at least one exists), and choose
a response such that the  results match the subgoals. 
This witnesses that the access is $\lti$.
Conversely, suppose $R$ does not occur in $Q_h-\mathit{Sat}(\acfg)$ and
that we have a path
$p$
witnessing that the access is long-term relevant. Let $h'$ be a homomorphism
from $Q$ into the $\acfg+p$, and let $h''$ be formed from $h'$ by replacing all
elements in $C$ by witnesses in $\acfg$. The existence of $h''$ proves that
the path is not a witness of the fact the access is $\lti$.

We have thus the following complexity result in this particular case of
conjunctive queries with only one occurrence of the accessed relation
(hardness is again shown via a coding argument).

\begin{proposition} \label{prop:oneaccess}
We assume all access methods to be independent.
Given a Boolean conjunctive query $Q$, a configuration $\acfg$,
and an access $(\anacm,\abind)$, such that $\sof(\anacm)$ only occurs
once in $Q$, determining whether
$(\anacm,\abind)$ is long-term
relevant for the query $Q$ in $\acfg$ is a \complexity{coNP}-complete
problem.
\end{proposition}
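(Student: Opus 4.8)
The plan is to read both bounds off the characterisation established just before the statement, which already shows that $(\anacm,\abind)$ is $\lti$ for $Q$ exactly when the partial binding map $h$ exists and the connected component of the $R$-subgoal in $G(Q_h)$ is \emph{not} certain in $\acfg$ (here $R=\sof(\anacm)$). For membership in $\conp$ I would argue about the complement: an access fails to be $\lti$ iff either $h$ does not exist---a polynomial-time check against the single $R$-subgoal---or the component of the $R$-subgoal is certain in $\acfg$. Since $\acfg$ is a finite set of ground facts and the component, after substituting $h$, is an ordinary Boolean conjunctive query, being certain in $\acfg$ coincides with being true in $\acfg$ by monotonicity, which is decided by guessing a homomorphism into $\adom(\acfg)$ and verifying it in polynomial time. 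Hence non-relevance is in $\np$ and $\lti$ is in $\conp$.

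For the matching lower bound I would reduce from the complement of Boolean conjunctive query evaluation, i.e.\ from deciding whether a \emph{connected} Boolean CQ $Q_0$ is \emph{false} in an instance $\aninst_0$; this problem is $\conp$-complete, since CQ evaluation is $\np$-complete and remains so for connected queries~\cite{chandra1977optimal} (e.g.\ graph homomorphism into a fixed target). Fix a variable $x_1$ occurring in $Q_0$, add a fresh unary relation $R$ with a single free access method, and set $Q := R(x_1)\land Q_0$, so that $R$ occurs once. Take $\acfg$ to consist of the facts of $\aninst_0$ together with $R(v)$ for every $v\in\adom(\aninst_0)$, give every relation a free independent access method, and let the tested access be the free access $R(?)$.

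Correctness then follows directly from the characterisation. The binding is empty, so $h$ is trivial and $Q_h=Q$; because $Q_0$ is connected and $x_1$ appears in it, the component of the $R$-subgoal in $G(Q)$ is all of $R(x_1)\land Q_0$. A homomorphism witnessing that this component is certain in $\acfg$ must map every variable into $\adom(\acfg)=\adom(\aninst_0)$, satisfy $Q_0$ over the $\sch$-facts of $\acfg$---which are exactly $\aninst_0$---and satisfy $R(x_1)$, which is vacuous since $\acfg(R)$ contains all of $\adom(\aninst_0)$. Thus the component is certain in $\acfg$ iff $Q_0$ is true in $\aninst_0$, and by the characterisation $R(?)$ is $\lti$ for $Q$ in $\acfg$ iff $Q_0$ is false in $\aninst_0$, which completes the reduction. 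I expect the only real obstacle to be the interplay between ``$R$ occurs once'' and the need for its component to capture all of $Q_0$: this is precisely why the reduction must start from connected queries, so that a single shared variable $x_1$ links $R$ to the whole of $Q_0$, and why the padding facts $R(v)$ are chosen so that the $R$-subgoal never constrains satisfaction, reducing component-certainty exactly to evaluation of $Q_0$ on $\aninst_0$.
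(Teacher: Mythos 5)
Your proof is correct, and the upper bound is exactly the paper's: both read \conp-membership off the component-based characterisation established just before the proposition, the only nondeterministic ingredient being the \np{} test of whether the component of the $\sof(\anacm)$-subgoal is satisfied in $\acfg$. The lower bound, however, takes a genuinely simpler route, and it is worth seeing what is traded away. In your reduction the access is $\lti$ precisely when $Q=R(x_1)\land Q_0$ is \emph{not already certain} in $\acfg$ (the padding facts $R(v)$ make the $R$-atom vacuous, so certainty of $Q$ collapses to truth of $Q_0$ in $\aninst_0$); in other words, all of the \conp-hardness is concentrated in the test ``is the query not yet satisfied?''. The paper explicitly identifies this as the trivial source of hardness (see the remark closing the preliminaries) and its proof deliberately establishes the stronger fact that the problem stays \conp-hard \emph{even when $Q$ is known not to be certain in $\acfg$}. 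To do so it introduces two auxiliary unary relations $R_1,R_2$: the subgoal $R_2(w)$, about which nothing is known, prevents $Q'$ from ever being certain in the initial configuration, while the extra tagged attribute and the constant $c$ guarantee that any witness of $Q'$ with $z=d$ can be folded back onto $z=c$ exactly when $Q$ holds in $\aninst$, so that the access $R_1(d)$ is $\lti$ iff $Q$ fails in $\aninst$. Your argument fully suffices for the proposition as literally stated (and your use of connected CQs plus the padding of $R$ is a clean way to make the single $R$-subgoal's component capture all of $Q_0$), but it would not survive the natural promise that the query is not already satisfied; for that, something like the paper's construction is needed.
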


\begin{proof}
The upper bound has been shown in the body of the paper. We know that the problem is
\conp-hard because we need to test whether $Q$ is not certain in $\acfg$,
but let us show that this \conp-hardness holds even when we know this to
be true.
We reduce from
the complement of conjunctive query satisfaction.
Let $Q$ be a Boolean conjunctive query over schema $\sch$, $\aninst$ a
database instance over $\sch$,
and $c,d$ be  new constants. Let schema $S'$ modify the original schema
$S$ by adding an additional place to each relation and including the additional
unary relations $R_1,R_2$. We assume $R_1$ and $R_2$ both have
access methods requiring an input. Let  configuration
$\acfg'$ for $S'$ be formed by adding the constant $c$ in the additional
place for every tuple seen within every relation of $\aninst$,  letting $R_1$ 
have   $c$ known to be true, and nothing known about $R_2$. Finally, let
$Q'(z)$
be formed from $Q$ by replacing every atomic formula $L(\vec x)$ with
$L(\vec x, z)$, where $z$ is a new variable, and adding additional
subgoals $R_1(z),R_2(w)$, where $w$ is another new variable.
Notice that
$Q$ holds in $\aninst$ if and only if the subquery of $Q'(c)$ without the
subgoal $R_2(w)$ is certain in $\acfg'$.
Consider an access to $R_1$ with input $d$. We claim that this
access is long-term relevant for $Q'(z)$ if and only if $Q$ is not satisfied
in $\aninst$. In one direction, if $Q$ is not satisfied in $\aninst$, consider
a path that first accesses $R_1(d)$, returning true,  then accesses
$R_2(e)$ for some $e$ returning true, followed by any path realizing
$Q'(d)$ that does not access $R_1$ or $R_2$. This path is consistent,
and it witnesses that the access is long-term
relevant. If $Q$ is satisfied in $\aninst$, then any witness
for $Q'$ with $z=d$ can be replaced with one with $z=c$, hence
the access to $R_1(d)$ cannot be $\lti$.
\end{proof}

In the case where a relation is repeated, however, 
simply looking at satisfied components is not sufficient.

\begin{example}
Consider $Q=R(x,y) \wedge R(x,5)$, an empty
configuration, and an access
to $R$ with second component $3$ (i.e., $R(?,3)$). Clearly this access is
not long-term relevant in the empty configuration,
since in fact $Q$ is equivalent to the existential closure
of $R(x,5)$, and the access can reveal nothing about such
a query. But no subgoals are realized in the configuration.
\end{example}

In the general case (repeated relations, positive queries),
we can fall back on the \sigmatp{} algorithm described at the beginning
of the section.
Surprisingly, this is the best we can do
 even in very limited situations:
\begin{proposition} \label{prop:ltisoundind}
We assume all access methods to be independent.
Given a Boolean positive query $Q$, a configuration $\acfg$,
and an access $(\anacm,\abind)$, determining whether
$(\anacm,\abind)$ is long-term
relevant for the query $Q$ in $\acfg$ is \sigmatp{}-complete.
The lower bound holds even if $Q$ is conjunctive and known not to be certain
in~$\acfg$,
and even if all relations are freely accessible and all
variables have the same infinite datatype.

If the query is fixed, the problem is \acz.
\end{proposition}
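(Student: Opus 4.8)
The upper bound is already essentially in hand. By observation~(ii) at the start of the section, any path witnessing long-term relevance can be pruned so that it realizes each subgoal of $Q$ at most once, and hence has polynomial size. I would therefore guess such a pruned path together with a homomorphism witnessing that $Q$ holds in its final configuration (an \np{} step, polynomial once the homomorphism is fixed), and then verify in \conp{} that $Q$ is \emph{not} certain in the truncation of the guessed path. An existential guess followed by a \conp{} verification is exactly the shape of \sigmatp, matching the claimed bound.

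For the lower bound the plan is to reduce from the \sigmatp-complete problem $\exists\vec x\,\forall\vec y\;\phi(\vec x,\vec y)$, exploiting the fact that the long-term relevance condition itself already has the shape ``there is a witnessing path (an~$\exists$) whose truncation fails to satisfy $Q$ (a \conp\ condition)''. I would arrange that the response to the initial access encodes the choice of $\vec x$, so that the outer $\exists\vec x$ is carried by the existential choice of the path, while the requirement ``$Q$ is false in the truncation'' is engineered to express $\forall\vec y\;\phi(\vec x,\vec y)$. Concretely, I would start from the \pitp-hardness of containment of unions of conjunctive queries~\cite{SagivY80}, turn it into \pitp-hardness of containment of a single conjunctive query using the disjunction-coding device already exploited in Proposition~\ref{prop:containment-to-relevance} (auxiliary relations holding the truth tables of the Boolean connectives, together with an extra Boolean coordinate on every relation), and finally apply the containment-to-relevance reduction of Proposition~\ref{prop:containment-to-relevance}, which maps containment to the \emph{complement} of long-term relevance; since that reduction preserves conjunctivity, this yields \sigmatp-hardness for conjunctive queries.

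The delicate part, and the one I expect to be the main obstacle, is pushing the hardness down to the heavily restricted setting asserted in the statement: all relations freely accessible, a single infinite datatype (hence no abstract-domain restrictions), and $Q$ guaranteed not certain in $\acfg$. The datatype restrictions and the ``fixed'' auxiliary relations used by the disjunction coding must be simulated by freely accessible relations whose intended truth-table content is installed \emph{along} the witnessing path rather than frozen into the schema. The observation that makes this sound is that for independent accesses the truncation is completely determined by the chosen path and reuses its responses for all non-initial accesses, so every gadget tuple added by accesses $2,\dots,n$ is present in \emph{both} the final and the truncated configuration; the only asymmetry between the two is the contribution of the initial access. A prover can thus install the correct gadget while refraining from installing ``bad'' tuples, and the \emph{repeated} occurrence of the accessed relation in $Q$ — which, as the examples preceding the statement show, is exactly what defeats the connected-component criterion of Proposition~\ref{prop:oneaccess} — is what lets the encoded universal over $\vec y$ be tested against the truncation. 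The correctness I would have to check most carefully is the ``no'' case: when the formula is false, no path can separate the final configuration from its truncation, precisely because any spurious gadget tuple a prover might add to force $Q$ in the final configuration necessarily lands in the truncation as well and so cannot create the required gap. Ensuring $\acfg$ does not already make $Q$ certain is a routine matter of starting from an essentially empty configuration.

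Finally, for the data complexity I would argue exactly as in Proposition~\ref{prop:iisoundind}: when $Q$ is fixed there are only constantly many pruned paths and constantly many ways of splitting the subgoals between the initial access and the later accesses, and for each such split the conditions ``$Q$ holds in the final configuration'' and ``$Q$ fails in the truncation'' are fixed first-order conditions over $\acfg$. Their finite Boolean combination is a single fixed first-order sentence evaluated over the configuration, which places the problem in \acz.
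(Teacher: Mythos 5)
Your upper bound and your \acz{} argument for fixed queries coincide with the paper's: guess a pruned path of size at most $|Q|$ together with a homomorphism, then verify in \conp{} that the truncation does not satisfy $Q$; and for fixed $Q$, enumerate the constantly many ways of splitting subgoals among configuration, first access, and later accesses, obtaining a fixed first-order sentence over $\acfg$. Those parts are fine.

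The lower bound is where there is a genuine gap, and it sits exactly where you flagged "the main obstacle." Your chain (containment of UCQs $\to$ containment of a single CQ via the disjunction-coding gadget $\to$ complement of \lti{} via Proposition~\ref{prop:containment-to-relevance}) essentially reproves the weaker fact, already noted in Section~\ref{sec:containment}, that \lti{} is \sigmatp-hard \emph{in the presence of fixed relations and datatype restrictions}. The proposition, however, claims hardness when \emph{all} relations are freely accessible and there is a single infinite datatype, and your repair --- installing the truth-table gadgets along the path and arguing that any spurious gadget tuple lands in both the final configuration and its truncation --- is unsound. The problem is not that a spurious tuple directly creates a gap; it is that it changes \emph{which condition} the gap certifies. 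Concretely, in the CQ gadget of Proposition~\ref{prop:containment-to-relevance} a path that adds the fact $P(0)$ lets the conjunct $Q_1''(b)\land P(b)$ be discharged by the dummy facts $R(c_1\dots c_k,0)$ with $b=0$; the final configuration then satisfies $Q''$ using only $A(1)$ and dummies, while the truncation fails $Q''$ whenever $Q_2$ can be made false --- irrespective of $Q_1$. So when $Q_1\sqsubseteq_{\acs,\acfg}Q_2$ but $Q_2$ is not always true, a polluted path still witnesses \lti, and soundness of the reduction collapses. The paper avoids this entirely by taking a different route: it invokes the critical-tuple theorem of Miklau and Suciu (\pitp-hardness of deciding that a tuple $\vec t$ is \emph{not} critical for a CQ over a single relation, with $D$ and $\vec t$ fixed) and observes that $\vec t$ is critical iff the Boolean access $R(\vec t)$ is \lti{} in the (essentially empty) configuration --- which is what delivers hardness with one freely accessible relation, one datatype, and a query trivially not certain. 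Without that connection (or some other mechanism that protects the gadget relations from being extended along the path), your construction does not establish the restricted-setting lower bound claimed in the statement.
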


\begin{proof}
The lower bound follows from results of Miklau and Suciu \cite{miklausuciu};
we explain the connection to their notion of \emph{criticality}, which
is closely related to relevance.
For a query $Q$ on a single relation $R$ and a finite domain $D$ (i.e., a
finite set of constants), a tuple $\vec t$ (with same arity as $R$)
is \emph{critical} for $Q$
if there exists an instance $I$ of $R$ with values in $D$ such
that deleting $\vec t$ from $I$ changes the value of $Q$.

\begin{quote}
\begin{theorem}[4.10 of \cite{miklausuciu}]
\label{thm:miklausuciu}
 The problem of deciding, for conjunctive query $Q$ and set $D$ whether a tuple $\vec t$
is not critical is
$\pitp$-hard, even for fixed $D$ and $\vec t$.
\end{theorem}
\end{quote}

It is easy to see that $\vec t$ is critical iff the 
Boolean access $R(\vec t)$ is $\lti$ in
the empty configuration (or, more precisely, in a configuration that only
contains constants of the queries but no facts for $R$):
this holds since $\lti$ is easily
seen to be equivalent to the existence of some
instance  of size at most $|Q|$ where adding the tuple
given by the access changes the truth value of $Q$ to true.
Hence the theorem above shows that $\lti$ is $\sigmatp$-hard even for a fixed
configuration.

\newcommand{\G}{\mathcal{G}}
The combined complexity upper bound has already been discussed. We now
discuss data complexity. We can assume without loss of generality $Q$ is
in disjunctive normal form (i.e., a union of conjunctive queries).
Let us present the \sigmatp{} algorithm slightly differently.
We make a guess for each subgoal $G$ of $Q$
of the following nondeterministic choices:
\begin{compactenum}
\item $G$ is not witnessed;
\item $G$ is witnessed by the configuration;
\item $G$ is witnessed by the first access;
\item $G$ is witnessed by a further access.
\end{compactenum}
For such a guess $h$, we write $\G^h_1$, $\G^h_2$, $\G^h_3$, $\G^h_4$ the
corresponding partition of the set of subgoals.
We restrict valid guesses to those where (i)~at least one of the disjuncts of
$Q$ has all its subgoals witnessed (i.e., in
$\G^h_2\cup\G^h_3\cup\G^h_4$);
(ii)~all subgoals in $\G^h_3$
are compatible
with $(\anacm,\abind)$ (if we want the access to also be part of the
input, we can easily encode this condition into the constructed formula).

\renewcommand{\H}{\mathcal{H}}
Let $\H$ be the set of all valid guesses (there is a fixed number of
them once the query is fixed). For a given $h\in\H$, we
rewrite $Q$ into two queries $Q'_h$ and $Q''_h$. $Q'_h$ is obtained from~$Q$
by replacing every variable mapped to an input attribute of subgoals
in $\G^3_h$ with the binding, by replacing every subgoal of
$\G^h_4$ with \emph{true}, and by dropping every disjunct of $\H$ that
has a subgoal in $\G^1_h$. $Q''_h$ is obtained from~$Q'_h$ by
replacing every subgoal in $\G^h_3$ with \emph{true}.
Then $(\anacm,\abind)$ is $\lti$ for $Q$ in $\acfg$ if and only if the
following first-order query evaluates to \emph{true} on $\acfg$:
\(
\bigvee_{h\in\H}Q''_h\land \lnot Q'_h.
\)
This query is exponential in the size of $Q$.
\end{proof}

\medskip

As shown, long-term relevance, even in
the independent case and for the relatively simple language of
conjunctive queries, is already at the second level of the polynomial
hierarchy in combined complexity. Introducing dependent accesses will
move the problem into the exponential hierarchy.

\section{Dependent Accesses} \label{sec:dep}

We now turn to the case when some of the accesses are dependent.
The results for $\ii$ are clearly the same
as in the independent case, since this only considers the impact of a single
access. We thus only discuss long-term relevance.

A naïve idea would be to show that a witness path is necessarily short,
by using the initial access to generate all constants needed to witness
the query. This, however, requires two things: the initial access needs to be
non-Boolean, and it should be possible for this access to generate
constants of all relevant domains.
This is clearly not a realistic assumption.

We deal \emph{only with  long-term relevance for Boolean accesses}.
We strongly rely for establishing the upper bound results of this section on the
connection between relevance and containment that was established in
Section~\ref{sec:containment}. Lower bound arguments will be based on
constraining paths to be exponentially or doubly exponentially long, using
reductions from tiling.

The upper bounds will make use
of methods due to Calì and Martinenghi,
which
are related
to those of Chaudhuri and Vardi \cite{cv94,cv97} for Datalog containment.
In \cite{cali2008conjunctive}, Calì and Martinenghi show that
we can assume
that counterexamples to containment of $Q$ in $Q'$ (or witnesses to long-term relevance)
are \emph{tree-like}. In the  containment setting,
 this means that every element outside the initial configuration and
the image of $Q$ under a homomorphism occurs in at most two accesses, one
as an output and possibly one as an input. Each element outside of
the configuration can be associated with an atom -- the atom that
generated that element as output.
For elements $n_1, n_2$ neither in $\acfg$ or $h(Q)$, say
that $n_1 \prec n_2$  if $n_2$  is generated by
an access to $n_1$, and let $\prec^*$ be the transitive
closure of $\prec$. Then the tree-like requirement corresponds
to the fact that $\prec^*$ is a tree.
This is exactly what  Calì and Martinenghi
refer to as a  \emph{crayfish} chase database, and 
in~\cite{cali2008conjunctive} they give the ``unfolding'' construction that
shows that such counterexample models always exist -- we use this often
throughout this section.
By exploiting the limited structure of a tree-like
database further we will be able
to  extend the upper bounds of \cite{cali2008conjunctive}, taking into account 
configuration constants and multiple accesses.

In contrast with
what happens for the independent case, results are radically different
for conjunctive and positive queries. We thus study the complexity of
long-term relevance and query containment in turn for both query
languages.

\subsection{Conjunctive Queries}

For conjunctive queries, we show we have \conexptime-completeness of
containment and \nexptime-completeness of the $\lti$ problem. We first
establish the hardness through a reduction of a tiling problem of an
exponential-size corridor that yields an exponential-size path.
Recall that the lower bound for query
containment directly implies the lower bound for relevance, thanks to
Proposition~\ref{prop:containment-to-relevance}.

\begin{theorem}
\label{thm:conj-nexptime-hard}
Boolean conjunctive query containment under access limitations
is \conexptime-hard. Consequently, long-term relevance of an access for a
conjunctive query is \nexptime-hard.
\end{theorem}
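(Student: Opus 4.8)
The plan is to encode a tiling problem for an exponential-size corridor into the containment problem, by building a schema and access methods that force any counterexample path to lay out a grid of exponential width, one row at a time, along which the tiling can be verified. The key difficulty, as the authors flag in the introduction, is the ``limited coding power'' of conjunctive queries: a CQ cannot directly express disjunction or negation, and a witness path to non-containment can in principle be long and unstructured. I would exploit the tree-like (crayfish) structure of counterexample chases guaranteed by Cal\`i and Martinenghi, which lets me reason about witnesses as trees of accesses rather than arbitrary sequences.

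\textbf{Reduction target and overall shape.} I would reduce from a standard \nexptime-complete tiling problem: given a tile set with horizontal and vertical compatibility relations and an initial tile, decide whether an exponential-width corridor (width $2^n$, arbitrary height, or a $2^n \times 2^n$ torus) admits a valid tiling. The output of the reduction is a schema $\sch$, access methods $\acs$, a configuration $\acfg$, and two Boolean CQs $Q_1, Q_2$, such that a valid tiling exists if and only if $Q_1 \not\sqsubseteq_{\acs,\acfg} Q_2$. The idea is that a reachable configuration in which $Q_1$ holds but $Q_2$ fails corresponds precisely to a consistent grid: dependent accesses are used to force the construction of successive cells, each carrying an $n$-bit address built up bit-by-bit, and $Q_1$ checks that the grid is internally coherent (addresses increment correctly, tiles are horizontally/vertically compatible, the initial tile is placed) while $Q_2$ is designed to fire exactly when some local defect is present, so that avoiding $Q_2$ forces a genuinely valid tiling.

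\textbf{Encoding addresses and the successor relation.} The central gadget is generating $n$-bit cell addresses using only dependent accesses and constant-size relations. Since the configuration starts essentially empty and dependent accesses can only reuse constants already in the active domain, I would seed the configuration with the two bit constants and a fixed relation encoding bit successor / carry logic (a $\relation{SUCC}$ relation, as the preamble's $\SUCC$ macro suggests), then force paths to thread $n$ accesses to assemble a full address. The horizontal and vertical adjacency of cells is encoded by relations ($\nexhoraddrbit$, $\nexveraddrbit$, $\sameaddressas$) whose consistency across the $2^n$-wide row is checked by the query. The hard part here is that a single conjunctive query has a \emph{fixed} number of subgoals, so it cannot directly quantify over all $2^n$ address bits or compare two arbitrary cells' full addresses; I would handle this by pushing the address-comparison and bit-increment logic into the fixed relations and the access-generation structure, so the query only ever inspects a constant-size local neighborhood, while the exponential blow-up lives in the length of the witness path.

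\textbf{Eliminating disjunction.} The genuine obstacle, and where I expect to spend the most care, is that the natural formulation of ``$Q_2$ detects any defect'' is a disjunction over defect types, whereas $Q_2$ must be a single CQ. I would use exactly the technique the authors reference in Proposition~\ref{prop:containment-to-relevance}: ``coding Boolean operations in relations.'' Concretely, I add an extra Boolean attribute to every relation and fixed relations $\mathit{Or}$ (and conjunction/equality relations) holding the truth tables of the logical connectives, then simulate the disjunction that $Q_2$ needs by conjunctively chaining these truth-table relations through the extra places, arranging that the combined Boolean place evaluates to the disjunction of the individual defect indicators. Seeding the configuration appropriately (the ``matching'' $0$-tuples that let individual disjuncts fire independently, as in the config-to-CM reduction) makes the single CQ $Q_2$ behave like the intended disjunctive test. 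Once \conexptime-hardness of containment is established this way, the \nexptime-hardness of long-term relevance follows immediately by Proposition~\ref{prop:containment-to-relevance}, which gives a polynomial-time reduction from containment of Boolean CQs to the complement of $\lti$ for a Boolean CQ.
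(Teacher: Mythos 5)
Your overall strategy matches the paper's: reduce from tiling a $2^n\times 2^n$ corridor, arrange that a reachable configuration falsifying $Q_2$ while satisfying $Q_1$ is exactly a valid tiling laid out along an exponentially long access path, and eliminate the disjunction in the ``something is wrong'' query by coding Boolean connectives as fixed truth-table relations ($\rand$, $\ror$, $\eq$) conjoined into a single CQ. The final step, transferring hardness to relevance via Proposition~\ref{prop:containment-to-relevance}, is also right. However, there are two genuine gaps. First, you have the division of labor between $Q_1$ and $Q_2$ wrong: you ask $Q_1$ to ``check that the grid is internally coherent (addresses increment correctly, tiles are compatible, the initial tile is placed).'' A polynomial-size CQ cannot positively assert global coherence of an exponential-size structure; it can only assert the existence of a bounded local pattern. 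In the paper $Q_1$ is the single atom $\tile(u,[2^n-1],[2^n-1],x,y)$ (``the last cell exists''), and \emph{all} coherence is enforced universally by $\neg Q_2$. Relatedly, you never explain the mechanism that forces a witness of non-containment to actually contain an exponentially long connected chain realizing the whole grid. The paper does this with the two chaining attributes $x,y$ of $\tile$ (the access outputs only $y$, and takes $x$ as a dependent input), plus a functional-dependency check inside $Q_2$ ruling out reuse of the seed constant; the correctness argument is then a backward induction ``the input link of the final tile must have been output by some tile whose address is its predecessor.'' Without some such chaining-and-FD gadget the reduction does not go through, and this is the hardest part of the proof, not a detail.

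Second, you have imported machinery from the wrong construction. For a $2^n$-wide corridor the cell addresses have only $n$ bits each, so there is no need to build addresses ``bit-by-bit'' through chains of $\nexveraddrbit$-style accesses, and no obstacle to a CQ comparing two cells' full addresses: the paper simply gives $\tile$ arity $2n+3$, stores the $2n$ address bits as attributes ranging over the two configuration constants $0,1$, and computes the successor test $\SUCC(\bb,\cc,\dd,\ee,s)$ with polynomially many $\rand$/$\ror$/$\eq$ atoms over those attribute variables. Your worry that a CQ ``cannot quantify over all $2^n$ address bits'' conflates this theorem with the doubly-exponential lower bound for positive queries (Theorem~\ref{thm:2nexp}), which is where the address-tree gadget and the $\nexveraddrbit$/$\sameaddressas$ predicates actually live --- and where top-level disjunction is available to enforce well-formedness of those trees. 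If you insist on bit-by-bit address construction in the CQ setting, you must additionally enforce well-formedness of every address gadget through the single conjunctive $Q_2$, and the ``seed dummy tuples with a $0$ Boolean place'' trick you borrow from Proposition~\ref{prop:containment-to-relevance} becomes delicate, because dummy $\tile$-tuples interfere with the chaining and functional-dependency logic. The paper sidesteps all of this by letting the disjunction live entirely in the Boolean output values computed from bit attributes of a fixed set of four $\tile$ atoms.
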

\begin{proof}
We show a reduction from the \nexptime-complete problem consisting in tiling a
$2^n\times 2^n$ corridor,
where $n$ is
given in unary, under horizontal and vertical
constraints~(see Section 3.2 of~\cite{johnson1990catalog}).
A tile will be represented by an access atom 
$\tile(t,\bb,\cc,x,y)$, where $\bb$ is the vertical position (i.e., the row), $\cc$ the horizontal position 
(i.e., the column), $t$ the   tile type, and $x,y$ are values that link one tile to 
the next generated tile in the witness path, as will become clear later. The $n$ bit binary representation of the decimal number $d$ is 
denoted by $[d]$.

For a given tiling problem with tile types $t_1,\ldots,t_k$, horizontal relation $H$, vertical relation $V$, and initial tiles of respective type $t_{i_0},\ldots t_{i_{m-1}}$,
we construct the following containment problem.
The schema has the following relations with their respective arities as superscripts: 
$\rboolean^1$, $\tiletype^1$, $\sametile^3$, $\horiz^3$, $\vertic^3$,
$\rand^3$, $\ror^3$, $\eq^3$, all of them having no access methods,
and  
$\tile^{2n+3}$, with a single access method whose input arguments are all
but the last.
We generate the following configuration $\acfg$:
\begin{quote}
\begin{tabbing}
$\rboolean(0), \rboolean(1);$\\
$\tiletype(t_1),\ldots,\tiletype(t_k);$\\
$\sametile(t_i,t_i,1)\text{ for }1\leq i\leq m,$\\
$\sametile(t_i,t_j,0)\text{ for }i\neq j,1\leq i,j\leq m;$\\
$\horiz(t_i,t_j,1)\text{ for all }\langle t_i,t_j\rangle\in H,$\\
$\horiz(t_i,t_j,0)\text{ for all }\langle
t_i,t_j\rangle\not\in H;$\\
$\vertic(t_i,t_j,1)\text{ for all }\langle t_i,t_j\rangle\in V,$\\
$\vertic(t_i,t_j,0)\text{ for all }\langle t_i,t_j\rangle\not\in V;$\\
$\rand(0,0,0), \rand(1,0,0),\  \rand(0,1,0),\rand(1,1,1);$\\
$\ror(0,0,0), \ror(0,1,1),\ror(1,0,1), \, \ror(1,1,1);$\\  
$\eq(0,0,1),\eq(1,0,0), \eq(0,1,0), \eq(1,1,1);$\\
$\tile(t_{i_0},[0],[0],c_0,c_1), \tile(t_{i_1},[0],[1],c_1,c_2).$
\end{tabbing}
\end{quote}
\newcommand{\B}{\mathcal{B}}
\newcommand{\T}{\mathcal{T}}
\newcommand{\C}{\mathcal{C}}
We use three domains: $\B$ (used for Booleans), $\T$ (used for tile
types), $\C$ (used for chaining up tiles). They are assigned
as follows: $\rboolean$, $\rand$, $\ror$, $\eq$ have all their
argument in $\B$; the argument of $\tiletype$ has domain
$\T$; $\sametile$, $\horiz$, and $\vertic$ have their first two arguments
of domain $\T$ and the third of domain $\B$; finally, the first argument
of $\tile$ has domain $\T$ and the remaining ones domain $\B$, except for
the last two, that are in $\C$.

We reduce to the complement of query containment of $Q_1$ into $Q_2$
where $Q_1$ is the atom \(\tile(u,[2^n-1],[2^n-1],x,y)\) and
$Q_2$ consists of the following conjunction of atoms:
\[\begin{split}
&\tile(u,\bb,\cc,x,y) \wedge \tile(v,\dd,\ee,y,z) \wedge
\tile(w,\fff,\ggg,y',z')\\&\quad \wedge\tile(q,\vv,\hh,y',z")\wedge
\BOOLCONS,\end{split}\]
where $u,v,w,q$ are variables intended for tile types, each of $\bb,
\cc,\dd,\ee,\fff, \ggg,\hh,\vv$ is a tuple of $n$ 
variables intended for Booleans, and $x,y,y',z,z',z"$ are variables intended for linking elements, and 
where $\BOOLCONS$ consists of a conjunction of  $\rand$, $\ror$, and $\req$ atoms 
imposing a number of Boolean constraints on the bit-vectors  $\bb,
\cc,\dd,\ee,\fff,\ggg, \hh, \vv$.
Since we want $\BOOLCONS$ to remain conjunctive, its construction is a
bit intricate, but in essence it states that there is ``something wrong''
with the tiling. More precisely, it consists of a conjunction of the following 
subformulas:

1.~A subformula $\SUB_1(i_1)$ such that $\SUB_1(0)$ holds true iff the functional dependency from the next-to-last argument 
of the $\tile$ relation to 
the $2n$ bit-valued attributes in the same relation is violated for some
tuple. To express this, we use the last two of the four $\tile$ atoms in 
the above formula, because they have both the same variable $y'$ in their next-to-last position. 
We require that some $f_i$ differs from the corresponding $v_i$ or some $g_j$ differs from its 
corresponding~$h_j$. We can assert this as:
\(\req(f_1,v_1,a_1)\wedge  \req(f_2,v_2,a_i)\wedge\ldots
\wedge \req(f_n,v_n,a_n)\wedge  \req(g_1,h_1,a_{n+1})
\wedge \req(g_2,h_2,a_{n+i})\wedge\ldots\wedge \req(g_n,h_n,a_{2n})
\wedge\rand(a_1,a_2,r_1)\wedge \rand(r_1,
a_3,r_2)\wedge\ldots\wedge \rand(r_{2n-2},a_{2n},i_1).\)

\smallskip
  
2.~A subformula $\SUB_2(i_2)$ such that $\SUB_2(0)$ holds true iff
two accesses  $A_1$ and $A_2$ on the $\tile$ relation, where  the output 
value of $A_1$  is equal to the value of the penultimate argument of $A_2$  
are such that their bit-vectors are in a wrong relationship. The latter just 
means that the concatenated two bit-vectors of $A_1$  are {\em not} a predecessor of 
the concatenated two bit-vectors of $A_2$.
To express this, we use the first two atoms of $Q_2$. Indeed, they are 
already linked via variable $y$. It is now just necessary to express that their
bit-vectors are wrong. To do this, we design a conjunction of atoms 
$\SUCC(\bb,\cc,\dd,\ee,s)$ for which $s=1$ iff vector $\langle \bb,\cc\rangle$   
is a numeric predecessor of $\langle \dd,\ee\rangle$, and $s=0$ otherwise. 
Then, let  $\SUB_2(i_2)= \SUCC(\bb,\cc,\dd,\ee,i_2)$.
The $\SUCC(\bb,\cc,\dd,\ee,s)$  subformula can be easily constructed by using 
purely Boolean atoms only. Briefly, we first define a
$\SUCC_i(\bb,\cc,\dd,\ee,s_i)$ formula for $1\leq i\leq 2n$  
such that $s_i=1$ iff the leading $i-1$ bits of both vectors coincide.
The $i$th bit of
$\langle \bb,\cc\rangle$ is $0$, while the $i$th bit of $\langle \dd,\ee\rangle$ is $1$, and 
all bits in positions above $i$ are $1$ in $\langle \bb,\cc\rangle$ and $0$ in $\langle \dd,\ee\rangle$.
All this is easily done with $\rand$ and $\eq$ atoms. Finally, $\SUCC$ is constructed 
by taking all $\SUCC_i$ and or-ing their $s_i$-values:
$\ror(s_1,s_2,k_1)\wedge  \ror(k_1, s_3,k_2)\wedge\ldots\wedge
\ror(k_{2n-2},s_{2n},s)$.
Note that we only need a polynomial number of atoms. 

\smallskip
  
3.~A subformula $\SUB_3(i_3)$ such that $\SUB_3(0)$ holds true iff some 
vertical or horizontal constraints are violated  or if the initial $m$ tiles are of 
wrong tile type. 
Informally, we thus need to assert in this subformula that 
there exist two tiles, such that some tiling
constraint is violated. Here we can, for example,  use the 
second and third atoms of $Q_2$.
For the horizontal constraints, we define in the obvious way subformulas that 
check that $\dd$ is the predecessor of $\fff$, that $\ee$ and  $\ggg$
are equal, and $\horiz(v,w,0)$.
The resulting truth value is or-red with a violation of the vertical constraints which is  
encoded in a similar way. To all this, we also connect disjunctively
(conjoining $\ror$ atoms) 
all possible violations of the correct tile-type of the $m$ \emph{initial tiles}.
We can use the third atom of $Q_2$ for this. 
Such a violation arises if in $\tile(w,\fff,\ggg,y',z')$ $\fff=[0]$, $\ggg=[i]<[m]$, and $w$ is 
any tile type but the correct one. This can be easily expressed using the
$\sametile$ predicate 
and Boolean operators.  

\smallskip

4.~Finally, we add to the conjunction so far $\SUB_1(i_1) \wedge \SUB_2(i_2) \wedge \SUB_3(i_3)$ a subformula $\SUB_4$ expressing that at least
 one of the bits $i_1, i_2,i_3$ must be zero: $\SUB_4 =
\rand(i_1,i_2,j)\wedge \rand(j,i_3,0)$. This concludes the construction 
 of $Q_2$.

\medskip
 
We claim that the grid is tiled iff there is an access path $p$
based on conf $\acfg$ that falsifies $Q_2$ and satisfies $Q_1$.

The only-if direction 
is quite obvious. From a correct tiling, we can easily construct a correct access path that 
starts with the two given initial tiles $\tile(t_{i_0},[0],[0],c_0,c_1)$ and $\tile(t_{i_1},[0],[1],c_1,c_2)$, and 
ends in a tile  $\tile(u,[2^n-1],[2^n-1],x,y)$, and thus satisfies $Q_1$. 
Moreover, no violation expressed by $\SUB_1(0)$ or $\SUB_2(0)$ or $\SUB_3(0)$ is present, and hence $\SUB_4$ is false over 
this access path, thus $Q_1$ is false. 

Now assume $Q_1$ is satisfied and $Q_2$ is false. 
Then, by definition of $Q_1$, $\SUB_1(0)$ and $\SUB_2(0)$ and $\SUB_3(0)$ are all false, and  
$\SUB_1(1)$ and $\SUB_2(1)$ and $\SUB_3(1)$ are true.

Given that $Q_1$ is satisfied, there exists a tile 
$\tile(u,[2^n-1],[2^n-1],x,y)$. The value $x$ must come from somewhere. Now we must consider the possibility 
that $x=c_0$, where $c_0$ is the input link value of the first tile. We would then have a fact $\tile(u,[2^n-1],[2^n-1],c_0,y)$,
which would be a ``sibling'' of the first tile $\tile(t_{i_0},[0],[0],c_0,c_1)$
which would be extremely annoying. However, this is not possible, given that this would violate the 
functional dependency from the penultimate argument to the bit-valued ones. Thus, $\SUB_1(0)$ would 
hold true, moreover, the first four atoms of $Q_2$ would be all satisfied, given that the first two 
can map to the initial facts, and the last two  to the  $\tile(u,[2^n-1],[2^n-1],c_0,y)$ and  $\tile(t_{i_0},[0],[0],c_0,c_1)$,
respectively. Hence  $Q_2$ would hold true, which is a contradiction. 

Given that here, by the semantics of access limitations, $c_0$ is the only value of its type that may not occur as an output value,  
we conclude, that the value of $x$ in  $\tile(u,[2^n-1],[2^n-1],x,y)$ must occur as the output of some tile. By the typing 
constraints and by the truth of $\SUB_2(1)$, that tile must look like  $\tile(t,[2^n-1],[2^n-2],p,x)$. By applying the 
same reasoning repeatedly, and because each access path is finite, we conclude that there exists a chain of 
$2^n$ connected accesses that connects the initial tile to a tile of the form $\tile(u,[2^n-1],[2^n-1],x,y)$ which makes $Q_1$ true. 
Note that this chain may or may not contain the second initial access
atom $\tile(t_{i_1},[0],[1],c_1,c_2)$ from $\acfg$. If it does not contain it, 
it will contain another correctly colored fact for the $\langle[0],[1]\rangle$ co-ordinates. In either case, 
due to the validity of  $\SUB_2(1)$ and $\SUB_3(1)$, this chain actually constitutes a correctly tiled grid.
\end{proof}

We now deal with upper bounds.
Chang and Li noted that for every conjunctive query (or UCQ) $Q$, and
any set of access  patterns, one can write a Monadic Datalog
query $Q_{acc}$ that represents the answers to $Q$ that
can be obtained according to the access patterns -- the intentional
predicates represent the accessible elements of each datatype, which
can be axiomatized via recursive rules corresponding to each access
method. 
One can thus show that containment of $Q$ in $Q'$ under access patterns
is reduced to containment of the  Monadic Datalog query
$Q_{acc}$ in $Q'$. Although containment between
Datalog queries is undecidable, containment of Monadic Datalog queries
is decidable (in \twoexp~\cite{cosma}) and
containment of Datalog queries in UCQs is decidable (in \threeexp~\cite{cv97}),
this does not give tight bounds for our problem.
Chaudhuri and Vardi \cite{cv97} have shown that containment of 
Monadic Datalog queries
in \emph{connected} UCQs is in
\conexptime. The queries considered there  have a head predicate with one free variable, and the connectedness
requirement is that the graph connecting atoms when they share a variable is connected -- thus the
head atom is connected to every other variable. Connectedness is a strong
condition -- it implies that in a tree-like
model one need only look for homomorphisms that lie close to the root.  

We now show a $\conexptime$  upper bound, matching our lower
bound and extending the prior results above.
From the nondeterministic polynomial-time Turing reduction
from containment to relevance
(Proposition~\ref{prop:containment-to-relevance}), we deduce
\nexptime{} membership for long-term relevance of Boolean accesses.

\begin{theorem} \label{thm:conexpmembership}
Boolean conjunctive query containment under access limitations is in
\conexptime. Long-term relevance of a Boolean access for a conjunctive query is
in \nexptime.
\end{theorem}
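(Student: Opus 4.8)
The plan is to reduce both statements to a single-exponential bound on witnesses for \emph{non}-containment. I would first dispose of the relevance claim assuming the containment bound. By Proposition~\ref{prop:relevance-to-containment-conjunctive}, long-term relevance of a Boolean access for a CQ is decided by a nondeterministic polynomial-time algorithm that guesses a subquery $Q_1'$ and makes a single oracle call asking whether $Q_1'\land Q_2\sqsubseteq_{\acs,\acfg}Q$, returning true exactly when the answer is negative. Since non-containment is the complement of a \conexptime{} problem it lies in \nexptime, and prepending a polynomial nondeterministic guess to a \nexptime{} verification stays in \nexptime; hence relevance is in \nexptime{} once containment is shown to be in \conexptime. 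It therefore remains to prove that non-containment of Boolean CQs under access limitations is in \nexptime.

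For this I would work with the \emph{crayfish} (tree-like) witnesses guaranteed by the unfolding construction of Cal\`i and Martinenghi recalled above: $Q_1\not\sqsubseteq_{\acs,\acfg}Q_2$ holds iff there is a reachable configuration, tree-like in the sense that the generation order $\prec^*$ on non-configuration elements is a tree, in which $Q_1$ holds and $Q_2$ fails. The next step is to shrink such a witness. Fix a homomorphism $h$ of $Q_1$ into the witness; the facts of $h(Q_1)$ together with all their $\prec^*$-ancestors form a sub-chase that is still reachable and still satisfies $Q_1$, and that consists of at most $|Q_1|$ root-to-leaf branches of the crayfish tree (here the crayfish property is essential: each access consumes at most one previously generated element, so each generated element has a unique parent and its support is a single path). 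Because $Q_2$ is positive, deleting all remaining facts can only preserve the failure of $Q_2$. Thus it suffices to bound the \emph{depth} $D$ of the crayfish tree, since a depth bound yields a witness of size at most $|Q_1|\cdot D$.

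The crux is to bound $D$ by a single exponential. Here I would run a pumping argument along each branch, using as the state of a node the isomorphism type of its ball of radius $|Q_2|$, enriched with the identities of the polynomially many configuration constants and with the input/output roles of the accesses. Because each connected component of the \emph{conjunctive} query $Q_2$ has diameter at most $|Q_2|$, its possible homomorphic images in a tree-like structure are confined to such balls, so whether a component of $Q_2$ can be matched near a node is determined by that node's radius-$|Q_2|$ type. If a branch is longer than the number of types, two of its nodes share a type and the segment between them can be spliced out: the splice preserves reachability and the downstream generation of the $Q_1$ leaf, and, crucially, creates no new homomorphic image of a $Q_2$-component, since any such image would be local and hence would already have occurred at a node of the same type in the unspliced chase. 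There are only singly exponentially many radius-$|Q_2|$ types (a type is a single bounded structure, not a \emph{set} of them), so $D$, and therefore the whole witness, is singly exponential. This is precisely where conjunctive queries beat positive ones: for a PQ one must instead record which \emph{sets} of disjuncts are realizable below a node, giving doubly exponentially many states and only the \twonexp/\cotwonexp{} bounds.

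Finally, the \nexptime{} algorithm for non-containment guesses a structure of this singly exponential size, then checks in exponential time that it is a legal crayfish chase reachable from $\acfg$, that $Q_1$ maps in (brute-force homomorphism search is polynomial in the structure size raised to $|Q_1|$, hence exponential), and that $Q_2$ does \emph{not} map in. I expect the main obstacle to be the pumping step: making rigorous that splicing out a repeated-type segment genuinely preserves both the reachability of the configuration and the global failure of $Q_2$, while correctly accounting for the configuration constants and for relations carrying several access methods. This is exactly the extension of the connected-UCQ argument of Chaudhuri and Vardi~\cite{cv97} on which the theorem really turns.
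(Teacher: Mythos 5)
Your treatment of the relevance half matches the paper's: apply Proposition~\ref{prop:relevance-to-containment-conjunctive} and note that a polynomial nondeterministic guess followed by a \nexptime{} test for \emph{non}-containment stays in \nexptime. For the containment half you take a genuinely different route. The paper does not prune to support chains and pump along branches; instead it proves a uniqueness lemma (Claim~\ref{claim:unique}): in a tree-like witness, each element $n$ and each atom $A$ of $Q_2$ determine a \emph{unique} maximal freshly-connected partial homomorphism rooted at $n$, so the ``type'' of an element is a single tuple of $|Q_2|$ partial maps rather than a set of realizable subqueries. It then shows (Lemmas~\ref{lem:compose} and~\ref{lem:fewfresh}) that similar elements can be identified anywhere in the tree, yielding a DAG-shaped witness with one element per similarity class --- singly exponentially many. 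Your route --- restrict to the $\prec^*$-supports of $h(Q_1)$, then splice out repeated neighbourhood types along each chain --- is closer in spirit to the Chaudhuri--Vardi argument for connected UCQs and is workable, but two points carry the real weight and must be made explicit. First, locality has to be measured in the graph where facts are adjacent only when they share a \emph{fresh} element: configuration constants occur in unboundedly many facts and would otherwise collapse all distances; this is exactly what the paper's freshly-connected notion controls, and your splice must fix configuration elements pointwise so that the relocated local matches of the pieces of $Q_2$ recombine into a single homomorphism. Second, your claim of ``singly exponentially many radius-$|Q_2|$ types'' is valid only \emph{after} the pruning: in the interior of a support chain a radius-$|Q_2|$ neighbourhood contains $O(|Q_2|)$ facts, so the number of its isomorphism types (with configuration constants named and the generating/consuming orientation recorded, which you need for the splice to glue upper half to lower half coherently) is singly exponential; in the unpruned tree such a ball contains exponentially many facts, the count becomes doubly exponential, and you would land only at the \cotwonexp{} bound of Theorem~\ref{thm:2nexp}. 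You should also handle the polynomially many points where support chains merge by pumping only branch-free segments. In exchange, your argument makes the CQ-versus-PQ gap very transparent (one bounded structure versus a set of realizable subqueries), while the paper's quotient argument absorbs the merge points and the global recombination of $Q_2$ without a separate case analysis.
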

We now outline the proof of $\conexptime$-membership for
containment under access patterns.
Consider queries $Q$, $Q'$,
and configuration $\acfg$.

An element $n$ in an instance $\aninst$ is \emph{fresh} if it is not
in the initial configuration $\acfg$.  
Call a  homomorphism $h$ of a subquery $Q''(x)$ of $Q'$ into
an instance \emph{freshly-connected} if 
(i)~the graph whose vertices are the atoms of $Q''$ and where
there is an edge between two such atoms if they overlap in a 
variable mapped to the same fresh value by $h$ is connected;
(ii)~if a variable $y$ is mapped by $h$ to a fresh constant distinct from
$h(x)$, then $Q''$ contains all atoms of $Q'$ where $y$ appears.
Given an element $n$, a partial
homomorphism of a query $Q'(x)$ is \emph{rooted at $n$} if
it maps the distinguished variable~$x$ to~$n$ and includes only
one atom containing~$x$. 

In this proof, we assume for convenience that all values of enumerated
types mentioned in the queries are in the initial configuration -- and
thus fresh values are always of a non-enumerated type. This hypothesis
can be removed, since the part of any witness to non-containment
that involves enumerated types can always be guessed, staying within
the required bounds.

In order to get an exponential-sized
witness to non-contain\-ment, we would like to abstract an element of an instance by \emph{all} subqueries
of $Q'$ that it satisfies, or even all the freshly-connected homomorphisms.
However, this would require looking at many queries, giving
a doubly exponential bound (as in Theorem \ref{thm:2nexp}).

The following key lemma states that it suffices to look at just one
freshly-connected homomorphism.

\begin{claim} \label{claim:unique}

For each 
tree-like
instance $\aninst$ and element~$n$ in $\aninst$, and for each  
query atom $A$ that maps into an $\aninst$-atom containing $n$, 
there is a unique maximal
freshly-connected partial homomorphism rooted in~$n$ that includes atom $A$ in its domain.
\end{claim}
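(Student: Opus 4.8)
The plan is to reduce the statement to a \emph{rigidity} property of tree-like instances: once the image of the seed atom $A$ is fixed, the tree structure of $\aninst$ leaves no freedom in how a freshly-connected homomorphism can be extended. First I would fix, together with $A$, the $\aninst$-atom $a$ containing $n$ onto which $A$ maps; this seed anchors the whole argument. I would then record the structural fact to be exploited, which follows directly from tree-likeness: every fresh element (one outside $\acfg$, and, modulo the separate treatment below, outside the homomorphic image of the contained query $Q$) occurs in at most two atoms of $\aninst$ --- once as an output and at most once as an input --- and the relation $\prec^*$ on fresh elements is a tree. A first easy observation is that conditions (i) and (ii) already pin down the \emph{domain} of any candidate: by closure (ii) the domain must contain every atom sharing a fresh ($\ne n$) variable with an atom already present, and by connectivity (i) it can be no larger than the resulting connected-and-closed component of $A$; thus ``maximal'' simply means ``equal to that component'', and the real content of the claim is that there is exactly \emph{one} homomorphism on it.

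For existence I would build this homomorphism by saturation seeded at $A \mapsto a$: repeatedly, whenever an atom already mapped carries a fresh value $m \ne n$ at the position of a variable $y$, add every atom $B$ of $Q'$ containing $y$ and map $B$ to the $\aninst$-atom carrying $m$ at $y$'s position. The process terminates because $Q'$ is finite, and it never gets stuck because the value $m$ was itself read off an atom of $\aninst$ already used, so at least one witnessing atom exists; acyclicity of $\prec^*$ guarantees that a value reached along two different variables of the same atom is assigned consistently, so the construction is well defined and yields a genuine partial homomorphism satisfying (i) and (ii) and rooted at $n$.

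The heart of the proof, and the step I expect to be the main obstacle, is the \emph{forcing} lemma underlying both well-definedness and uniqueness: given a fresh value $m \ne n$ and an atom $B$ containing a variable $y$ to be mapped to $m$, there is a \emph{unique} $\aninst$-atom of $B$'s relation carrying $m$ at the position of $y$. This is exactly where the ``at most two atoms'' bound is used: the relation of $B$ together with the input/output role of $y$'s position selects which of $m$'s two occurrences can match, and I would argue the two occurrences cannot both fit the same relation-and-position. Two subtleties must be handled with care: atoms containing several fresh variables simultaneously, where I would invoke acyclicity of $\prec^*$ to rule out conflicting demands on the other variables; and fresh elements that lie in the image of $Q$, for which the two-atom bound is not immediate and which I would isolate and treat as part of a bounded core, exactly as the surrounding proof guesses that part of the witness. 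With the forcing lemma in hand, uniqueness follows by induction on the distance of an atom from $A$ in the graph of condition (i): any two freshly-connected homomorphisms rooted at $n$ and agreeing on the seed $A \mapsto a$ agree, by forcing, on every atom reached by one further fresh edge, hence everywhere, and their domains therefore coincide with the component of the first paragraph. This shows the maximal freshly-connected rooted homomorphism through $A$ is unique and completes the proof.
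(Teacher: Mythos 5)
Your proposal is correct and takes essentially the same route as the paper's proof: both hinge on the observation that in a tree-like instance a fresh element occurs in at most two facts, so once the image of the seed atom is fixed the image of every further atom is forced, with freshness requirement~(ii) guaranteeing that independent extensions are compatible. You merely make explicit (saturation for existence, induction on distance from $A$ for uniqueness, and the characterization of the maximal domain) what the paper's one-step forcing argument leaves implicit.
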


\begin{proof}
Consider a function $h$ mapping variables of the
atom $A$ into $\aninst$ such that $h(x)=n$. We claim
that there is only one way to extend $h$ to a 
freshly connected
homomorphism including
other atoms. Clearly, to satisfy freshness requirement~(i),
any
other atom $A'$ must include at least one other variable in common
with $A$, say $y$, mapped to $n_2$ by
$h$. To satisfy rootedness, this variable  must not be
$x$. But in a tree-like model there can be only one
other fact $F$ in $\aninst$ that contains $n_2$, and hence for
every position $p$ of $A'$ we can only map a variable
in that position to the corresponding argument of  $F$.
Furthermore, thanks to freshness requirement~(ii),
if $h$ can also be extended with $A''$ that shares variable $y'$
with $A$, it can be extended with both $A'$ and $A''$.
\end{proof}

For elements $n_1$ and $n_2$ and atom $A$, let $h^A_1$ and $h^A_2$ be the 
maximal freshly-connected partial homomorphisms given by the claim above
for $n_1$ and $n_2$, respectively.
We say two elements $n_1$ and $n_2$ in $\aninst$ are \emph{maxfresh-equivalent} if 
for every $A$ in $Q'$ there
is an isomorphism
$r$ of $\aninst$ that preserves the initial configuration,
and such that $h^A_1 \circ r= h^A_2$.

We say that fresh elements $n_1$ and $n_2$ are \emph{similar} if 
they are maxfresh-equivalent, and if $n_1$ occurs as input variable
$i$ in atom $A$, then the same is true for $n_2$  -- recall that
fresh elements in tree-like insances occur as the inputs to at most
one atom.

We now show that the maxfresh-equivalence classes
of subtrees can be determined compositionally.

\begin{lemma} \label{lem:compose} Suppose $A$
is an atom satisfied by fresh elements $n_0, n_1 \ldots n_k$ along
with configuration elements $\vec c$ in a tree-like instance
$\aninst$, with each $n_i$ in the subtree of $n_0$.
Suppose that the same is true for fresh $n'_0, n'_1 \ldots n'_k$
and $\vec c$, with $\vec c$ in the same arguments of $A$.
If each $n_i$ is maxfresh-equivalent to
$n_i'$,
then $n_0$
is maxfresh-equivalent to $n'_0$.
\end{lemma}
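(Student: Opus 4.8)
The plan is to verify the definition of maxfresh-equivalence directly: for each query atom $B$ of $Q'$ that maps to a fact containing $n_0$, I must exhibit a configuration-preserving automorphism $r$ of $\aninst$ with $h^B_{n_0}\circ r = h^B_{n'_0}$. Fix such a $B$. Since $\aninst$ is tree-like and the atom $A$ places each of $n_1,\dots,n_k$ in the subtree of $n_0$, my first observation is that whenever the root atom of $h^B_{n_0}$ maps into $A$, the homomorphism $h^B_{n_0}$ supplied by Claim~\ref{claim:unique} is supported on the subtree of $n_0$: its root piece lands in $A$, and every further atom is forced, by freshness requirement~(ii) together with the fact that each fresh $n_i$ occurs in at most one further fact, to descend along the children $n_i$ into their subtrees. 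Thus $h^B_{n_0}$ decomposes into the root piece at $A$ and, for each child $n_i$ that the exploration reaches, a fragment lying entirely inside the subtree of $n_i$.

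The key structural step is to identify each such fragment. Applying the uniqueness in Claim~\ref{claim:unique} at $n_i$, the part of $h^B_{n_0}$ lying in the subtree of $n_i$ is itself the maximal freshly-connected homomorphism rooted at $n_i$ whose entry atom $B_i$ is the atom of $Q''$ through which the exploration first crosses from $A$ into $n_i$; that is, $h^B_{n_0}$ restricted to the subtree of $n_i$ equals $h^{B_i}_{n_i}$. The hypothesis that $n_i$ is maxfresh-equivalent to $n'_i$ then supplies, for precisely this entry atom $B_i$, a configuration-preserving automorphism $r_i$ of $\aninst$ with $h^{B_i}_{n_i}\circ r_i = h^{B_i}_{n'_i}$, matching the $n_i$-fragment of $h^B_{n_0}$ with the corresponding $n'_i$-fragment of $h^B_{n'_0}$.

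It remains to assemble a single automorphism $r$ out of this local data. Here I would use that, in a tree-like instance, the subtrees of the distinct children $n_1,\dots,n_k$ are pairwise disjoint and attach to the rest of $\aninst$ only through $A$, and likewise on the primed side. I would therefore define $r$ to send $n_0\mapsto n'_0$ and $A\mapsto A'$ --- legitimate because $A$ and $A'$ carry the same relation symbol and the same configuration constants $\vec{c}$ in the same positions --- to act on the subtree of each $n_i$ by the restriction of $r_i$, to act on the primed subtrees by the inverse maps, and to be the identity elsewhere. One then checks that $r$ is a configuration-preserving isomorphism of $\aninst$: the only facts whose images need verifying are $A$, $A'$, and the facts internal to the child subtrees, all handled by the previous step. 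By construction $h^B_{n_0}\circ r = h^B_{n'_0}$, and since $B$ was arbitrary, $n_0$ is maxfresh-equivalent to $n'_0$.

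The main obstacle is the assembly step, since the $r_i$ handed to us by the children's equivalence are \emph{global} automorphisms that may act incompatibly away from their own subtrees, so one cannot simply take their union. What rescues the construction is the rigidity of Claim~\ref{claim:unique}: a maximal freshly-connected homomorphism has no freedom in how it extends, which forces each $r_i$ to carry the subtree of $n_i$ isomorphically onto that of $n'_i$ compatibly with the forced top-level map $A\mapsto A'$, so that only the restrictions to the pairwise-disjoint child subtrees are used and they glue without conflict. Two minor points remain to dispatch: the degenerate case in which the subtrees of $n_0$ and $n'_0$ are nested rather than disjoint, reduced to the disjoint case by first relocating one of them with an automorphism; and the entry atoms $B$ that probe the structure \emph{above} $n_0$ rather than its subtree, which must be matched using the shared generating context of $n_0$ and $n'_0$ that the intended bottom-up application provides --- consistent with reading maxfresh-equivalence, as the surrounding discussion does, as recording the type of the subtree rooted at an element.
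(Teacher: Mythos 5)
Your proposal is correct and follows essentially the same route as the paper: decompose the maximal freshly-connected homomorphism at $n_0$ into the root piece on $A$ plus fragments lying in the subtrees of the children $n_i$, transfer each fragment to $n'_i$ using the hypothesis, and reassemble at $n'_0$ through the common atom carrying the same configuration constants $\vec c$. The paper's own proof is terser --- it argues only that the two subtrees satisfy the same freshly-connected subqueries and does not explicitly construct the global configuration-preserving isomorphism demanded by the definition of maxfresh-equivalence --- so your extra care in gluing the $r_i$ on pairwise-disjoint child subtrees is a welcome refinement rather than a divergence.
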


\begin{proof}
We show that the subtrees of $n_0$ and $n'_0$
satisfy the same subqueries of $Q'$.  Suppose
a freshly-connected subquery $Q''$ were to hold in $n_0$ with $h$ a witness. 
 Let $Q'(x)$ be
the query where $x$ is made free, and
let $h_i$ be the restriction of $h$
to the variables that are connected to $x$ and map
within subtrees of the $n_i$. This is a freshly-connected homomorphism, so
must also be realized in the subtree of $n'_i$ in $\aninst$. But then
we can extend the homomorphism to the subtree of $n'_0$ by mapping
$A$ according to the fact holding in common within $n'_0$ and $n_0$.
\end{proof}

We now show that one representative of
each similarlity class suffices 
for a counterexample model.

\begin{lemma} \label{lem:fewfresh} If $Q$ is not contained in $Q'$
under access patterns, then there is a witness instance $I$
in which no two elements are similar.
\end{lemma}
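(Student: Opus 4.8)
The plan is to start from a tree-like (crayfish) witness and then eliminate similar pairs by a structure-preserving collapse. By the unfolding construction of Calì and Martinenghi~\cite{cali2008conjunctive}, since $Q\not\sqsubseteq_{\acs,\acfg}Q'$ there is a \emph{tree-like} witness $\aninst$: the fresh elements form a tree under $\prec$, the query $Q$ holds in $\aninst$ via some homomorphism $h_Q$, and $Q'$ fails in $\aninst$. Among all tree-like witnesses I would fix one minimizing the number of fresh elements, and argue that in such a minimal witness no two elements can be similar; otherwise I produce a strictly smaller witness, contradicting minimality. So suppose, for contradiction, that $n_1\neq n_2$ are similar in $\aninst$.

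The reduction operation exploits the two halves of the definition of similarity. Maxfresh-equivalence supplies, for each atom $A$ of $Q'$, an isomorphism of $\aninst$ fixing $\acfg$ that aligns the maximal freshly-connected homomorphisms rooted at $n_1$ and $n_2$ (these are well defined and unique by Claim~\ref{claim:unique}); this makes the two subtrees \emph{locally interchangeable} with respect to $Q'$. The extra input-occurrence condition in the definition of ``similar'' guarantees that $n_1$ and $n_2$ play the same input role, so the (at most one) access consuming $n_2$ as input can instead consume $n_1$. I would therefore form $\aninst'$ by deleting the subtree rooted at $n_2$ and redirecting that consumer to $n_1$, then re-applying the crayfish unfolding to restore a tree shape should $n_1$ temporarily acquire two consumers. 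The operation is designed to strictly decrease a well-founded measure (for instance, lexicographically, the multiset of depths at which the distinct similarity classes first appear), so iterating it terminates in a witness with no two similar elements. Lemma~\ref{lem:compose} is what keeps this coherent: it is the congruence property ensuring that once similar subtrees are identified at one level, their parents become maxfresh-equivalent, so the collapse can be propagated upward without destroying tree-likeness.

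It then remains to check that $\aninst'$ is still a witness. Preservation of $Q$ is easy: the natural folding map $\aninst\to\aninst'$ (obtained from the local isomorphisms furnished by maxfresh-equivalence) is a homomorphism fixing $\acfg$, so $h_Q$ composed with it witnesses $Q(\aninst')$. Reachability of $\aninst'$ follows because it is again tree-like with each fresh element output of exactly one access and input to at most one, the discipline preserved precisely by the ``similar'' refinement. The delicate point is that $Q'$ must still \emph{fail} in $\aninst'$. I would argue this by contradiction: given a homomorphism $g\colon Q'\to\aninst'$, decompose $g$ into freshly-connected partial homomorphisms, each rooted at a single element and, by Claim~\ref{claim:unique}, uniquely determined and confined to a single subtree of $\aninst'$. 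Each such piece lies in a subtree that is isomorphic over $\acfg$ to a subtree of $\aninst$ (by maxfresh-equivalence), so it lifts to a freshly-connected homomorphism into $\aninst$; since distinct pieces overlap only on the rigidly preserved configuration elements, the lifts reassemble into a homomorphism $Q'\to\aninst$, contradicting $Q'(\aninst)$ false.

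The main obstacle is exactly this last step, the preservation of $\lnot Q'$ under the collapse. An arbitrary homomorphism of $Q'$ into the folded instance can spread across several subtrees and so cannot be lifted wholesale; the decomposition into freshly-connected pieces is what localizes it to individual subtrees where maxfresh-equivalence applies, and the fact that these pieces meet only in the configuration is what lets the lifts be recombined. This is also the reason maxfresh-equivalence is defined through genuine isomorphisms of $\aninst$ rather than a purely local type. A secondary difficulty is keeping the collapsed instance tree-like and reachable, which is what forces ``similar'' to strengthen maxfresh-equivalence with the input-occurrence condition; I expect the bookkeeping for the unfolding that restores tree shape, and the choice of a measure that provably decreases, to require the most care.
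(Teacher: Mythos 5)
Your overall strategy is the paper's: take a tree-like witness, repeatedly identify a pair of similar elements, use Lemma~\ref{lem:compose} as the congruence that lets the identification propagate, and observe that preservation of $Q$, of well-formedness, and of a decreasing measure are the easy parts while preservation of $\lnot Q'$ is the crux. You have also correctly isolated where the danger lies. But the one-sentence resolution you offer for that crux does not go through as stated. You decompose a putative homomorphism $g\colon Q'\to\aninst'$ into freshly-connected pieces and assert that each piece ``lies in a subtree that is isomorphic over $\acfg$ to a subtree of $\aninst$,'' so that it lifts. This fails precisely for the piece whose image contains the identified node $n'$: after the collapse, the neighbourhood of $n'$ is a hybrid, with the material above $n'$ inherited from $n_1$'s context and the material below $n'$ transplanted from $n_2$'s subtree, and no single subtree of $\aninst$ contains an isomorphic copy of that hybrid region. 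So the straddling piece cannot be lifted by the argument you give. Closing this is exactly what the machinery of Claim~\ref{claim:unique} and the definition of maxfresh-equivalence are for: the maximal freshly-connected partial homomorphism rooted at $n'$ agrees (up to the isomorphism $r$ supplied by maxfresh-equivalence) with the one rooted at $n_1$ in $\aninst$, so the straddling piece can be re-rooted wholesale at $n_1$, while the remaining atoms are sent either into the isomorphic copy of $n_2$'s subtree or left fixed; this case split on atoms outside the rooted homomorphism is the content of the map $f$ in the paper's proof and cannot be replaced by a per-piece ``isomorphic subtree'' claim.

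Two smaller points. First, your collapse operation is internally inconsistent: you cannot both ``delete the subtree rooted at $n_2$'' and ``redirect the access consuming $n_2$ to $n_1$,'' since that consumer and its outputs are inside the deleted subtree; the paper instead keeps $n_2$'s subtree, splices it in at $n_1$'s position, and deletes the elements below $n_1$ that are not below $n_2$, which also guarantees (with the standing convention that similar elements lie outside $h_Q(Q)$) that the image of $Q$ is untouched rather than relying on a folding map that would have to send deleted elements somewhere. Second, your appeal to Claim~\ref{claim:unique} to conclude that each freshly-connected piece is ``confined to a single subtree of $\aninst'$'' conflates subtrees of the collapsed instance with subtrees of the original one; the claim gives uniqueness of the maximal rooted extension, not containment in a pre-image subtree.
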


\begin{proof}
Consider an arbitrary tree-like instance $I$ satisfying the access patterns,
such that $I$ satisfies $Q$ and not $Q'$. Given distinct $n_1$ and $n_2$ that are similar.
we show that they can be identified, possibly shrinking the model. By Lemma \ref{lem:compose}
this identification preserves the similarity type, so assuming
we have proven this, we can get a single representative by induction.

Consider  the case where $n_1$ is an ancestor of $n_2$ (the case
where the two are incomparable is similar).
Consider the model $I'$ obtained by identifying $n_1$ and $n_2$, removing all
items  that lie below $n_1$ and which do not lie below $n_2$ in
the dependency graph. Let $n'$ denote the image of $n_1$ under the identification in $I'$.
$I'$ is still generated by a well-formed access path,
while $Q$ is still satisfied, since the homomorphic image of $Q$
was not modified. 
If $Q'$ were satisfied in $I'$, let $h'$ be a homomorphism witnessing this.
Clearly the image of $h'$ must include  $n'$. Let $h_{n'}$ be the maximal connected
subquery of $Q'$ that maps to a contiguous subtree rooted at $n'$.
Then up to isomorphism $h_{n'}$  is the same as the maximal fresh homomorphism
rooted at  $n_2$;
and since $n_1$ and $n_2$ are max-fresh equivalent, this
means that  $h_{n'}$ is (modulo composition with an isomorphism) the same
as $h_{n_1}$ the maximal fresh homomorphism rooted at $n_1$. 

Let $\mathit{IM}$ be  the image of $h'$. We define a mapping $f$ taking elements of $\mathit{IM}$ to $I$ as follows: nodes
in the image of $h_{n'}$ go to their isomorphic image in the image of $h_{n_1}$;
other nodes lying in both $\mathit{IM}$ and the subtree of $n'$ go to their isomorphic
image in the subtree of $n_2$, while nodes lying outside the subtree of $n'$ are
mapped to the identity. We argue that the composition of $h'$ with $f$ gives a homomorphism
of $Q'$ into $I$. Atoms all of whose elements
are in the domain of $h_{n'}$ are preserved since $n'$ and $n_1$ are max-fresh equivalent.
The properties of maximal equivalence classes guarantee
that for all other  atoms  of $Q'$ either: a) all variables that are mapped by $h'$ to  fresh
elements are mapped to elements in the subtree  of $n'$ outside of $h_{n'}$; hence such atoms
are preserved by $f$; or b) all variables that are mapped by $h'$ to fresh elements
are mapped to elements either in $h_{n'}$ or above it; such atoms are preserved, since
$f$ is an isomorphism on these elements.
Thus $Q'$ holds in $I$, a contradiction.
\end{proof}

From Lemma \ref{lem:fewfresh} we see that whenever there is  a counterexample
to containment, there is a DAG-shaped model of size the number of similarity
classes -- since this is exponential in the input, an access path that
generates it can be guessed by a \nexptime{} algorithm, and the verification
that it is a well-formed access path and witnesses $Q \wedge \neg Q'$ can
be done in polynomial time in the size of the path (albeit in $D^P$ in the size
of the queries).

\subsection{Positive Queries}
We now turn to the case where queries can use nesting
of $\vee$ and $\wedge$, but no negation. Here we will
see that the complexity of the problems becomes exponentially
harder. The upper bounds will be via a type-abstraction mechanism;
the lower bounds will again go via tiling problems, although
of a more involved sort.

\begin{theorem} \label{thm:2nexp} The problem of determining whether
query $Q$ is contained in $Q'$ under access constraints
is complete for $\cotwonexp$,  while
determining whether a Boolean access is $\lti$ for
a positive query $Q$ is complete for $\twonexp$.
\end{theorem}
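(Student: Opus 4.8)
The plan is to reduce the two halves of the statement to a single problem and then treat upper and lower bounds separately. By Proposition~\ref{prop:containment-to-relevance}, config-containment of Boolean PQs reduces to the complement of long-term relevance, and by Proposition~\ref{prop:reltoconpos}, long-term relevance of a Boolean access for a Boolean PQ reduces to the complement of config-containment. Taking complements, these two many-one reductions show that containment and the complement of $\lti$ are polynomially interreducible, hence of identical complexity. Consequently it suffices to prove that containment is \cotwonexp-complete: membership then gives the \twonexp{} upper bound for $\lti$, and \cotwonexp-hardness of containment gives \twonexp-hardness of $\lti$. I will therefore establish (a)~membership of non-containment in \twonexp{} and (b)~\twonexp-hardness of non-containment via tiling.

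For the upper bound I would reuse the tree-like (crayfish) structure of counterexample models of Cal\`i and Martinenghi, exactly as in the conjunctive case. The essential new difficulty is that Claim~\ref{claim:unique} fails for PQs: because of nested disjunction, a single maximal freshly-connected homomorphism no longer captures everything an element contributes, so many incomparable freshly-connected homomorphisms can matter simultaneously. The fix is a richer \emph{type abstraction}: type each fresh element by the full set of isomorphism classes of freshly-connected partial homomorphisms of subqueries of $Q'$ rooted at it. Since $Q'$ has at most exponentially many subqueries, there are at most \emph{doubly} exponentially many such types. I would then prove a compositionality lemma in the spirit of Lemma~\ref{lem:compose} (the type of a node is determined by the atom above it together with the multiset of types of its children), followed by an identification argument like Lemma~\ref{lem:fewfresh} collapsing elements of equal type. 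This bounds a counterexample model to doubly-exponential size, so a \twonexp{} algorithm can guess a DAG-shaped model together with an access path generating it, and verify in time polynomial in the model that $Q$ holds and $Q'$ fails.

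For the lower bound I would reduce from tiling a $2^{2^n}\times 2^{2^n}$ corridor, which is \twonexp-complete, forcing a doubly-exponentially long witness path rather than the exponential one used in Theorem~\ref{thm:conj-nexptime-hard}. The obstruction is that grid addresses now carry $2^n$ bits, far too many to name by individual variables in a polynomial query. I would represent each grid cell's address by a block of $2^n$ ``bit tuples,'' each carrying an $n$-bit \emph{index} naming the bit position together with a single bit value; the inner $n$-bit indices and the chaining between tuples are enforced by the conjunctive grid machinery of the previous proof (the $\SUCC$, $\rand$, $\ror$, $\eq$ gadgetry, with disjunctions pushed into relations via the $\mathit{Or}$-table trick). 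The disjunctive power of PQs is exactly what lets me express, with a polynomial formula, the global tests on $2^n$-bit addresses—equality, successor, and adjacency—by asserting the \emph{existence of a bit position} witnessing a discrepancy, an existential-over-positions statement that conjunctive queries could not state succinctly. As before, $Q_2$ is a disjunction of violation detectors (functional-dependency violation, wrong successor relationship, adjacency and initial-tile violations) lifted to this two-level address scheme, $Q_1$ asserts that the final corner tile exists, and a correct tiling corresponds exactly to a path satisfying $Q_1$ and falsifying $Q_2$.

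The main obstacle is the lower bound, and specifically making two requirements coexist while staying positive: first, the doubly-exponential chain must be genuinely \emph{forced}, so each macro-cell has a unique well-formed successor and no shortcut can reuse a configuration seed value—this needs the analogue of the functional-dependency/$c_0$ argument from Theorem~\ref{thm:conj-nexptime-hard}, now at both address levels; second, the successor and equality tests on $2^n$-bit addresses must remain expressible by a polynomial positive query, which hinges on using disjunction to ``guess'' the distinguishing bit position. Reconciling the two-level encoding with these constraints, and verifying that a correct tiling is recoverable from any path satisfying $Q_1 \wedge \neg Q_2$, is the delicate engineering; by contrast the upper bound is a comparatively direct lifting of the conjunctive argument, replacing a single maximal homomorphism by a set of them and absorbing the resulting exponential blow-up in the type space.
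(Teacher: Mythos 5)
Your proposal is correct and follows essentially the same route as the paper: the Section~\ref{sec:containment} reductions collapse the two claims into \cotwonexp-completeness of containment, the upper bound abstracts fresh elements by the (doubly-exponentially many) sets of subqueries/rooted homomorphisms they realize and collapses equal-type elements to get a doubly-exponential tree-like witness, and the lower bound tiles a $2^{2^n}\times 2^{2^n}$ corridor using $2^n$-bit addresses whose bit positions carry $n$-bit inner names, with disjunction used to guess the discrepant position in the successor/equality tests. The only real divergence is cosmetic: you encode each long address as a flat block of $2^n$ indexed bit-tuples, whereas the paper grows a depth-$n$ binary tree of $\nexveraddrbit$ accesses whose leaves are the bits (with $\isfirstzero$ marking the carry position), and compares addresses via $\sameaddressas$ along tree paths rather than via index equality.
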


\begin{proof}
Again, results about relevance are derived by the reductions of
Section~\ref{sec:containment}.

Briefly, the upper bound holds by showing that if there is
a counterexample to containment then there is one of doubly-exponential
size -- this is in turn shown by seeing that we can identify two
elements if they satisfy the same queries of size at most the
maximum of $|Q|, |Q'|$.
Hardness follows from 
reducing to the problem of tiling a corridor of width
and height doubly-exponential in $n$.
By choosing $Q'$ appropriately, we can force the model
to consist of a sequence of linked elements each of which lies at 
the root of a tree of polynomial size. Such a tree can spell out
a string of exponentially many bits, and we can further ensure
that successive occurrences trees encode a description of a doubly-exponential sized
tiling.

We first sketch the $\cotwonexp$ upper bound for containment.
We assume that the queries are connected: the full proof
can be reduced to this case, via considering connected components.

An \emph{instantiated subquery} for $Q,Q',\acfg$ is any
conjunctive query with one free variable
$x$, constants from $\acfg$, and size
at most the maximum of $|Q'|, |Q|$.

We associate with each element in an instance
the collection of instantiated subqueries  that it
satisfies, and refer to this as the type
of the  element. Note that the number
of subqueries is exponential in the
parameters, and hence the number of types
is  at most doubly exponential.

We claim that if there is an instance  $\aninst$ that
satisfies $Q \wedge \neg Q'$, there is one in which
at most exponentially elements share the same type. From this, we have
a doubly-exponential bound on the size of the model, and hence could
simply guess it, along with the accesses
that generated it, and verify that it is well-formed according
to the access patterns.

Consider an arbitrary instance $\aninst$ satisfying
$Q \wedge \neg Q'$, and let $h$ be a homomorphism
of $Q$ onto~$\aninst$. 
We can assume that the instance is tree-like as explained
in the beginning of the section. Thus
any element outside of the initial configuration and
the image of $h$ occurs
as an input to at most one access. 

Now suppose there are
two elements $n_1, n_2$, not in the image of $h$
and  having the same type, that are further than
$|Q'|$ apart in the tree.
If $n_1 \prec^* n_2$, and there are no
elements of $h(Q)$ lying strictly between them in $\prec^*$
then we
can eliminate $n_2$, replacing $n_2$ with
$n_1$ in any atom, and removing
elements lying between  $n_2$ and $n_1$ within $\prec^*$,
along with their $\prec^*$-descendants.
The $|Q'|$-type of all surviving elements are unaffected
by this replacement, and hence the resulting instance
cannot satisfy $Q'$.

If $n_1$ and $n_2$ are incomparable, and again
at distance at least $|Q'|$, then we can merge
either $n_2$ with $n_1$ or vice versa.

We can repeat this process until all elements of the tree 
with the same type are within distance $|Q'|$ of each
other, or within $|Q'|$ of the initial configuration of
$h(Q)$. Thus there will be at most exponentially many
elements with the same type, leading to an instance
of at most doubly exponential size.

We now turn to hardness, which follows from 
reducing to the problem of tiling a corridor of width
and height doubly-exponential in $n$ using
tiles having $r$ colors. 
Fix $n,r$ and vertical and horizontal constraints on
$r$ tiles.

We have a signature:
\begin{itemize}
\item $\nextcell(x, h,v,co,w)$ with $x, h, v, co $ the input variables;

\item $\nexveraddrbit(x,y,i,z)$ with $x,y,i$ input variables,
and similarly $\nexhoraddrbit(x,y,i,z)$;
\item $\iszero(c,x)$ with $c$ an input place;
\item $\isfirstzero(c,x)$ $c$ an input place;
\item $\isone(c,x)$ with $c$ an input place.
\end{itemize}

We are interested in models of a special form.
They will have a chain of elements $e_{i}: i< f$ with each  $e_{i+1}$ generated
by an access to
$\nextcell$ with input $(e_i, h, co, v)$ where:

\begin{itemize}

\item $co$ takes a value in $1 \ldots r$;

\item
corresponding to both $v$  and $h$ we  have
a tree of elements $a_\sigma: \sigma \in 2^{<n}$
where $v$ (resp., $h$) is generated by an access of the form
$\nexveraddrbit(a_1,a_0,0)$
and $a_\sigma$ for $\sigma \in 2^i$, $i<n$ is generated by an access to
$\nexveraddrbit(a_{\sigma^1}, a_{\sigma ^0},i)$.
\end{itemize}

Finally, $a_\sigma$ with $\sigma \in 2^n$ is generated
by an access to either
$\iszero(c)$, $\isfirstzero(c)$ or $\isone(c)$ with $c$ in the initial
configuration. We further require that  a call
to $\isfirstzero$ generates $a_{\sigma}$ for at most
one $\sigma \in 2^n$, and that for every $\sigma'$ representing
a binary number below $\sigma$, $a_{\sigma'}$ is generated
by a call to $\isone$.

Thus a model of this form consists of elements $e$ that are chained
together, with the predecessor of an element being the first
argument of the access that produced it. The properties
above guarantee that
$e$ is associated
with:
\begin{itemize}
\item a color, corresponding to the integer $co$ that
was used to produce $e$;
\item
a tree of $2^n$ bits  -- the frontier of elements lying
$n$ places  below the elements $v$  and $h$
used in the access producing $e$.
The value of the address corresponding to $n$-bit binary string
$\sigma$ is zero if the access generating
the element $a_{\sigma}$ was either $\isfirstzero$ or $\iszero$, and is $\isone$
otherwise. The address marked with $\isfirstzero$ will represent the
first $0$ bit -- it is useful to have this explicitly marked in order to define a successor function.
\end{itemize}

We call such a model \emph{grid-like}.
Query $Q'$ will have axioms which guarantee that a model
satisfying $\neg Q'$ is
grid-like. Most of these are straightforward, so we mention
only the axioms that deal with the $\isfirstzero$ predicate.

Given an element $e$ in the tree, we refer to a bit of $e$
as one of the elements $x$ lying $n$ accesses below the item $v$
that produced $e$.

We can write formulas:

\begin{itemize}
\item $\isbit(z,x)$ $z$ is the leaf of a bit tree for $x$;

\item $\notsamebit(z_1,z_2, x)$ stating that $z_1$ and $z_2$ are both leafs
of the bit tree of x and they are distinct leaves.
One states this by stating that the paths from $x$ to $z_1$ and
$z_2$ diverge at some point;

\item $\belowbit(z_1,z_2,x)$ stating that $z_1$ and $z_2$ are both bits of $x$ and the address
of $z_1$ comes before the address of~$z_2$;

\item $\sucbit(z_1, z_2 , x)$  stating that $z_1$ and $z_2$ are both bits of
$x$ and the address
of $z_1$ comes immediately before the address of~$z_2$. We do
this via the usual way of defining a successor function on
$n$-bit integers using unions of conjunctive queries.
\end{itemize}

Using the above macros
we will write  axioms of $Q'$ stating (i.e., forbidding in a non-contained instance):

\begin{itemize}
\item $\isfirstzero(z) \wedge \isone(z)$ (first zero must be a zero bit)

\item $\isbit(z,x) \wedge \isfirstzero(z) \wedge \isbit(z_1,x) \wedge
\isfirstzero(z_1) \wedge \notsamebit(z_1,z_2,x)$ (only one first zero)

\item $\belowbit(z_1, z_2, x) \wedge \isfirstzero(z_2) \wedge \iszero(z_1)$
(there is a zero bit below the first zero)
\end{itemize}

These will enforce the semantics of the $\isfirstzero$ predicate.

We will now want to force the grid to be of very large size.

A grid-like model is a \emph{long corridor} if as we traverse the chain of elements $e_i$ the horizontal bits cycle repeatedly through $1$ to $2^{2^n}$ and
at the end of each cycle the vertical bits increase from $0$ to a maximum of $2^{2^n}$.

We claim that we can write axioms of $Q'$ whose
negation guarantees that a model is a long corridor.

We will describe how to enforce this for the
vertical bits, with the horizontal bits done similarly. 
Formally, what we want to do is write a positive
query $\notsucc(x,y)$ that is the complement of the successor relation.
Furthermore, if we can do this, we can get $Q'$ to enforce things about
the tiling by having $Q'$ refer to $\notsucc$ at the appropriate points.

For example $Q'$ will enforce the vertical constraint on the tiles
by stating $\notsucc(x,y) \vee \bigvee_{i\leq r} \color_i(x) \wedge \color_j(y)$  
where $\color_i(x)$ is a formula
stating that $x$ is a bit of a tiling element
with  
a certain color $i \leq r$ and $i$ and $j$ range over compatible colors.

So the question is how to define $\notsucc$ with a positive query.

We define a predicate
$\sameaddressas(b,b')$  which holds of two elements in a grid-like model
iff they are both ``bits in trees'' and they are both the same bit --
i.e., they
represent the same path down a tree.
Let $\correct(x_i,y_i,x_{i-1},y_{i-1})$ abbreviate the formula:
\[
\begin{split}
&((\nexveraddrbit(x_i,u_i,z_i, x_{i-1}) \wedge  \nexveraddrbit(y_i,v_i, z_i, y_{i-1})\\
&\vee\\  
&((\nexveraddrbit(u'_i, x_i, z_i, x_{i-1}) \wedge
\nexveraddrbit(v'_i,v_i, z_i, y_{i-1})).
\end{split}\]

Now $\sameaddressas(x_1,y_1,b,b')$ is just the conjunction:
\[\begin{split}
&\correct(x_2,y_2,x_1,y_1) \wedge {}\\
&\correct(x_3,y_3,x_2,y_2)  \wedge {}\\
&\quad\ldots \\
&\correct(x_i,y_i,x_{i-1},y_{i-1}) \wedge {}\\
&\quad\ldots \\
&\correct(b,b',x_{n-1},y_{n-1}).
\end{split}
\]

Using $\sameaddressas$, we can express $\notsucc(x,y)$, e.g., by asserting
the disjunction of the three formulas:

\begin{enumerate}
\item \(\left\{
\begin{aligned}
&\isfirstzero(z,x) \wedge
\belowbit(z_1,z) \wedge \isbit(z_2,y)\\& \wedge
\sameaddressas(z_2,z_1)
\wedge \isone(z_1)
\end{aligned}\right.\)
\item \(\left\{\begin{aligned}&\isfirstzero(z,x) \wedge \isbit(z_2,y) \wedge
\sameaddressas(z_2,z) \\&\wedge \iszero(z,x) \vee
\isfirstzero(z,x)\end{aligned}\right.\)
\item \(
\left\{\begin{aligned}
&\isbit(z,x) \wedge \isbit(z',y) \wedge \sameaddressas(z,z')
\\&\wedge ((\iszero(z) \vee \isfirstzero(z)) \wedge \isone(z'))\\&\vee
((\iszero(z') \vee \isfirstzero(z')) \wedge
\isone(z))\end{aligned}\right.\) \qedhere
\end{enumerate}
\end{proof}

We note that in terms of data complexity, we still have tractability,
even in this very general case:

\begin{proposition} When the queries are fixed, the complexity 
of containment is in polynomial time. Similarly,
the complexity of $\lti$ is in polynomial time once
the query is fixed.
\end{proposition}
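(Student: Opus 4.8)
The plan is to establish both claims—polynomial-time data complexity for containment and for $\lti$—by showing that once the queries are fixed, the witness objects we need to inspect have size polynomial in the configuration, and can be verified in polynomial time. Since the queries are fixed, the parameters $|Q|$ and $|Q'|$ that drove the exponential and doubly-exponential bounds in Theorems~\ref{thm:conexpmembership} and~\ref{thm:2nexp} become constants. I would therefore revisit the counterexample-shrinking arguments of those proofs and observe that they yield a model whose size is governed by the number of equivalence classes of elements under the relevant type (maxfresh-equivalence for CQs, the ``$|Q'|$-type'' for PQs). When $Q$ and $Q'$ are fixed, the number of distinct types is a constant, and the distance threshold $|Q'|$ used to merge or eliminate elements is also a constant, so the shrinking argument collapses a counterexample model to one of size bounded by a \emph{constant} number of fresh elements together with the configuration constants it references.

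\textbf{For containment,} I would argue as follows. By the tree-like (crayfish) normal form of Cal\`i and Martinenghi, any counterexample to $Q_1 \sqsubseteq_{\acs,\acfg} Q_2$ may be taken tree-like. By Lemma~\ref{lem:fewfresh} (CQ case) or the incomparable/ancestor merging argument in the proof of Theorem~\ref{thm:2nexp} (PQ case), we may further assume no two fresh elements are ``similar'' (resp. share a type and are far apart). With the queries fixed, the number of similarity classes (resp. types) is a constant, so the number of fresh elements in the shrunk witness is bounded by a constant times a polynomial in $|\acfg|$—in fact the fresh part is of constant size, and the only growth with the input comes from configuration constants appearing in the query image $h(Q_1)$. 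Hence a witness to \emph{non}-containment, if one exists, has polynomial size; we can enumerate candidate witnesses in polynomial time (their number is polynomial since their structure is bounded) and for each check in polynomial time that it is generated by a well-formed access path and that it satisfies $Q_1 \wedge \neg Q_2$ (query evaluation for fixed queries is polynomial, indeed $\acz$). Containment holds iff no such witness is found; this is a $\conp$-style search over a polynomially bounded space, but since the witness itself is of polynomial size and the search over constant-sized fresh structure is deterministic, the whole procedure runs in polynomial time.

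\textbf{For $\lti$,} I would invoke the reductions of Section~\ref{sec:containment}: Proposition~\ref{prop:reltoconpos} reduces $\lti$ for positive queries to the complement of containment, and Proposition~\ref{prop:relevance-to-containment-conjunctive} reduces $\lti$ for CQs to containment via a nondeterministic polynomial-time Turing reduction with a containment oracle. Crucially, these reductions are query-to-query transformations that keep the \emph{query} fixed-size whenever the input query is fixed (the added relation $\isbind$ and the rewriting of $Q$ into $Q'$ increase the query by a constant amount), so a fixed input query yields fixed queries fed to the containment subroutine. The nondeterministic guess in Proposition~\ref{prop:relevance-to-containment-conjunctive} is a guess of a subset $Q_1'$ of a \emph{fixed} set of subgoals $Q_1$, hence there are only constantly many guesses; we can enumerate them deterministically and call the (now polynomial-time) containment procedure on each. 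Thus $\lti$ inherits the polynomial-time data complexity from containment.

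\textbf{The main obstacle} I expect is making the shrinking arguments of Lemma~\ref{lem:fewfresh} and the PQ type-merging step yield a genuinely \emph{polynomial} (not merely ``constant fresh elements but uncontrolled configuration reuse'') bound, and confirming that verifying well-formedness of the generating access path stays polynomial: one must check that every fresh element is produced by a legal access whose inputs lie in the active domain generated so far, and that dependent-access constraints are respected along the whole path. With fixed queries the fresh skeleton is of bounded size, but the access path may need to reference configuration constants, so I would carefully confirm that the path length—and the bindings it uses—remain polynomial in $|\acfg|$, relying on the fact that each fresh element occurs in at most two accesses in a tree-like model and that the number of fresh elements is constant.
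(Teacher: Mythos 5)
Your proposal is correct and follows essentially the same route as the paper: the paper likewise observes that once $Q$ and $Q'$ are fixed, the counterexample-shrinking argument of Theorem~\ref{thm:2nexp} leaves only a constant number of elements outside the configuration, so the candidate access sequences are polynomially many in $|\acfg|$ and each can be verified in polynomial time, with relevance then obtained from containment via the reductions of Section~\ref{sec:containment}. If anything, your treatment of the $\lti$ half is the more careful one: you invoke Propositions~\ref{prop:reltoconpos} and~\ref{prop:relevance-to-containment-conjunctive} (noting that the nondeterministic guess in the latter ranges over a constant-size set once the query is fixed), which is the correct direction for transferring the upper bound to relevance, whereas the paper's proof cites Proposition~\ref{prop:containment-to-relevance}, the reduction in the opposite direction.
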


\begin{proof}
Again we give the argument
only for containment and use
Proposition~\ref{prop:containment-to-relevance} to conclude for relevance
(note the remark in Proposition~\ref{prop:containment-to-relevance}
about configurations being the same).
In the $\cotwonexp$ membership argument for
containment in Theorem \ref{thm:2nexp} we have shown
a witness instance in which the elements
consist of $k(Q,Q')$ elements of each type, where
the number of types is $l(Q,Q')$, hence with a constant
number of elements outside the configuration once $Q$ and
$Q'$ are fixed. The number of possible access sequences 
 is thus polynomial in the configuration, and verifying
that a sequence is well-formed and satisfies $Q \wedge \neg Q'$
can be done in polynomial time since $Q$ and $Q'$ are fixed.
\end{proof}

\section{Relations of Small Arity}
\label{sec:psp}

The argument for $\conexptime$-hardness of  containment made
heavy use of accesses with multiple inputs, in order to generate
large tree-like models.
We show that when the arity of accesses is at most binary, the complexity
does reduce.

\begin{theorem} \label{thm:pspupper}
Suppose that $Q$ is a connected positive 
Boolean query and uses only relations of arity
at most 2.  Suppose also that we have only dependent accesses.
Let $\acfg$
be a configuration and $a_0$ a Boolean access.

Then determining whether $a_0$ is $\lti$ for $Q$ in $\acfg$ can be done in $\pspace$.
As a consequence,
containment of $Q$ and $Q'$ with respect to access constraints
in this case is in $\pspace$.
\end{theorem}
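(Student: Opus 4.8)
The plan is to turn the relevance question into a search for a witness \emph{model} and to exploit the fact that, for relations of arity at most two, a tree-like counterexample is essentially one-dimensional. First I would pass from relevance to containment: by Proposition~\ref{prop:reltoconpos}, long-term relevance of a Boolean access for a Boolean positive query reduces to the complement of config-containment of positive queries, and that reduction only introduces relations of arity at most two and preserves connectedness up to taking connected components; conversely Proposition~\ref{prop:containment-to-relevance} lets me transport the resulting $\pspace$ bound back. So it suffices to decide, in $\pspace$, whether there is a configuration reachable from $\acfg$ under $\acs$ in which $Q$ holds but $Q'$ does not. Following Calì and Martinenghi~\cite{cali2008conjunctive}, I may assume any such witness is tree-like (a crayfish chase), so that every fresh element is the output of exactly one access and the input of at most one access.

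Next I would read off the shape of such a witness. Since every relation is at most binary, a non-Boolean access creates a single fresh output from its input, and the ``input at most once'' condition forbids any fresh element from having two children; hence the fresh part of the model is a disjoint union of simple paths, each hanging from a constant of $\acfg$, with branching occurring only at configuration elements. The crucial consequence of connectedness is \emph{locality}: because $Q$ and $Q'$ are connected, the homomorphic image of either is connected in the Gaifman graph, and a connected subset of a path is a contiguous interval. Thus every homomorphic image of $Q'$ (resp.\ $Q$) lies inside a window of at most $\max(|Q|,|Q'|)$ consecutive fresh elements of a single path, or inside a star of that radius around a single configuration constant (the only place several paths, and any additional Boolean edges into constants, can meet). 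In particular ``$Q'$ fails'' becomes a \emph{local} property---no such window and no such star admits a homomorphism of $Q'$---while ``$Q$ holds'' can be certified at one guessed window or star.

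I would then decide non-containment by a reachability search over polynomially-sized states, relying on $\complexity{NPSPACE}=\pspace$ to cope with witnesses whose length is exponential. A state records the configuration constant from which the path under construction hangs, a sliding window holding the last $\max(|Q|,|Q'|)$ fresh elements of that path, and a bounded summary of the polynomially many constants of $\acfg$. A transition performs one access; the key point is that, by tree-likeness, a dependent access never needs an input drawn from the unbounded active domain: to extend the current path it uses the frontier element of the window, and to start a new path it uses a configuration constant, both of which are stored. As each window scrolls out of the buffer I verify the local constraint that $Q'$ does not map into it, and I nondeterministically guess the single window (or configuration star) where $Q$ is realized and check it there. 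Each state is of polynomial size and each transition, together with the fixed-size homomorphism tests for $Q$ and $Q'$, is checkable in polynomial space, so reachability of a state certifying ``$Q$ realized and $Q'$ everywhere absent'' is decidable in $\pspace$.

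The main obstacle I expect is the configuration itself, the one place where the model branches and where images of $Q$ and $Q'$ can fuse several paths. The remedy is that only the first $\max(|Q|,|Q'|)$ elements of each incident path matter for any match anchored at a constant, so each constant contributes only a bounded neighbourhood: I would guess, for the $Q$-witness, the finitely many path-prefixes it uses and verify each on a separate branch of the search, and I would discharge the config-local $\lnot Q'$ obligations against these bounded prefixes, so that the state never has to remember a full path or the set of all fragments a path realizes. Having thereby placed non-containment in $\complexity{NPSPACE}$, hence containment in $\pspace$, and transporting this through the reductions of Section~\ref{sec:containment}, I obtain the $\pspace$ bound for $\lti$ as well, which completes the argument.
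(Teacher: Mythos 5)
Your proposal takes essentially the same route as the paper's proof: both exploit tree-likeness plus binary arity to reduce the witness to a bounded number of linear support chains, use connectedness to localize homomorphism images to polynomial-size windows around chain endpoints and configuration constants, and then decide the question by (nondeterministic) reachability over polynomial-size states. The only notable difference is one of direction -- you decide containment directly and transfer to relevance via Proposition~\ref{prop:reltoconpos}, whereas the paper builds the state automaton for relevance itself (states recording a homomorphism of $Q$ together with the first and last $|Q|$ accesses of each chain, plus a truncation-safety check) and derives the containment bound via Proposition~\ref{prop:containment-to-relevance}; both directions work since \pspace{} is closed under complement.
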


\begin{proof}
In brief, we reduce the search for a witness path
to exploration of a product graph, representing
the set of types of nodes that we can reach via the access methods.
The type of a node is determined by the local neighborhood of access before
and after, which can be represented in polynomial space.
The result on containment is derived using
Proposition~\ref{prop:containment-to-relevance}.

As usual, a witness path $p$ to $\lti$ is a sequence
of accesses, beginning with the distinguished access $a_0$, such
that $Q$ is embedded into the path by a homomorphism $h$,
 but is not embedded in the
truncation of $p$. Clearly,
we can assume that $p$ is formed by taking all accesses in $h(Q)$
and then recursively throwing in, for each element $n \in h(Q)$ not
in the initial configuration $\acfg$,
the access $a^-_n$ that returned $n$, the access that returned the
input to that $a^-_n$, and so forth. For $n \in h(Q)$ not
in the initial configuration let $\chain(n)$
be the result of this process, which terminates in either an element
of the initial configuration or another element of $h(Q)$.
For two such elements of $h(Q)-\acfg$,
$n_1$ and $n_2$, say $n_1$ $\succ$ $n_2$ if $n_1$ is the first element of 
$\chain(n_2)$, and let $\preceq$ be the transitive closure of $\succ$.
Then we can re-arrange $p$ to consist of:
\[
a_0, \chain(n_1) \ldots \chain(n_k), b_1 \ldots b_l
\]
where $a_0$ is the initial access, the ordering
of the elements $n_i$ for $i \leq k$ is a linearization of
$\preceq$, and $b_1 \ldots b_l$ are additional accesses that contribute
to $h(Q)$ but do not produce new elements.

Thus a witness path can be visualized as consisting of at most $|Q|$
 linear chains, plus
at most $|Q|$ additional facts that do not introduce new elements.

Consider an automaton whose states 
are pairs $(h,f)$ where:
\begin{compactitem}
\item $h$ is a homomorphism of $Q$ into the initial configuration
$\acfg$ plus some new elements $n_i =h(x_i)$ for $x_i$ variables of $Q$;
\item $f$ is a function from $\{1,2\} \times \set{0\ldots|var(Q)|} \times var(Q)$ to
accesses, with some inputs of the accesses
marked with elements in the image of $h$.
\end{compactitem}

Informally, a state represent a description of the ``final facts''
in a witness path (those that did not introduce new elements),
plus the $|Q|$ last few and first few elements
in each of the chains.

A path $p$ and a homomorphism $h'$ of $Q$ into elements of the
path satisfies a given state $(h,f)$ if
$p$ is of the form \[a_0, \chain(n'_1) \ldots \chain(n'_j)\]
for $j \leq |Q|$, where:
\begin{compactitem}
\item $n'_i=h(x_i)$;
\item the first and the last $|Q|$ elements in each chain are
consistent with the function $f$, in that the access method
of the $j^{th}$ element after $h'(x_i)$ matches the access method
in $f(1, j, x_i)$ and has input $h'(x_i)$, iff the access $f(1,j,x_i)$ has input
$h(x_i)$;
\item  similarly for the $j^{th}$ element before
$h'(x_i)$ (with respect to $f(2,j,x_i)$).
\end{compactitem}

Given a state and
an access (with identification of input with homomorphism image),
we can determine the next state, since the new access only
impacts the final chain.

We can also determine, for
each state, if it is \emph{truncation-safe for $Q$}:
that is, whether a path $p$ and homomorphism $h'$ in this state
has the property that the path:
\[
p'= \mathrm{truncation}(p+\text{additional facts in $h'(Q)$
not witnessed in $p$})
\]
does not satisfy $Q$.

 In particular,
we claim that this is true for one $p,h'$ satisfying the state iff
it is true for all.

We explain how to check truncation safety.
If  $p'$ did satisfy $Q$, then there would be a homomorphism
into it, and since $Q$ is connected, the image
would have to lie either in
the initial configuration or within the area
of the chain within $|Q|$ places of the $n_i$; we can check
whether this is possible using the information in the state.

We do reachability in this automaton, looking for a path $s_1
\ldots s_j$ through
the automaton in which 
\begin{enumerate} 
\item each state $s_i$ traversed in the path
has the property that it is truncation safe
for $Q$;
\item the path starts with the state corresponding to the empty sequence of
accesses; 
\item the final state contained within the path has an access that
supports every $h(x_i)$.
\end{enumerate}

Since each state has polynomial size, we can explore
it in $\pspace$.
\end{proof}

We now show that the problem is \pspace-hard when the  arity is at most $3$.
We do not know if this can be improved to match the upper bound above.

\begin{proposition} \label{prop:psphard3}
Determining whether $a_0$ is $\lti$ for $Q$ in $\acfg$ is $\pspace$-hard even
when $Q$ is conjunctive and relations have arity at most $3$.
\end{proposition}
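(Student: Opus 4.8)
The plan is to reduce from a $\pspace$-complete tiling problem: tiling a corridor whose width $w$ is polynomial in the input and given in unary, with a fixed bottom row, horizontal and vertical constraints, and a required top row (see Section~3.2 of~\cite{johnson1990catalog}). As in Theorem~\ref{thm:conj-nexptime-hard} I would make a witness access path \emph{spell out} the corridor cell by cell in row-major order, each new cell generated by a dependent access from the previous one. The essential difference with Theorem~\ref{thm:conj-nexptime-hard} -- and the reason the arity collapses from $2n+3$ down to $3$ -- is that, the width being polynomial and unary, a column address is now a \emph{single} constant drawn from a $w$-element domain rather than an $n$-bit vector stored inside a tile tuple. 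Thus a cell is a ternary access atom $\tile(\mathit{col}, x, x')$, where $\mathit{col}$ is the column constant and $x,x'$ are the linking values chaining a cell to its successor, and the tile type (symmetrically, the column) is encoded in the \emph{choice of access method}, one dependent method per type, so it costs no attribute. The corridor may be exponentially long, so the witness path has exponential length; this is harmless, exactly as the exponential paths of Theorem~\ref{thm:conj-nexptime-hard} were.

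The core of the construction is to force any satisfying path to be a genuine corridor, and then to fold all the ways a tiling can be \emph{wrong} into a single conjunctive query. For the first point I would reuse the unique-successor / functional-dependency trick of Theorem~\ref{thm:conj-nexptime-hard}: by making a cell's linking value functionally determine its column (via the configuration and a $\tile$-on-$\tile$ comparison), the only seed that cannot be re-generated as an output is the one produced by $a_0$, so the chain is pinned down and cannot be short-circuited by a ``sibling'' cell. For the second point I would, exactly as in the $\BOOLCONS$ part of Theorem~\ref{thm:conj-nexptime-hard}, keep a family of arity-$3$ truth-table relations $\rand$, $\ror$, $\eq$, together with arity-$3$ tables for the horizontal and vertical tile relations, none with an access method; these compute, purely conjunctively, a Boolean bit for each kind of defect (a violated horizontal or vertical constraint, a wrong boundary cell, a broken column successor) and $\ror$ them together. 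Relating a cell to its \emph{vertical} neighbour, which sits $w$ chain-steps away, is done by unrolling a width-$w$ window of the linking variables inside the query; since $w$ is polynomial this keeps the query polynomial while every relation stays of arity at most $3$.

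Because ``the tiling is correct'' is a \emph{universal} statement (no defect anywhere), the object to produce is the defect condition, asked to be \emph{false} -- the same shape as the containment results. I would therefore arrange that $a_0$ generates the seed cell, that the truncated path (not beginning with $a_0$) can never start the corridor, and that the constructed conjunctive $Q$ -- built in the manner of Proposition~\ref{prop:containment-to-relevance} so that it becomes certain exactly when some defect bit is cleared \emph{and} the top row is reached -- is certain along the full path but not along its truncation precisely when a defect-free corridor exists, i.e.\ when the instance is tileable. The main obstacle I anticipate is precisely the arity bookkeeping: one cannot simply compose $\pspace$-hardness of containment with Proposition~\ref{prop:containment-to-relevance}, since that reduction adds an extra attribute to every relation and would push the bound to $4$; the relevance reduction must be carried out directly, with the Boolean-place encoding threaded through the auxiliary truth-table relations rather than through new attributes of $\tile$, so that the disjunction over defect kinds and the vertical-neighbour lookup together consume only three attributes. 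This is also why the arity-$2$ upper bound of Theorem~\ref{thm:pspupper} is not obviously matched: arity $3$ is what first lets an access bind a cell to enough neighbouring structure to lay out a two-dimensional grid.
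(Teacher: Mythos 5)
Your overall strategy is the paper's: reduce from tiling a polynomial-width corridor, spell the corridor out cell by cell as a chain of dependent accesses linked by identifier values, fold the ``some defect exists'' disjunction into a conjunction via fixed arity-$3$ truth-table relations, and reach a cell's vertical neighbour by unrolling a polynomial-length window of the chain. Your closing observation is also well taken: the paper's own write-up argues non-containment of two queries and is silent on the fact that composing with the CQ case of Proposition~\ref{prop:containment-to-relevance} would add an extra place to every relation and push the truth tables to arity $4$, so the relevance reduction does need to be threaded through with some care about which relations absorb the extra Boolean place.

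However, one step in your encoding fails as described: you put the tile type (or, ``symmetrically,'' the column) into the \emph{choice of access method} on a single ternary relation $\tile(\mathit{col},x,x')$, ``so it costs no attribute.'' But an access method leaves no trace in the data: the configuration reached after an access is just a set of tuples of the accessed relation, and a conjunctive query evaluated on that configuration cannot distinguish a fact produced by the ``type~$i$'' method from one produced by the ``type~$j$'' method. The horizontal- and vertical-constraint checks and the initial/final-row checks all need to read the tile type off the instance, so whichever coordinate you demote to the access method becomes invisible exactly where you need it. The paper's construction avoids this by indexing the \emph{relation symbol} by both coordinates, using one binary relation $C_{i,j}(x,y)$ per (type, column) pair, each with a single access method on its first attribute; the polynomially many relation names then carry both the type and the column, the data relations stay binary, and only the fixed $\rand$, $\ror$, $\eq$ tables are ternary. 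With that substitution (and with the paper's ``non-unique tile'' disjuncts playing the role of your functional-dependency seed-pinning trick), your argument goes through.
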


\begin{proof}
We reduce from the tiling problem for a corridor of width $n$, using
$r$ tiles, horizontal constraints $H$, vertical constraints $V$, and
initial and final tile types $i_1\dots i_n$, $f_1\dots f_n$.
We first explain how to construct queries with disjunction
for which containment is equivalent to the existence of the tiling and
discuss next how to encode this disjunction in CQs.

For all $1\leq i\leq r$ and $1\leq j\leq n$,
we define a binary predicate $C_{i,j}$
which will stands for all tiles of type $i$
at x-coordinate $j$ in the tiling: the
first attribute is the identifier of the previous tile in the
column-by-column, row-by-row, progression
and the second attribute is an identifier of the current tile. Each
of these relations has a single access methods, with input its first
attribute. All attributes share the same abstract domain.

Our first query $Q_1$ expresses that something is wrong with the tiling.
We define it as
the existential closure of a disjunction with disjuncts as follows:
\begin{description}
\item[Non-Unique Tile]~\\
$C_{i,k}(x,y) \wedge C_{j,l}(x,w)$ whenever $i \neq j$ or $k \neq
l$\\
$C_{i,k}(x,y) \wedge C_{j,l}(w,y)$ whenever $i \neq j$ or $k \neq l$;
\item[Bad Column-to-Column Progress]~\\
$C_{i,m}(x,y) \wedge C_{k,m'}(y,z)$  for $i,k \leq r, m<n, m'\neq
m+1$;
\item[Bad Row-to-Row Progress]~\\
$C_{i,n}(x,y) \wedge C_{k,m'}(y,z)$  for $i,k \leq r, m'\neq 1$;
\item[Horizontal Constraint Violations]~\\
$C_{i,m}(x,y) \wedge C_{j,m+1}(y,z)$ for $m<n$ and $i,j \not \in H$;
\item[Vertical Constraint Violations]~\\
\[\begin{split}
&C_{i,m}(x,y_1) \wedge
\bigvee_{k \leq r} C_{k,m+1}(y_1, y_2) \wedge \ldots \wedge
\\&\bigvee_{i\leq r} 
C_{i,m-1}(y_{n-1}, y_n) \wedge C_{j,m}(y_n,z) \text{ for $m \leq n, i,j
\not \in V$}\end{split}\] (with the convention that $m-1=n$ if
$m=1$).
\end{description}

$Q_2$ asserts the existence of the final row of the tiling, with
the existential closure of:
\[
C_{f_1,1}(y_0,y_1) \wedge
C_{f_2,2}(y_1, y_2) \wedge \ldots \wedge C_{f_n,n}(y_{n-1},y_n).
\]

The initial configuration $\acfg$ will consist
of facts 
\[ C_{i_1,1}(c_0,c_1) \ldots C_{i_n,n}(c_{n-1},c_n)
\]
where the $c_i$ are some distinct constants.

Suppose there is a tiling. Then we create an instance
where the domain is the cells of the tiling.
We populate predicates $C_{i,m}(x,y)$ when $x$ is the $(m-1)$th position
in some row, $y$ is the $m$th  position in the same row, for $m<n$, and
$y$ is tiled with tile type $i$; we do this for every row of the tiling other than the first.
We also populate predicates $C_{i,n}(x,y)$ when $x$ is the $n$th position
in a row, $y$ is the first  position in the next row,
and $y$ is tiled with $C_i$.
We consider an access path that navigates
the tiling starting at  the end of the first row, culminating in accesses to the final row. At the end
of this path, $Q_2$ holds, while
$Q_1$ does not hold, since any of the disjuncts holding  would
violate one of the properties of the tiling.

Conversely, suppose that $Q_1\sqsubseteq_{\acs,\acfg} Q_2$. Let
$p$ be a witness path leading to some configuration $\acfg'$.
$\acfg'$ does not satisfy $Q_1$ but satisfies $Q_2$.
It must thus have witnesses $y_0,y_1 \ldots y_n$ such that
$C_{m_1,1}(y_0,y_1) \wedge
C_{m_2,2}(y_1, y_2) \wedge \ldots\wedge C_{m_n,n}(y_{n-1},y_n)$ holds. 

First suppose that $y_0$ is in the original configuration.
If it were then since $\acfg'$ does not satisfy $Q_1$,
it can not satisfy (Non-Unique Tile), and hence
we must have $y_0=c_1$. But then by (Bad Column-to-Column Progress)
we would have that $m_1, m_2 \ldots m_n$ are also the initial
tiles, which gives a tiling.

Now suppose that $y_0$ is not in the original configuration. We
will extend the final row of tiles $(m_1 \ldots m_n)$ backward
by one tile, forming a larger partial tiling.

Since the path is well-formed, $y_0$ must have a support.
 Using the fact that $I$ does not satisfy $Q_1$, and
hence cannot satisfy (Bad Row-to-Row Progress),
this must be of the form $C_{i,n}(z_n,y_0)$ for some $z_n$ and $i$.
Using the (Vertical Constraint Violation) axiom, we see that
$(i,m_n)$ must be in the Vertical constraint. We place $i$ as the last tile
in the second-to-last row.

Reasoning similarly, we see that tracing back through the accesses
to supports gives us a traversal through a tiling: at any point
we have a partial tiling ending with the final row $r$
and beginning with an access on some element  $x$ which had been placed
in column $i$ of the tiling. If $x$ is in the initial
configuration, then we can argue that it must be $c_i$, and that adding
$c_1 \ldots c_{i-1}$ before $x$ in the row completes a tiling.
If $x$ is not in the initial configuration, then its support must
be consistent with both the horizontal constraint and the vertical constraint,
and hence we can continue.

We cannot continue indefinitely without reaching the initial configuration,
since at each step that we continue we add a fresh element from the path to the tiling.
Hence eventually we must complete the tiling with the initial row.

We now comment on how union can be eliminated, by adding to the arity.
The technique is the same as the one used in the proof of
\conexptime-hardness of containment in general
(Theorem~\ref{thm:conj-nexptime-hard}): we encode the disjunctive
constraints of $Q_1$ as a conjunction of $\mathit{And}$, $\mathit{Or}$,
$\mathit{Eq}$ atoms with Boolean constraints, together with atoms witnessing
two (non-necessary distinct) lines of the tiling.
The new predicates have arity 3.
\end{proof}

\section{Related Work}
\label{sec:related}

We 
overview  the existing literature on answering
queries in the presence of limited access patterns, highlighting
the differentiators of our approach using the vocabulary of
Section~\ref{sec:prelim}. The
problem of querying under access restrictions
was  originally  motivated by  access
to built-in-predicates with infinitely many tuples (such as~$<$) --  which is only
reasonable if all variables are bound -- and by the  desire
 to access
relations only on their indexed attributes (see
\cite{ullman1989principles}, chapter~12). More recently, the rise of
data-integration systems with sources whose content are accessible
through interfaces with limited
capabilities~\cite{rajaraman1995answering} has been the main
driver of  interest in the subject; the most well-known
example is the querying of deep Web
sources~\cite{he2007accessing} accessible through
Web forms. 
Research efforts on deep Web exploration~\cite{he2007accessing} and
on the use of Web services to complement extensional knowledge
bases~\cite{PredaKSNYW10} are practical settings where the
notion of dynamic relevance of a Web source is fundamental.

With a few exceptions that will be noted, most
of the existing work focuses on static analysis 
for dependent accesses. By static we mean that they seek a means to answer the query
\emph{ab initio} that does not consider the configuration or adapt to it.
By dependent we mean, as in the second part of this paper, that it is
impossible to guess a constant to be used in a bound access.
Typically accesses are
also supposed to be exact and not merely sound. This last
limitation is unrealistic in the case of deep Web sources, where a given
source
will often have only partial knowledge over some
data collection.
In contrast, our results
give new bounds on static problems, but also
consider dynamic relevance. We allow a collection
of sound accesses that can be dependent or independent.
Queries considered in the literature
are usually conjunctive, but work on query answerability
and rewriting has also considered richer query
languages, such as unions of conjunctive queries
(UCQs)~\cite{li2003computing}, UCQs with
negations~\cite{nash2004edbt,deutsch2007rewriting}, or even first-order
logic~\cite{nash2004pods}. In a few
cases~\cite{rajaraman1995answering,duschka1997recursive,deutsch2007rewriting},
existing work assumes that both queries and sources are expressed as
\emph{views} over a global schema, and the limited access problem is
combined to that of answering queries using
views~\cite{answering2001halevy}. In the other cases, the query is
supposed to be directly expressed in terms of the source relations, as we
have done in this work.

\paragraph*{Static analysis} The first study of query answering when
sources have limited patterns is by Rajamaran, Sagiv, and
Ullman~\cite{rajaraman1995answering}. Given a conjunctive query over a
global schema and a set of views over the same schema, with exact
dependent access patterns, they show that determining whether there
exists a conjunctive query plan over the views that is equivalent to the
original query and respects the access patterns is
\complexity{NP}-complete. This is based on the observation that the size
of a query plan can be bounded by the size of the query: one can just
keep in the plan subgoals that either are mapped to one of the subgoal of
the query, or provide an initial binding for one of the variables of the
queries.

Duschka, Genesereth, and Levy study in~\cite{duschka1997recursive} the
general problem of answering queries using views. They solve this by
constructing a Datalog query plan formed of \emph{inverse rules} obtained
from the source descriptions, a plan computing the
maximally contained answer to a query. Although
this approach was geared towards data integration without limitations,
the same work 
extends it to incorporate limited access patterns on sources in
a straightforward manner.

Li and Chang~\cite{li2001answering} propose a static query planning
framework based on Datalog for getting the maximally contained answer to
a query with exact dependent access pattern; the query language
considered is a proper subset of UCQs, with only
natural joins allowed. In the follow-up 
work~\cite{li2003computing}, the query language is lifted to UCQs,
and Li shows that testing the existence of an exact
query rewriting is \complexity{NP}-complete, which can be seen as an
extension of the result from~\cite{rajaraman1995answering}, from CQs to
UCQs (though views are not considered in~\cite{li2003computing}). That
article also proposes a dynamic approach when an exact query rewriting
does not exist, which we discuss further.

Nash and Lud\"ascher \cite{nash2004edbt} extend the results
of~\cite{li2003computing} to the case of UCQs
with negations. The complexity of the exact rewriting problem is
reduced to standard query containment and thus becomes in
this case \pitp-complete. Other query languages are also considered
in~\cite{nash2004pods}, up to first-order-logic, for which the rewriting problem
becomes undecidable. Deutsch, Ludäscher, and Nash add
in~\cite{deutsch2007rewriting} views and constraints (in the form of
weakly acyclic tuple-generating dependencies) to the setting of dependent
exact patterns. Using the chase procedure, they show that the exact
rewriting problem remains \pitp-complete in the presence of a large
class of integrity constraints,  and they provide algorithms for
obtaining both the maximally contained answer and the minimal
containing static plan.

Finally, still in the case of dependent exact accesses and for
conjunctive queries, Calì and Martinenghi \cite{cali2008querying} build
on the query planning framework of~\cite{li2001answering} and show how to
obtain a query plan for maximally contained answer that is minimal in
terms of the number of accesses made to the sources.

\paragraph*{Dynamic computation of maximal answers}
Some works referenced in the previous
paragraphs also consider dynamic, runtime, aspects of the problem, i.e.,
taking into account the current configuration. Thus,
\cite{li2003computing} provides an algorithm that finds the complete
answer to a query under dependent exact accesses whenever possible, even
if an exact query rewriting plan cannot be obtained. This is based on a
recursive, exhaustive, enumeration of all constants that can be retrieved
from sources, using the techniques of the inverse rule
algorithm~\cite{duschka1997recursive}. The algorithm has no optimality
guarantee, since no check is made for the relevance of an access to the
query, for any notion of relevance. An extension to UCQs with negation is
proposed in~\cite{nash2004edbt}, with a very similar approach.

\paragraph*{Dynamic relevance}
To our knowledge, the only work to consider the dynamic relevance of a
set of accesses with binding patterns is by Calì, Calvanese, and
Martinenghi~\cite{cali2009dynamic}. They define an access with a binding
as \emph{dynamically relevant} under a set of constraints (functional
dependencies and a very restricted version of inclusion dependencies) for
a given configuration if this access can produce new tuples. They show that
dynamic relevance can be decided in polynomial time. Note that the fact
that a source has limited access patterns does not play any role
here since one
only considers a given binding and disregards all other accesses.
Furthermore, there is no query involved.

\paragraph*{Related work on query containment} We want to conclude this section by
mentioning a few other works that are not dealing with answering queries
under binding patterns \emph{per se} but are still pertinent to the problem
studied in this paper. We have already compared in detail in
Section~\ref{sec:containment} our work with the
complexity analysis~\cite{cali2008conjunctive}
of query containment under access limitations -- in brief, our
results generalize the upper bounds to a richer model, provide
matching lower bounds, and give bounds for larger collections of queries.
However the
arguments used in proofs of our upper bound results of this paper rely
heavily on the crayfish chase procedure
described in this work.

Chaudhuri and Vardi \cite{cv94, cv97} consider the problem of containment of
Datalog queries in unions of conjunctive queries -- it is
easily seen (e.g. from the Datalog-based approaches to limited
access patterns mentioned earlier) that
this problem subsumes
containment under access patterns. Indeed, containment
under access patterns is subsumed by  containment of Monadic
Datalog in UCQs, a problem shown by \cite{cv94} to be in $\conexptime$
in special cases (e.g. connected queries without constants): the 
upper-bounds rely on the ability to make models tree-like, as 
ours do.
In the case of binary accesses, containment under access limitations
can be reduced to containment of a \emph{path query} in a conjunctive
query. This problem is studied in \cite{florescupath, calvanesepath}
which together give an $\expspace$-bound for containment of path queries --
as our results show, these  results give neither tight bounds for the 
binary case or the best lower bound for the general case.

\paragraph*{Other related work}
Finally, Abiteboul, Bourhis, and
Marinoiu~\cite{abiteboul2009satisfiability} consider dynamic relevance of
a service call to a query in the framework of ActiveXML (XML documents
with service calls). A service call is dynamically relevant if it can
produce new parts of the tree that will eventually change the query
result. They show in particular that non-relevance is in \sigmatp{} (an
unpublished extension of their work shows \sigmatp-hardness also by
reduction from the critical tuple problem~\cite{miklausuciu}).
Though the framework is different, their notion of relevance is
close in spirit to our notion of long-term relevance.

\section{Conclusion}

We investigated here the problems of 
analyzing the access paths that can originate
from a particular data access. When accesses are not tightly coupled,
the problem is closely-related to  reasoning whether a tuple is ``critical''
to a given query result. Here we have fairly tight complexity bounds, although
admittedly no algorithms that are promising from a practical perspective as yet.

In the setting where accesses are dependent on one another,  we have
shown a tight connection between relevance problems and containment under
access limitations for Boolean accesses. We have shown new bounds for both the relevance and
containment problems for conjunctive and positive queries. However, there
are still many open issues regarding complexity. For low arity
we do not have tight bounds on containment or
relevance. For arbitrary arity we have tight bounds for CQs and for
positive queries, but we do not know if our lower bounds for positive
queries also hold for positive queries of restricted forms (e.g., UCQs).
We believe that all of our results for containment can be extended 
to relevance of non-Boolean accesses, using the same proofs, but we leave 
this for future work.

Of course, this work is a small step in understanding the possible
paths from a database configuration that obey a given set of
semantic restrictions. We believe the techniques applied here
can be used  to determine the complexity of dynamic
relevance and containment under access restrictions in
the presence of integrity constraints and views.
We are also studying the impact of paths on not just
the certain answers, but also on consistent answers.
This would be especially important if access methods were assumed to be
exact and not merely sound as here.

\section{Acknowledgments}
M.~Benedikt is
supported in part by EP/G004021/1
and EP/H017690/1 of the Engineering and Physical Sciences Research Council
UK
and in part by EC FP7-ICT-233599.
G.~Gottlob's and P.~Senellart's research leading to these results has
received funding
from the European Research Council under the European Community's
Seventh Framework Programme (FP7/2007-2013) / ERC grant agreements
246858 \textit{DIADEM} and 226513 \textit{Webdam}, respectively.
We thank Pierre Bourhis for great assistance with an earlier draft of the
paper.

\bibliographystyle{abbrv}
\bibliography{limited}

\end{document}